\documentclass[runningheads,a4paper]{llncs}

\setcounter{tocdepth}{3}
\usepackage{graphicx}
\usepackage{url}

\usepackage{footmisc}
\usepackage{microtype}%if unwanted, comment out or use option "draft"
\usepackage{booktabs} % For formal tables
\usepackage{xspace}
\usepackage{color}
\usepackage{enumitem}
\usepackage{times}
\usepackage{appendix}
\usepackage{amsmath,amssymb,latexsym} 
\usepackage{algcompatible}
\usepackage{algorithm}
\usepackage{algpseudocode}
\algrenewcommand\alglinenumber[1]{\tiny #1:}
\usepackage{listings}
\usepackage{verbatim}
\usepackage{multicol}
\usepackage{float}
\usepackage{pifont}
\usepackage{lineno}
\usepackage{array}
\usepackage{soul}
\usepackage{mathtools}
\usepackage{hyphenat}
\usepackage{setspace}
\usepackage{tcolorbox}
\usepackage{cite}
\usepackage{wrapfig}
\usepackage{lipsum}
\usepackage{layout}
\usepackage[english]{babel}
\usepackage[utf8]{inputenc}
\usepackage{fancyhdr}
\usepackage{tabularx}
\usepackage{caption}
\usepackage{ragged2e}
\usepackage{hyperref}
\usepackage{cleveref}
\usepackage{fancyvrb}
\usepackage{epstopdf}
\usepackage{epsfig}
\pagenumbering{arabic}
\usepackage{etoolbox}
\usepackage{breqn}
\usepackage{relsize}
\usepackage{pgfplots}
\usepackage{subfiles}
\usepackage{authblk}
\usepackage{pdfpages}
%\graphicspath{figs/}
\pgfplotsset{compat=1.7}
\usepackage{calligra}
\usepackage{calrsfs}
\usepackage{subfig}
\usepackage{todonotes}

%\makeatletter
%\let\@fnsymbol\@arabic
%\makeatother

% comment a region

\newcommand{\punt}[1]{}
\newcommand{\cmnt}[1]{}

% a word should not be broken across lines

% no hyphenation

% ~ character

% useful mathematical symbols

%%%%%%%%%%%%%%%%%%%%%%%%%%%%%%%%%%%%%%%%%%%%%%%%%%%%%%%%%%%%%%%%%%%%%%%%%%%%%%%%%%%%%%%%%%%%%%

% keywords for pseudocode

%% various theorem environments
%% Commented out because they are already defined in llncs file %%
%% Decommented out because they are not defined in sigplan class %%

%\newtheorem{theorem}{Theorem}
%\renewcommand{\thetheorem}{\arabic{theorem}}

%\newtheorem{lemma}[theorem]{Lemma}
%\newtheorem{corollary}[theorem]{Corollary}
%\newtheorem{proposition}[theorem]{Proposition}
%\newtheorem{property}[theorem]{Property}

%\newtheorem{definition}{Definition}

\newcounter{history}

%\newcommand{\proof}[1][]{\noindent{\bf\boldmath
 %                         P\hspace{-0.25ex}roof{#1}\/:\unboldmath}\hspace*{0.5em}}

%%%%%%%%%%%%%%%%%%%%%%%%%%%%%%%%%%%%%%%%%%%%%%%%%%%%%%%%%%%%%%%%%%%%%%%%%%%%%%%%%%%%%%%%%%%%%%
%-----------------Theorem Notations Defined by Petr --------------------------------------------------------------------------------------------------------
%% Defined by Petr

%\newtheorem{acknowledgement}[theorem]{Acknowledgement}
%\newtheorem{observation}[theorem]{Observation}

%------------------------------------------------------------------------------------------------------------------------------

% various references

\newcommand{\secref}[1]{Section~\ref{sec:#1}}
\newcommand{\figref}[1]{Figure~\ref{fig:#1}}

\newcommand{\thmref}[1]{Theorem~\ref{thm:#1}}
\newcommand{\lemref}[1]{Lemma~\ref{lem:#1}}

\newcommand{\defref}[1]{Definition~\ref{def:#1}}

\newcommand{\propref}[1]{Property~\ref{prop:#1}}

\newcommand{\linref}[1]{Line~\ref{lin:#1}}
\newcommand{\algoref}[1]{{Algo~\ref{algo:#1}}}

\newcommand{\Lineref}[1]{Line~\ref{lin:#1}}

% Renews the footnote command

\newcommand{\ignore}[1]{}

%------------------------------------------------------------------------------------------------------------------
% Definitions for Object STM paper
%------------------------------------------------------------------------------------------------------------------
%

\newcommand{\tobj} {t-object\xspace}
\newcommand{\txns}[1] {txns(#1)}
\newcommand {\comm}[1] {committed(#1)}
\newcommand {\aborted}[1] {aborted(#1)}

\newcommand {\live}[1] {live(#1)}
\newcommand {\term}[1] {term(#1)}

\newcommand{\tseq} {t-sequential\xspace}

\newcommand{\lastw} {lastWrite}

\newcommand{\lupdt}[2] {#2.lastUpdt(#1)}

\newcommand{\mr} {MR}
\newcommand{\tr} {TR}

\newcommand{\validity} {validity\xspace}
\newcommand{\legal} {legal\xspace}
\newcommand{\legality} {legality\xspace}

\newcommand{\op} {operation\xspace}
\newcommand{\mth} {method\xspace}

\newcommand{\cc} {correctness-criterion\xspace}
\newcommand{\ccs} {correctness-criteria\xspace}

\newcommand{\inv} {$inv$}
\newcommand{\rsp} {$rsp$}

\newcommand{\evts}[1] {evts(#1)}
\newcommand{\met}[1] {methods(#1)}

\newcommand{\init} {\emph{init}\xspace}
\newcommand{\tbeg} {\emph{t\_begin}\xspace}
\newcommand{\tread} {\emph{t\_read}\xspace}
\newcommand{\twrite} {\emph{t\_write}\xspace}
\newcommand{\tins} {\emph{t\_insert}\xspace}
\newcommand{\tdel} {\emph{t\_delete}\xspace}
\newcommand{\tlook} {\emph{t\_lookup}\xspace}
\newcommand{\tryc} {\emph{tryC}\xspace}%{\emph{tryC}}
\newcommand{\trya} {\emph{tryA}\xspace}%{\emph{tryA}}

\newcommand{\dell} {\emph{list\_del}}

\newcommand{\up} {\emph{up}}

\newcommand{\opq} {opaque\xspace}
\newcommand{\opty} {opacity\xspace}

\newcommand{\tab} {hash-table\xspace}

\newcommand{\otm} {\textit{OSTM}\xspace}%{ObjSTM}
\newcommand{\sotm} {\textit{SV-OSTM}\xspace}%{ObjSTM}

\newcommand{\rwtm} {\textit{RWSTM}\xspace}

\newcommand{\lsl} {lazyrb\text{-}list\xspace}
\newcommand{\lazy} {lazy-list\xspace}

\newcommand{\rvm} {\emph{rvm}\xspace}

\newcommand{\fevt}[1] {#1.firstEvt}
\newcommand{\levt}[1] {#1.lastEvt}

\newcommand{\fkmth}[3] {#3.firstKeyMth(#1, #2)}
\newcommand{\pkmth}[3] {#3.prevKeyMth(#1, #2)}

%\newcommand{\bst}{BST}

%\algdef{SE}[DOWHILE]{Do}{doWhile}{\algorithmicdo}[1]{\algorithmicwhile\ #1}%

\newcommand\tabspace[1][1cm]{\hspace*{#1}}

%%%%%%% DS macros
%\newcommand{\glslhead}{getLslHead($obj\_id \downarrow, key \downarrow$)}

%\newcommand{\txlfind} {\textup{translog.find\_in\_loclog}$(obj\_id \downarrow, key \downarrow)$}

%\newcommand{\txgllist} {txlog.getLlList($t_{i} \downarrow$)}
\newcommand{\txsetst}[1] {L\_txlog.setStatus($L\_txstatus \downarrow$, $ OK \downarrow$)}

\newcommand{\lsldel} {lslDel($preds[] \downarrow$, $currs[] \downarrow$)}

%\newcommand{\rlsol} {releaseOrderedLocks($ordered\_ll\_list \downarrow$)}

%\newcommand{\handlea}{tryAbort($obj\_id \downarrow$)}

%\newcommand{\llsort}{ll.sort ($ll\_list \downarrow$)}

%%---------------------macro w/o parameter

\newcommand{\npintv} {\emph{intraTransValidation()}}
\newcommand{\nptc} {\emph{STM tryC()}}
\newcommand{\npins} {\emph{STM insert()}}
\newcommand{\npdel} {\emph{STM delete()}}
\newcommand{\npluk} {\emph{STM lookup()}}
\newcommand{\nplsls} {\emph{list\_lookup()}}

\newcommand{\nplslins} {\emph{list\_Ins()}}

\newcommand{\npcld}{\emph{commonLu\&Del()}}

\newcommand{\rn} {\textcolor{red}{RL}\xspace}
\newcommand{\bn} {\textcolor{blue}{BL}\xspace}

\newcommand{\rc} {\textcolor{red}{currs[0]}}
\newcommand{\bc} {\textcolor{blue}{currs[1]}}
\newcommand{\bp} {\textcolor{blue}{preds[0]}}
\newcommand{\rp} {\textcolor{red}{preds[1]}}

\newcommand{\preds} {G\_preds[]}
\newcommand{\currs} {G\_currs[]}

%%%%%%%%%%%%%%%%%%%%%%%%%%%%% MV-OSTM %%%%%%%%%%%%%%%%%%%%%%%%%

\newcommand{\mvotm} {\textit{MVOSTM}\xspace}

\newcommand{\hmvotm} {\textit{HT-MVOSTM}\xspace}
\newcommand{\lmvotm} {\textit{list-MVOSTM}\xspace}
\newcommand{\hotm} {\textit{HT-OSTM}\xspace}
\newcommand{\kotm} {\textit{KOSTM}\xspace}
\newcommand{\hkotm} {\textit{HT-KOSTM}\xspace}
\newcommand{\lkotm} {\textit{list-KOSTM}\xspace}
\newcommand{\mvotmgc} {\textit{MVOSTM-GC}\xspace}
\newcommand{\hmvotmgc} {\textit{HT-MVOSTM-GC}\xspace}
\newcommand{\lmvotmgc} {\textit{list-MVOSTM-GC}\xspace}

%%%%%%%%%%%%%%%%%%%%%%%%%%%%% MV-OSTM %%%%%%%%%%%%%%%%%%%%%%%%%

%%%%%%%%%%%%%%%%%%%%%%%%%%%%% MV-OSTM %%%%%%%%%%%%%%%%%%%%%%%%%
%%%%%%%%%%%%%%%%%%%%%%%%%%%%% MV-OSTM %%%%%%%%%%%%%%%%%%%%%%%%%

%%%%%%%%%%%%%%%%%%%% Local %%%%%%%%%%%%%%%%%%%%%%%%%
\newcommand{\llog} {txLog\xspace}

\newcommand{\txlfind} {$L\_txlog.find(L\_t\_id \downarrow, L\_obj\_id \downarrow, L\_key \downarrow, L\_rec \uparrow)$}
\newcommand{\txll} {$L\_txlog$}
\newcommand{\txllf} {$L\_txlog()$}
\newcommand{\llsopn}[1] {$L\_rec.setOpn(L\_obj\_id \downarrow$, $L\_key \downarrow$, #1)}
\newcommand{\llsval}[1] {$L\_rec.setVal(L\_obj\_id \downarrow$, $ L\_key \downarrow$, #1)}
\newcommand{\llsopst}[1] {$L\_rec.setOpStatus(L\_obj\_id \downarrow$, $ L\_key \downarrow$, #1)}

\newcommand{\llgopn}[1] {$L\_rec.getOpn(L\_obj\_id \downarrow$, $ L\_key \downarrow$)}

\newcommand{\llgval}[1] {$L\_rec.getVal(L\_obj\_id \downarrow$, $ L\_key \downarrow$)}

\newcommand{\llspc} {$L\_rec.setPred\&Curr(L\_obj\_id \downarrow$, $ L\_key \downarrow$, $G\_pred \downarrow$, $G\_curr \downarrow$)}

\newcommand{\llgkeyobj} {$L\_rec.getKey\&Objid(L\_rec_{i} \downarrow$)}

\newcommand{\txgllist} {$L\_txlog.getList(L\_t\_id \downarrow$)}

\newcommand{\cld}{$commonLu\&Del$($L\_t\_id \downarrow, L\_obj\_id \downarrow, L\_key \downarrow, L\_val \uparrow, L\_op\_status \uparrow$)}
%%%%%%%%%%%%%%%%%%%% Global %%%%%%%%%%%%%%%%%%%%%%%%%

\newcommand{\cnt} {G\_cnt}

\newcommand{\gi} {$get\&inc(G\_cnt \downarrow)$}
\newcommand{\glslhead}{getListHead($L\_obj\_id \downarrow, L\_key \downarrow$)}

%%%%%%%%%%%%%%%%%%%%%%%%%%%%%%%%%%%%%From MVTO%%%%%%%%%%%%%%%%%%%%%%%%%
\newcommand{\opg}[2] {OPG(#1, #2)}
\newcommand{\ord}[1] {ord(#1)}
\newcommand{\ordfn} {ord}

\newcommand{\mv} {mv}
\newcommand{\rvf} {rvf}
\newcommand{\rt} {rt}
\newcommand{\rvmt} {rv\_method\xspace}
\newcommand{\upmt} {upd\_method\xspace}

\newcommand{\checkv} {check\_versions}

\newcommand{\instup} {ins\_tuple}
\newcommand{\gc} {gc}

\newcommand{\livel} {liveList}
\newcommand{\locko} {lockOrder}

\newcommand{\valid} {valid}
\newcommand{\seq}[2] {linearize(#2, #1)}
\newcommand{\stl}[3] {#3.stl(#1, #2)}
\newcommand{\lts}[3] {#3.lts(#1, #2)}
\newcommand{\aco} {accessOrder}
%%%%%%%%%%%%%%%%%%%%%%%%%%%%% methods%%%%%%%%%%%%%%%%%%

\newcommand{\lsls}[1] {list\_lookup($L\_obj\_id \downarrow,  L\_key \downarrow, G\_pred \uparrow, G\_curr \uparrow$)}
\newcommand{\find} {find\_lts}

\newcommand{\lslins}[1] {list\_Ins($G\_pred \downarrow$, $G\_curr \downarrow$, $node \uparrow$)}
\newcommand{\rlsol} {releaseOrderedLocks($L\_list \downarrow$)}

\newcommand {\cmnts}[1] {\State{\textcolor{gray}{/* #1 */} }}

\algrenewcommand{\algorithmiccomment}[1]{/* #1 */}

\graphicspath{{./figs/}}

\begin{document}

\mainmatter  % start of an individual contribution

% first the title is needed
\title{\bf An Innovative Approach to Achieve Compositionality Efficiently using Multi-Version Object Based Transactional Systems \thanks{A poster version of this work received \textbf{best poster award} in NETYS-2018. An initial version of this work was accepted as \textbf{work in progress} in AADDA workshop, $ICDCN-2018$.}} %\thanks{This paper is a regular submission: (1) Eligible for the best student paper award. Chirag Juyal, Sweta Kumari, and Archit Somani are full-time students. (2) May be considered for brief submission.}} %Chirag Juyal, Sweta Kumari, and Archit Somani are full-time students.

% a short form should be given in case it is too long for the running head

% the name(s) of the author(s) follow(s) next
%
% NB: Chinese authors should write their first names(s) in front of
% their surnames. This ensures that the names appear correctly in
% the running heads and the author index.
%

\titlerunning{Multi-Version Object Based Transactional Systems}

\author{Chirag Juyal\inst{1}\and Sandeep Kulkarni\inst{2}\and Sweta Kumari\inst{1}\and Sathya Peri\inst{1}\and Archit Somani\inst{1}\footnote{Author sequence follows the lexical order of last names. All the authors can be contacted at the addresses given above. Archit Somani's phone number: +91 - 7095044601.}\vspace{-3mm}}
\authorrunning{C.Juyal \and S.Kulkarni \and S.Kumari \and S.Peri \and A.Somani}
% (feature abused for this document to repeat the title also on left hand pages)

% the affiliations are given next; don't give your e-mail address
% unless you accept that it will be published
\institute{Department of Computer Science \& Engineering, IIT Hyderabad, Kandi, Telangana, India \\
\texttt{(cs17mtech11014, cs15resch01004, sathya\_p, cs15resch01001)@iith.ac.in} \and Department of Computer Science, Michigan State University, MI, USA \\
	\texttt{sandeep@cse.msu.edu}}	

%
% NB: a more complex sample for affiliations and the mapping to the
% corresponding authors can be found in the file "llncs.dem"
% (search for the string "\mainmatter" where a contribution starts).
% "llncs.dem" accompanies the document class "llncs.cls".
%

%\toctitle{Lecture Notes in Computer Science}
%\tocauthor{Authors' Instructions}
\maketitle

\vspace{-3mm}
\begin{abstract}
\noindent
The rise of multi-core systems has necessitated the need for concurrent programming. However, developing correct, efficient concurrent programs  is notoriously difficult. Software Transactional Memory Systems (STMs) are a convenient programming interface for a programmer to access shared memory without worrying about concurrency issues. Another advantage of STMs is that they facilitate compositionality of concurrent programs with great ease. Different concurrent operations that need to be composed to form a single atomic unit is achieved by encapsulating them in a single transaction.

Most of the STMs proposed in the literature are based on read/write primitive operations on memory buffers. We denote them as \emph{Read-Write STMs} or \emph{RWSTMs}. On the other hand, there have been some STMs that have been proposed (transactional boosting and its variants) that work on higher level operations such as hash-table insert, delete, lookup, etc. We call them Object STMs or OSTMs. 
%It has been shown that OSTMs provide greater concurrency.

It was observed in databases that storing multiple versions in RWSTMs provides greater concurrency. In this paper, we combine both these ideas for harnessing greater concurrency in STMs - multiple versions with objects semantics. We propose the notion of \emph{Multi-version Object STMs} or \emph{\mvotm{s}}. Specifically, we introduce and implement \mvotm for the hash-table object, denoted as \emph{\hmvotm} and list object, \emph{\lmvotm}. These objects export insert, delete and lookup \mth{s} within the transactional framework. We also show that both these \mvotm{s} satisfy \opty and ensure that transaction with lookup only \mth{s} do not abort if unbounded versions are used. 

%It was observed in databases that storing multiple versions in RWSTMs provides greater concurrency. In this paper, we present the notion of multi-version object STMs  for harnessing greater concurrency in STMs while maintaining multiple versions with objects semantics. We propose the notion of \emph{Multi-version Object STMs} or \emph{\mvotm{s}}. Specifically, we introduce and implement \mvotm for the hash-table object, denoted as \emph{\hmvotm} and list object, \emph{\lmvotm}. These objects export insert, delete and lookup \mth{s} within the transactional framework. We also show that both these \mvotm{s} satisfy \opty and ensure that a transaction with lookup only \mth{s} does not abort if unbounded versions are used. 

%As the name suggests \mvotm, works on a higher level and maintains multiple versions corresponding to each key. We have developed \mvotm with the unlimited number of versions corresponding to each key. In addition to that, we have developed garbage collection for MVOSTM (MVOSTM-GC) to delete unwanted versions corresponding to the keys to reduce traversal overhead. MVOSTM provides greater concurrency while reducing the number of aborts and it ensures compositionality by making the transactions atomic.

Experimental results show that \lmvotm outperform almost two to twenty fold speedup than existing state-of-the-art list based STMs (Trans-list, Boosting-list, NOrec-list, list-MVTO, and list-OSTM).  Similarly, \hmvotm shows a significant performance gain of almost two to  nineteen times over the existing state-of-the-art hash-table based STMs  (ESTM, RWSTMs, HT-MVTO, and HT-OSTM). %MVOSTM with list and hash-table shows the least number of aborts among all the existing STM algorithms.

%\mvotm satisfies \ccs{} as opacity. The progress condition of the proposed \mvotm is \emph{multi-version permissiveness} or \emph{mv-permissiveness} which never aborts a transaction which is having return-value method only. To the best of our knowledge, this is the first work to explore the idea of using  multiple versions in \otm{s} to achieve greater concurrency.

\vspace{-3mm}
%\keywords{Software transactional memory, Lazyrb-list, OSTM, Multi-version, Compositionality, Opacity}
\end{abstract}
\vspace{-5mm}
%\vspace{-12pt}
\section{Introduction}
\label{sec:intro}

%\vspace{1mm}
%\noindent
%\textbf{History of STMs.}

The rise of multi-core systems has necessitated the need for concurrent programming. However, developing correct concurrent programs without compromising on efficiency is a big challenge. Software Transactional Memory Systems (STMs) are a convenient programming interface for a programmer to access shared memory without worrying about concurrency issues. Another advantage of STMs is that they facilitate compositionality of concurrent programs with great ease. Different concurrent operations that need to be composed to form a single atomic unit is achieved by encapsulating them in a single transaction. Next, we discuss different types of STMs considered in the literature and identify the need to develop multi-version object STMs proposed in this paper. 

%\color{blue}
%Next, we discuss different types of STMs considered in the literature and identify the need to develop \todo{name} developed in this paper. 
%\color{black}

%With the rise of multi-core systems, concurrent programming is popular. Software Transaction Memory Systems (STMs) are a convenient programming interface for a programmer to access shared memory without worrying about concurrency issues such as locking, races, deadlocks, etc. To access the shared memory, concurrently running transactions uses the methods of \textit{STMs}. Another advantage of STMs is that they facilitate compositionality of concurrent programs with great ease. Different concurrent operations that need to be composed to form a single atomic unit is achieved by encapsulating all these operations as a single transaction.

\vspace{1mm}
\noindent
\textbf{Read-Write STMs:} Most of the STMs proposed in the literature (such as NOrec \cite{Dalessandro+:NoRec:PPoPP:2010}, ESTM \cite{Felber+:ElasticTrans:2017:jpdc}) are based on read/write operations on \emph{transaction objects} or \emph{\tobj{s}}. We denote them as \emph{Read Write STMs} or \emph{RWSTMs}. These STMs typically export following methods: (1) \tbeg{}: begins a transaction, (2) \tread{} (or $r$): reads from a \tobj, (3) \twrite{} (or $w$): writes to a \tobj, (4) \tryc{}: validates and tries to commit the transaction by writing values to the shared memory. If validation is successful, then it returns commit. Otherwise, it returns abort. %through \trya{} method.

%Any \textit{STMs} which uses read/write operations(or methods) to access shared memory is known as \textit{\textbf{Read-Write STMs} (or \rwtm{}}). In literature, most of the \textit{STMs} are \rwtm{}. NOrec \cite{conf/ppopp/DalessandroSS10}, ESTM \cite{Felber:2017:jpdc} etc. and several other renowned STMs that work at read-write level. It is a library which exports following methods: a) \tbeg{}: begins a transaction, b) \tread{} (or $r$): reads from shared memory, c) \twrite{} (or $w$): writes into its local memory, d) \tryc{}: validates and tries to commit the transaction by writing values on to the shared memory. If validation is successful,  then it returns commit. Otherwise, it returns abort. %through \trya{} method.

\ignore {
Most of the \textit{STMs} proposed in the literature are specifically based on read/write primitive operations (or methods) on memory buffers (or memory registers). These \textit{STMs} typically export the following methods: \tbeg{} which begins a transaction, \tread{} (or $r$) which reads from a buffer, \twrite{} (or $w$) which writes onto a buffer, \tryc{} which validates the \op{s} of the transaction and tries to commit. If validation is successful then it returns commit otherwise STMs export \trya{} which returns abort. We refer to these as \textit{\textbf{Read-Write STMs} or \rwtm{}}. 
As a part of the validation, the STMs typically check for \emph{conflicts} among the \op{s}. Two \op{s} are said to be conflicting if at least one of them is a write (or update) \op. Normally, the order of two conflicting \op{s} can not be commutated.  %Two \op{s} are said to be conflicting if the relative execution order of these \op{s} can not be reversed or commutated without affecting the behaviour of the \op{s} following them. 
}

%\vspace{-.5cm}
\begin{figure}
	\centering
	\captionsetup{justification=centering}
	\centerline{\scalebox{0.4}{\input{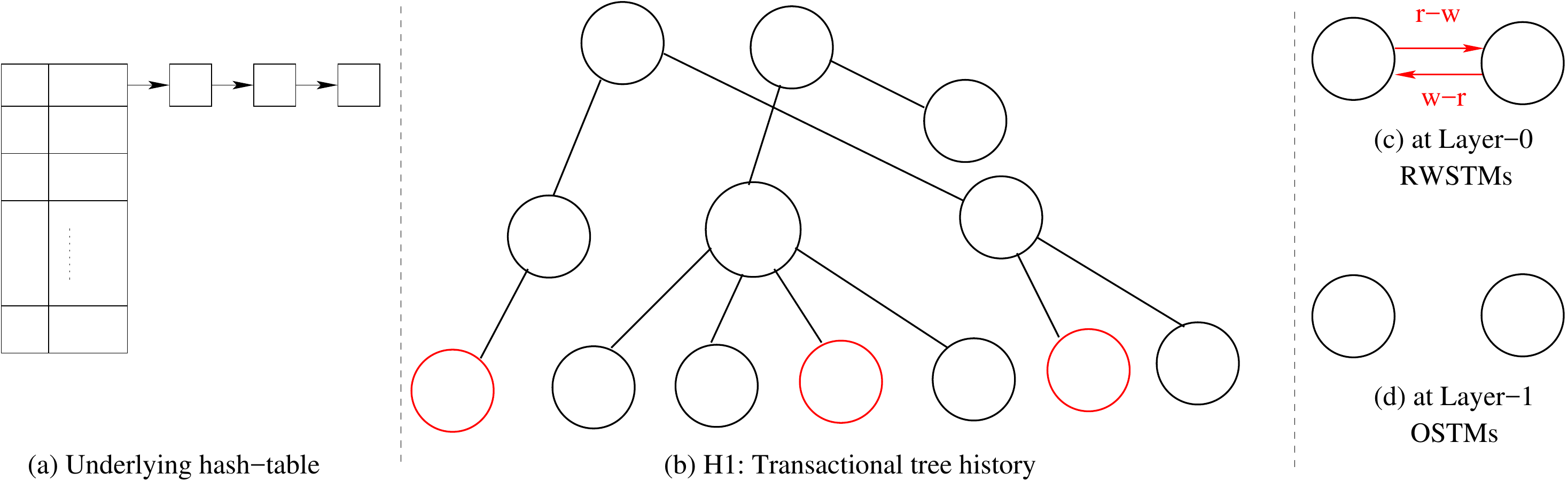_t}}}
	\caption{Advantages of OSTMs over RWSTMs}
	\label{fig:tree-exec}
\end{figure}
%\vspace{-.7cm}

%Literature of databases shows that object-based systems ensure greater concurrency than read-write systems. Hence, Herlihy et al.\cite{HerlihyKosk:Boosting:PPoPP:2018}, Hassan et al. \cite{Hassan+:OptBoost:PPoPP:2014}, and Weikum \& Vossen \cite[Chap 6]{WeiVoss:2002:Morg} extended this concept to STMs as  \textit{\textbf{Object-based STMs} (or \otm{s}}) which works on higher level objects than \rwtm{}.  We have considered an \otm{}\footnote{\otm{} has been accepted as a full paper in NETYS-2018.} using \emph{\tab} \cite{DBLP:journals/corr/abs-1709-00681}. It exports these methods: a) \tbeg{}: begins a transaction, b) \tins{} (or $ins$): inserts a value for a given key, c) \tdel{} (or $del$): deletes the value corresponding to the key and return its value, d) \tlook{} (or $lu$): lookup the value corresponding to the key and e) \tryc{}: validates the \op{s} and tries to commit the transaction.
 
%\vspace{1mm}
%\noindent
%\textbf{An example to illustrate benefits of \otm over \rwtm{}.} 

\vspace{1mm}
\noindent
\textbf{Object STMs:} Some STMs have been proposed that work on higher level operations such as \tab. We call them \emph{Object STMs} or \emph{OSTMs}. It has been shown that \otm{s} provide greater concurrency. The concept of Boosting by Herlihy et al.\cite{HerlihyKosk:Boosting:PPoPP:2008}, the optimistic variant by Hassan et al. \cite{Hassan+:OptBoost:PPoPP:2014} and more recently \hotm system by Peri et al. \cite{Peri+:OSTM:Netys:2018} are some examples that demonstrate the performance benefits achieved by \otm{s}. 

\vspace{1mm}
\noindent 
\textbf{Benefit of \otm{s} over \rwtm{s}: } We now illustrate the advantage of \otm{s} by considering a \tab based STM system. We assume that the \op{s} of the \tab are insert (or $ins$), lookup (or $lu$) and delete (or $del$). Each \tab consists of $B$ buckets with the elements in each bucket arranged in the form of a linked-list. \figref{tree-exec}(a) represents a \tab{} with the first bucket containing keys $\langle k_2,~ k_5,~ k_7 \rangle$. \figref{tree-exec} (b) shows the execution by two transaction $T_1$ and $T_2$ represented in the form of a tree. $T_1$ performs lookup \op{s} on keys $k_2$ and $k_7$ while $T_2$ performs a delete on $k_5$. The delete on key $k_5$ generates read on the keys $k_2,k_5$ and writes the keys $k_2,k_5$ assuming that delete is performed similar to delete \op in \lazy \cite{Heller+:LazyList:PPL:2007}. The lookup on $k_2$ generates read on $k_2$ while the lookup on $k_7$ generates read on $k_2, k_7$. Note that in this execution $k_5$ has already been deleted by the time lookup on $k_7$ is performed. 

%\rwtm{} and \otm{s} \op{s} as layer-0 (or leaves) and layer-1 respectively in the form of transactional forest. Suppose transactions  $T_1$ and $T_2$ are concurrently executing. History $H0$ at layer-0 (while ignoring higher-level \op{s})  is not \opq \cite{GuerKap:2008:PPoPP} because between the two reads of $k_2$ by $T_1$, $T_2$ writes to $k_2$. So, history $H0$ is forming a cycle shown in \figref{tree-exec} (c). To ensure opacity for $H0$, one of the transactions among $T_1$ or $T_2$ would be aborted. \cmnt{Now, consider the history $H1$ at layer-1,

%In this setting, suppose a transaction $T_1$ of \otm{} invokes methods \tlook{} on the keys $k_5, k_8$. This would internally cause the \otm{} to invoke \lookk{} method on keys $\langle k_2, k_5 \rangle$ and $\langle k_2, k_5, k_7, k_8 \rangle$ respectively. Concurrently, suppose transaction $T_2$ invokes the method \tdel{} on key $k_7$ between the two \tlook{s} of $T_1$. This would cause, \otm{} to invoke \dell{} method of \llist{} on $k_7$. Since, we are using lazy-list approach on the underlying \llist, \dell{} involves pointing the next field of element $k_5$ to $k_8$ and marking element $k_7$ as deleted. Thus \dell{} of $k_7$ would execute the following sequence of read/write level operations- $r(k_2) r(k_5) r(k_7) w(k_5) w(k_7)$ where $r(k_5), w(k_5)$ denote read \& write on the element $k_5$ with some value respectively. The execution of \otm{} denoted as a \emph{history} can be represented as a transactional forest as shown in \figref{tree-exec} b). Here the execution of each transaction is a tree. 
  
In this execution, we denote the read-write \op{s} (leaves) as layer-0 and $lu, del$ methods as layer-1. Consider the history (execution) at layer-0 (while ignoring higher-level \op{s}), denoted as $H0$. It can be verified this history is not \opq \cite{GuerKap:Opacity:PPoPP:2008}. This is because between the two reads of $k_2$ by $T_1$, $T_2$ writes to $k_2$. It can be seen that if history $H0$ is input to a \rwtm{s} one of the transactions between $T_1$ or $T_2$ would be aborted to ensure opacity \cite{GuerKap:Opacity:PPoPP:2008}. The \figref{tree-exec} (c) shows the presence of a cycle in the conflict graph of $H0$. 
% correctness (in this case opacity \cite{GuerKap:2008:PPoPP})
  
Now, consider the history $H1$ at layer-1 consists of $lu$, and $del$ \mth{s}, while ignoring the read/write \op{s} since they do not overlap (referred to as pruning in \cite[Chap 6]{WeiVoss:TIS:2002:Morg}). These methods work on distinct keys ($k_2$, $k_5$, and $k_7$). They do not overlap and are not conflicting. So, they can be re-ordered in either way. Thus, $H1$ is \opq{} \cite{GuerKap:Opacity:PPoPP:2008} with equivalent serial history $T_1 T_2$ (or $T_2 T_1$) and the corresponding conflict graph shown in \figref{tree-exec} (d). Hence, a \tab based \otm{} system does not have to abort either of $T_1$ or $T_2$. This shows that \otm{s} can reduce the number of aborts and provide greater concurrency. 

%\vspace{1mm}
%\noindent
%\noindent{\textbf{Goals of the paper. } }
%The goal of this paper is to present hash table multi-version object STM as a way to increase concurrency and reduce unnecessary aborts in transactions operating on these STMs. Before we present the main summary of our results, next, we define the history of STMs and their limitations to identify the need for developing multi-version object STM. 

\ignore{
\color{red}
\vspace{1mm}
\noindent
\textbf{Multi-Version Object STMs:} Having shown the advantage achieved by \otm{s},\todo{issue 1} We now explore the notion of \emph{Multi-Version Object STMs} or \emph{\mvotm{s}}. It was observed in databases and \rwtm{s} that by storing multiple versions for each \tobj, greater concurrency can be obtained \cite{Kumar+:MVTO:ICDCN:2014}. Maintaining multiple versions can ensure that more read operations succeed because the reading \op{} will have an appropriate version to read. This motivated us to develop \mvotm{s}. 

issue1 = We did not show this. The writeup suggests that we did this. 

\color{blue}
}

\vspace{1mm}
\noindent
\textbf{Multi-Version Object STMs:} Having seen the advantage achieved by \otm{s} (which was exploited in some works such as \cite{HerlihyKosk:Boosting:PPoPP:2008}, \cite{Hassan+:OptBoost:PPoPP:2014}, \cite{Peri+:OSTM:Netys:2018}), in this paper we propose and evaluate \emph{Multi-version Object STMs} or \emph{\mvotm{s}}. Our work is motivated by the observation that in databases and \rwtm{s} by storing multiple versions for each \tobj, greater concurrency can be obtained \cite{Kumar+:MVTO:ICDCN:2014}. Specifically, maintaining multiple versions can ensure that more read operations succeed because the reading \op{} will have an appropriate version to read. Our goal is to evaluate the benefit of \mvotm{s} over both multi-version \rwtm{s} as well as single version \otm{s}. %\todo{check for truth}

%\color{black}
%It exports these methods: a) \tbeg{}: begins a transaction, b) \tins{} (or $ins$): inserts a version corresponding to the key, c) \tdel{} (or $del$): deletes a version corresponding to the key and returns its value, d) \tlook{} (or $lu$): looks up an appropriate version corresponding to the key and e) \tryc{}: validates the \op{s} of the transaction.
%\vspace{-.9cm}
%\color{blue}
%\vspace{-.7cm}
\begin{figure}
	\centering
	\captionsetup{justification=centering}
	\scalebox{.4}{\input{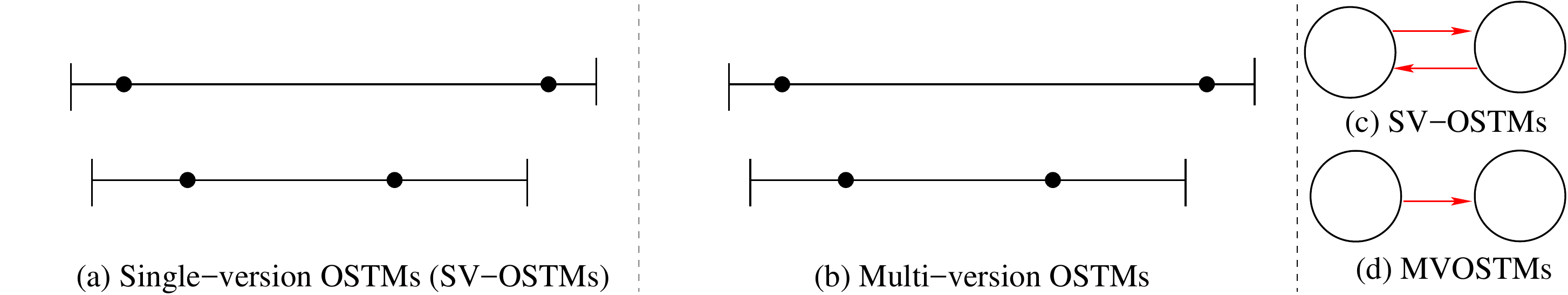_t}}
	% \vspace{-16cm}
	\caption{Advantages of multi-version over single version \otm{}}
	\label{fig:pop}
\end{figure}
%\vspace{-.7cm}
%\vspace{1mm}
\noindent 
\textbf{Potential benefit of \mvotm{s} over \otm{s} and multi-version \rwtm{s}:} We now illustrate the advantage of \mvotm{s} as compared to single-version \otm{s} (\sotm{s}) using \tab object having the same \op{s} as discussed above: $ins, lu, del$. \figref{pop} (a) represents a history H with two concurrent transactions $T_1$ and $T_2$ operating on a \tab{} $ht$. $T_1$ first tries to perform a $lu$ on key $k_2$. But due to the absence of key $k_2$ in $ht$, it obtains a value of $null$. Then $T_2$ invokes $ins$ method on the same key $k_2$ and inserts the value $v_2$ in $ht$. Then $T_2$ deletes the key $k_1$ from $ht$ and returns $v_0$ implying that some other transaction had previously inserted $v_0$ into $k_1$. The second method of $T_1$ is $lu$ on the key $k_1$. With this execution, any \sotm system has to return abort for $T_1$'s $lu$ \op  to ensure correctness, i.e., \opty. Otherwise, if $T_1$ would have obtained a return value $v_0$ for $k_1$, then the history would not be \opq anymore. This is reflected by a cycle in the corresponding conflict graph between $T_1$ and $T_2$, as shown in \figref{pop} (c). Thus to ensure \opty, \sotm system has to return abort for $T_1$'s lookup on $k_1$.

In an \mvotm based on \tab, denoted as \emph{\hmvotm}, whenever a transaction inserts or deletes a key $k$, a new version is created. Consider the above example with a \hmvotm, as shown in \figref{pop} (b). Even after $T_2$ deletes $k_1$, the previous value of $v_0$ is still retained. Thus, when $T_1$ invokes $lu$ on $k_1$ after the delete on $k_1$ by $T_2$, \hmvotm return $v_0$ (as previous value). With this, the resulting history is \opq{} with equivalent serial history being $T_1 T_2$. The corresponding conflict graph is shown in \figref{pop} (d) does not have a cycle. 

Thus, \mvotm reduces the number of aborts and achieve greater concurrency than \sotm{s} while ensuring the compositionality. We believe that the benefit of \mvotm over multi-version \rwtm is similar to \sotm over single-version \rwtm as explained above.

\mvotm is a generic concept which can be applied to any data structure. In this paper, we have considered the list and \tab{} based \mvotm{s}, \lmvotm and \hmvotm respectively. Experimental results of list-MVOSTM outperform almost two to twenty fold speedup than existing state-of-the-art STMs used to implement a list: Trans-list \cite{ZhangDech:LockFreeTW:SPAA:2016}, Boosting-list \cite{HerlihyKosk:Boosting:PPoPP:2008}, NOrec-list \cite{Dalessandro+:NoRec:PPoPP:2010} and \sotm \cite{Peri+:OSTM:Netys:2018} under high contention. Similarly, \hmvotm shows significant performance gain almost two to nineteen times better than existing state-of-the-art STMs used to implement a \tab{}: ESTM \cite{Felber+:ElasticTrans:2017:jpdc}, NOrec \cite{Dalessandro+:NoRec:PPoPP:2010} and \sotm \cite{Peri+:OSTM:Netys:2018}. To the best of our knowledge, this is the first work to explore the idea of using multiple versions in \otm{s} to achieve greater concurrency. 

\hmvotm and \lmvotm use an unbounded number of versions for each key. To address this issue, we develop two variants for both \tab and list data structures (or DS): (1) A garbage collection method in \mvotm to delete the unwanted versions of a key, denoted as \mvotmgc. Garbage collection gave a performance gain of 15\%  over \mvotm without garbage collection in the best case. Thus, the overhead of garbage collection is less than the performance improvement due to improved memory usage. (2) Placing a limit of $K$ on the number versions in \mvotm, resulting in \kotm. This gave a performance gain of 22\% over \mvotm without garbage collection in the best case. 

%\color{blue} Thus, the overhead of garbage collection is less than the performance improvement due to improved memory usage.\color{black}
%\todo{This paragraph has to be revisted.}

\noindent 
\textbf{Contributions of the paper:}
\begin{itemize}
\item We propose a new notion of multi-version objects based STM system, \mvotm. Specifically develop it for list and \tab{} objects, \lmvotm and \hmvotm respectively. 

\item We show \lmvotm and \hmvotm satisfy \emph{opacity} \cite{GuerKap:Opacity:PPoPP:2008}, standard \cc for STMs.

%\item Both the \mvotm{s} can have both bounded and unbounded number of versions. If the implementation chooses unbounded versions, then the resulting STM system satisfies the progress condition: \mvpness \cite{perel+:2010:MultVer:PODC} which requires that lookup-only transaction never gets aborted.

\item Our experiments show that both \lmvotm and \hmvotm provides greater concurrency and reduces the number of aborts as compared to \sotm{s}, single-version \rwtm{s} and, multi-version \rwtm{s}. We achieve this by maintaining multiple versions corresponding to each key.

%\item We develop the \emph{garbage collection} method to delete unwanted versions from \mvotm with unbounded versions.%
\item For efficient space utilization in \mvotm with unbounded versions we develop \emph{Garbage Collection} for \mvotm (i.e. \mvotmgc) and bounded version \mvotm (i.e. \kotm).

%\item \mvotm shows significant performance gain over renowned state-of-the-art STMs. %\emph{list-OSTM}\cite{DBLP:journals/corr/abs-1709-00681}.
\end{itemize}

%\todo{In the previous paragraph, you have some text that this is better than existing STMs. We should have some text about that in the contributions. We show that \mvotm is better than single version otm. We show that \mvotm is better than multi version \rwtm}

%\textbf{Road-map:} We described our system model and preliminaries in \secref{model}. \secref{conflicts} represents the graph characterization of \mvotm. \secref{mvdesign} and \secref{pcode} describes the \hmvotm design, data structure and pseudo code respectively. \secref{cmvostm} and \secref{exp} represents correctness of \mvotm and comparisons with the state-of-the-art. Finally, we conclude our paper followed by future work in \secref{confu}. Due to space constraints, we kept only the main idea in the paper, but for more details, please refer technical report \cite{DBLP:journals/corr/abs-1712-09803}. 
\vspace{-4mm}
\section{Building System Model}
\label{sec:model}

The basic model we consider is adapted from Peri et al. \cite{Peri+:OSTM:Netys:2018}. We assume that our system consists of a finite set of $P$ processors, accessed by a finite number of $n$ threads that run in a completely asynchronous manner and communicate using shared objects. The threads communicate with each other by invoking higher-level \mth{s} on the shared objects and getting corresponding responses. Consequently, we make no assumption about the relative speeds of the threads. We also assume that none of these processors and threads fail or crash abruptly.
 
\vspace{1mm}
\noindent
\textbf{Events and Methods:} We assume that the threads execute atomic \emph{events} and the events by different threads are (1) read/write on shared/local memory objects, (2) \mth{} invocations (or \emph{\inv}) event and responses (or \emph{\rsp}) event on higher level shared-memory objects.

Within a transaction, a process can invoke layer-1 \mth{s} (or \op{s}) on a \emph{\tab} \tobj. A \tab{}($ht$) consists of multiple key-value pairs of the form $\langle k, v \rangle$. The keys and values are respectively from sets $\mathcal{K}$ and $\mathcal{V}$. The \mth{s} that a thread can invoke are: (1) $\tbeg_i{}$: begins a transaction and returns a unique id to the invoking thread. (2) $\tins_i(ht, k, v)$: transaction $T_i$ inserts a value $v$ onto key $k$ in $ht$. (3) $\tdel_i(ht, k, v)$: transaction $T_i$ deletes the key $k$ from the \tab{} $ht$ and returns the current value $v$ for $T_i$. If key $k$ does not exist, it returns $null$. (4) $\tlook_i(ht, k, v)$: returns the current value $v$ for key $k$ in $ht$  for $T_i$. Similar to \tdel, if the key $k$ does not exist then \tlook returns $null$. (5) $\tryc_i$: which tries to commit all the \op{s} of $T_i$  and (6) $\trya_i$: aborts $T_i$. We assume that each \mth{} consists of an \inv{} and \rsp{} event.

%The STM system is initialized using init \mth. 

We denote \tins{} and \tdel{} as \emph{update} \mth{s} (or $\upmt{}$) since both of these change the underlying data structure. We denote \tdel{} and \tlook{} as \emph{return-value methods (or $\rvmt{}$)} as these operations return values from $ht$. A \mth{} may return $ok$ if successful or $\mathcal{A}$(abort) if it sees an inconsistent state of $ht$. 

%Formally, we denote a \mth{} $m$ by the tuple $\langle \evts{m}, <_m\rangle$. Here, $\evts{m}$ are all the events invoked by $m$ and the $<_m$ a total order among these events. \vspace{-0.5cm}

\vspace{1mm}
\noindent
\textbf{Transactions:} Following the notations used in database multi-level transactions\cite{WeiVoss:TIS:2002:Morg}, we model a transaction as a two-level tree. The \emph{layer-0} consist of read/write events and \emph{layer-1} of the tree consists of \mth{s} invoked by a transaction.
%\begin{figure}[tbph]

Having informally explained a transaction, we formally define a transaction $T$ as the tuple $\langle \evts{T}, <_T\rangle$. Here $\evts{T}$ are all the read/write events at \emph{layer-0} of the transaction. $<_T$ is a total order among all the events of the transaction.
% $\inv(\tlook_1(ht, k_5))~ r_{11}(k_2, o_2) r_{11}(k_5, o_5)~ \rsp(\tlook(ht, k_5, o_5))~ \inv(\tdel_1(ht, k_7))~ r_{11}(k_2, o_2)~ r_{11}(k_5, o_5)~ r_{11}(k_7, o_7)~ w_{11}(k_5, v_i)~ w_{11}(k_7, v_j) \rsp(\tdel(ht, k_7, o_7))$. 

We denote the first and last events of a transaction $T_i$ as $\fevt{T_i}$ and $\levt{T_i}$. Given any other read/write event $rw$ in $T_i$, we assume that $\fevt{T_i} <_{T_i} rw <_{T_i} \levt{T_i}$. All the \mth{s} of $T_i$ are denoted as $\met{T_i}$. %We assume that for any method $m$ in $\met{T_i}$, $\evts{m}$ is a subset of $\evts{T_i}$ and $<_m$ is a subset of $<_{T_i}$. We assume that if a transaction has invoked a \mth, then it does not invoke a new \mth{} until it gets the response of the previous one. Thus all the \mth{s} of a transaction can be ordered by $<_{T_i}$. Formally,  $(\forall m_{p}, m_{q} \in \met{T_i}: (m_{p} <_{T_i} m_{q}) \lor (m_{q} <_{T_i} m_{p}))\rangle$, here $m_{p}$ and $m_{q}$ are $p_{th}$ and $q_{th}$ methods of $T_i$ respectively.\\ 

\vspace{1mm}
\noindent
\textbf{Histories:} A \emph{history} is a sequence of events belonging to different transactions. The collection of events is denoted as $\evts{H}$. Similar to a transaction, we denote a history $H$ as tuple $\langle \evts{H},<_H \rangle$ where all the events are totally ordered by $<_H$. The set of \mth{s} that are in $H$ is denoted by $\met{H}$. A \mth{} $m$ is \emph{incomplete} if $\inv(m)$ is in $\evts{H}$ but not its corresponding response event. Otherwise, $m$ is \emph{complete} in $H$. 

Coming to transactions in $H$, the set of transactions in $H$ are denoted as $\txns{H}$. The set of committed (resp., aborted) transactions in $H$ is denoted by $\comm{H}$ (resp., $\aborted{H}$). The set of \emph{live} transactions in $H$ are those which are neither committed nor aborted.  On the other hand, the set of \emph{terminated} transactions are those which have either committed or aborted. 
%$\live{H} =\txns{H}-\comm{H}-\aborted{H}$. % and is denoted by $\term{H} = \comm{H} \cup \aborted{H}$.

%The relation between the events of transactions \& histories is analogous to the relation between \mth{s} \& transactions. We assume that for any transaction $T$ in $\txns{H}$, $\evts{T}$ is a subset of $\evts{H}$ and $<_T$ is a subset of $<_{H}$. Formally, $\langle \forall T \in \txns{H}: (\evts{T} \subseteq \evts{H}) ~ \land (<_T \subseteq <_{H}) \rangle$. 

We denote two histories $H_1, H_2$ as \emph{equivalent} if their events are the same, i.e., $\evts{H_1} = \evts{H_2}$. A history $H$ is qualified to be \emph{well-formed} if: (1) all the \mth{s} of a transaction $T_i$ in $H$ are totally ordered, i.e. a transaction invokes a \mth{} only after it receives a response of the previous \mth{} invoked by it (2) $T_i$ does not invoke any other \mth{} after it received an $\mathcal{A}$ response or after $\tryc(ok)$ \mth. We only consider \emph{well-formed} histories for \otm.

A \mth{} $m_{ij}$ ($j^{th}$ method of a transaction $T_i$) in a history $H$ is said to be \emph{isolated} or \emph{atomic} if for any other event $e_{pqr}$ ($r^{th}$ event of method $m_{pq}$) belonging to some other \mth{} $m_{pq}$ of transaction $T_p$ either $e_{pqr}$ occurs before $\inv(m_{ij})$ or after $\rsp(m_{ij})$. 

\vspace{1mm}
\noindent
\textbf{Sequential Histories:} A history $H$ is said to be \emph{sequential} (term used in \cite{KuznetsovPeri:Non-interference:TCS:2017, KuznetsovRavi:ConcurrencyTM:OPODIS:2011}) if all the methods in it are complete and isolated. From now onwards, most of our discussion would relate to sequential histories. 

Since in sequential histories all the \mth{s} are isolated, we treat each \mth as a whole without referring to its $inv$ and $rsp$ events. For a sequential history $H$, we construct the \emph{completion} of $H$, denoted $\overline{H}$, by inserting $\trya_k(\mathcal{A})$ immediately after the last \mth{} of every transaction $T_k \in \live{H}$. Since all the \mth{s} in a sequential history are complete, this definition only has to take care of completed transactions. 

%Consider a sequential history $H$. Let $m_{ij}(ht, k, v/nil)$ be the first \mth of $T_i$ in $H$ operating on the key $k$ as $\fkmth{\langle ht, k \rangle}{T_i}{H}$, where $m_{ij}$ stands for $j^{th}$ method of $i^{th}$ transaction. For a \mth $m_{ix}(ht, k, v)$ which is not the first \mth on $\langle ht, k \rangle$ of $T_i$ in $H$, we denote its previous \mth on $k$ of $T_i$ as $m_{ij}(ht, k, v) = \pkmth{m_{ix}}{T_i}{H}$.

\vspace{1mm}
\noindent
\textbf{Real-time Order and Serial Histories:} Given a history $H$, $<_H$ orders all the events in $H$. For two complete \mth{s} $m_{ij}, m_{pq}$ in $\met{H}$, we denote $m_{ij} \prec_H^{\mr} m_{pq}$ if $\rsp(m_{ij}) <_H \inv(m_{pq})$. Here \mr{} stands for method real-time order. It must be noted that all the \mth{s} of the same transaction are ordered. Similarly, for two transactions $T_{i}, T_{p}$ in $\term{H}$, we denote $(T_{i} \prec_H^{\tr} T_{p})$ if $(\levt{T_{i}} <_H \fevt{T_{p}})$. Here \tr{} stands for transactional real-time order. 

\cmnt{
Thus, $\prec$ partially orders all the \mth{s} and transactions in $H$. It can be seen that if $H$ is sequential, then $\prec_H^{\mr}$ totally orders all the \mth{s} in $H$. Formally, $\langle (H \text{ is seqential}) \implies (\forall m_{ij}, m_{pq} \in \met{H}: (m_{ij} \prec_H^{\mr} m_{pq}) \lor (m_{pq} \prec_H^{\mr} m_{ij}))\rangle$. 
}

We define a history $H$ as \emph{serial} \cite{Papad:1979:JACM} or \emph{t-sequential} \cite{KuznetsovRavi:ConcurrencyTM:OPODIS:2011} if all the transactions in $H$ have terminated and can be totally ordered w.r.t $\prec_{\tr}$, i.e. all the transactions execute one after the other without any interleaving. Intuitively, a history $H$ is serial if all its transactions can be isolated. Formally, $\langle (H \text{ is serial}) \implies (\forall T_{i} \in \txns{H}: (T_i \in \term{H}) \land (\forall T_{i}, T_{p} \in \txns{H}: (T_{i} \prec_H^{\tr} T_{p}) \lor (T_{p} \prec_H^{\tr} T_{i}))\rangle$. Since all the methods within a transaction are ordered, a serial history is also sequential.% Refer \figref{serial} in \apnref{apndx1} to shows a serial history.

\ignore{
\vspace{1mm}
\noindent
\textbf{Real-time Order \& Serial Histories:}  Two \mth{s} $m_{ij}$ and $m_{pq}$ of history $H$ are in real-time order, if $\rsp(m_{ij}) <_H \inv(m_{pq})$. Similarly, two transactions $T_i$ and $T_j$ are in real-time order, if  $(\levt{T_{i}} <_H \fevt{T_{j}})$, where $\levt{T_{i}}$ and $\fevt{T_{j}}$ represents the last method of $T_i$ and first method of $T_j$ respectively. A history $H$ is said to be serial if all the transactions are atomic and totally ordered.

\vspace{1mm}
\noindent
\textbf{\textit{ Valid and Legal Histories:}} A history $H$ is said to valid if all the \rvmt{s} are lookup from previously committed 
}

\noindent
To simplify our analysis, we assume that there exists an initial transaction $T_0$ that invokes $\tdel$ \mth on all the keys of the \tab{} used by any transaction. 

\vspace{1mm}
\noindent
\textbf{Valid Histories:} A \rvmt{} (\tdel{} and \tlook{}) $m_{ij}$ on key $k$ is valid if it returns the value updated by any of the previous committed transaction that updated key $k$. A history $H$ is said to valid if all the \rvmt{s} of H are valid. 

\vspace{1mm}
\noindent
\textbf{Legal Histories:} A \rvmt $m_{ij}$ on key $k$ is legal if it returns the value updated the latest committed transaction that updated key $k$. A history $H$ is said to be legal, if all the \rvmt{s} of H are legal. 

We define \emph{\legality{}} of \rvmt{s} on sequential histories which we use to define correctness criterion as opacity \cite{GuerKap:Opacity:PPoPP:2008}. Consider a sequential history $H$ having a \rvmt{} $\rvm_{ij}(ht, k, v)$ (with $v \neq null$) as $j^{th}$ method belonging to transaction $T_i$. We define this \rvm \mth{} to be \emph{\legal} if: 
\vspace{-1mm}
\begin{enumerate}
	%\setlength\itemsep{-1em}
	%\vspace{-.2cm}
	\item[LR1] \label{step:leg-same} If the $\rvm_{ij}$ is not first \mth of $T_i$ to operate on $\langle ht, k \rangle$ and $m_{ix}$ is the previous \mth of $T_i$ on $\langle ht, k \rangle$. Formally, $\rvm_{ij} \neq \fkmth{\langle ht, k \rangle}{T_i}{H}$ $\land (m_{ix}\\(ht, k, v') = \pkmth{\langle ht, k \rangle}{T_i}{H})$ (where $v'$ could be null). Then,
	\begin{enumerate}
		\setlength\itemsep{0em}
		%		\item if $m_{ix}(ht, k, v')$ is a \tins{} \mth i.e. $\tins_{ix}(ht, k, v')$ then $v = v'$. 
		%		\item if $m_{ix}(ht, k, v')$ is a \tlook{} \mth i.e. $\tlook_{ix}(ht, k, v')$ then $v = v'$. 
		%		\item if $m_{ix}(ht, k, v')$ is a \tdel{} \mth i.e. $\tdel_{ix}(ht, k, v'/NULL)$ then $v = NULL$. 
		\item If $m_{ix}(ht, k, v')$ is a \tins{} \mth then $v = v'$. 
		\item If $m_{ix}(ht, k, v')$ is a \tlook{} \mth then $v = v'$. 
		\item If $m_{ix}(ht, k, v')$ is a \tdel{} \mth then $v = null$.
	\end{enumerate}
	
	In this case, we denote $m_{ix}$ as the last update \mth{} of $\rvm_{ij}$, i.e.,  $m_{ix}(ht, k, v') = \\\lupdt{\rvm_{ij}(ht, k, v)}{H}$. 
	
	\item[LR2] \label{step:leg-ins} If $\rvm_{ij}$ is the first \mth{} of $T_i$ to operate on $\langle ht, k \rangle$ and $v$ is not null. Formally, $\rvm_{ij}(ht, k, v) = \fkmth{\langle ht, k \rangle}{T_i}{H} \land (v \neq null)$. Then,
	\begin{enumerate}
		\setlength\itemsep{0em}
		\item There is a \tins{} \mth{} $\tins_{pq}(ht, k, v)$ in $\met{H}$ such that $T_p$ committed before $\rvm_{ij}$. Formally, $\langle \exists \tins_{pq}(ht, k, v) \in \met{H} : \tryc_p \prec_{H}^{\mr} \rvm_{ij} \rangle$. 
		\item There is no other update \mth{} $up_{xy}$ of a transaction $T_x$ operating on $\langle ht, k \rangle$ in $\met{H}$ such that $T_x$ committed after $T_p$ but before $\rvm_{ij}$. Formally, $\langle \nexists up_{xy}(ht, k, v'') \in \met{H} : \tryc_p \prec_{H}^{\mr} \tryc_x \prec_{H}^{\mr} \rvm_{ij} \rangle$. 		
	\end{enumerate}
	
	In this case, we denote $\tryc_{p}$ as the last update \mth{} of $\rvm_{ij}$, i.e.,  $\tryc_{p}(ht, k, v)$= $\lupdt{\rvm_{ij}(ht, k, v)}{H}$.
	
	\item[LR3] \label{step:leg-del} If $\rvm_{ij}$ is the first \mth of $T_i$ to operate on $\langle ht, k \rangle$ and $v$ is null. Formally, $\rvm_{ij}(ht, k, v) = \fkmth{\langle ht, k \rangle}{T_i}{H} \land (v = null)$. Then,
	\begin{enumerate}
		\setlength\itemsep{0em}
		\item There is \tdel{} \mth{} $\tdel_{pq}(ht, k, v')$ in $\met{H}$ such that $T_p$ (which could be $T_0$ as well) committed before $\rvm_{ij}$. Formally, $\langle \exists \tdel_{pq}\\(ht, k,$ $ v') \in \met{H} : \tryc_p \prec_{H}^{\mr} \rvm_{ij} \rangle$. Here $v'$ could be null. 
		\item There is no other update \mth{} $up_{xy}$ of a transaction $T_x$ operating on $\langle ht, k \rangle$ in $\met{H}$ such that $T_x$ committed after $T_p$ but before $\rvm_{ij}$. Formally, $\langle \nexists up_{xy}(ht, k, v'') \in \met{H} : \tryc_p \prec_{H}^{\mr} \tryc_x \prec_{H}^{\mr} \rvm_{ij} \rangle$. 		
	\end{enumerate}
	In this case, we denote $\tryc_{p}$ as the last update \mth{} of $\rvm_{ij}$, i.e., $\tryc_{p}(ht, k, v)$ $= \lupdt{\rvm_{ij}(ht, k, v)}{H}$. 
\end{enumerate}
%\vspace{-.3cm}
We assume that when a transaction $T_i$ operates on key $k$ of a \tab{} $ht$, the result of this \mth is stored in \emph{local logs} of $T_i$, $\llog_i$ for later \mth{s} to reuse. Thus, only the first \rvmt{} operating on $\langle ht, k \rangle$ of $T_i$ accesses the shared-memory. The other \rvmt{s} of $T_i$ operating on $\langle ht, k \rangle$ do not access the shared-memory and they see the effect of the previous \mth{} from the \emph{local logs}, $\llog_i$. This idea is utilized in LR1. With reference to LR2 and LR3, it is possible that $T_x$ could have aborted before $\rvm_{ij}$. For LR3, since we are assuming that transaction $T_0$ has invoked a \tdel{} \mth{} on all the keys used of the \tab{} objects, there exists at least one \tdel{} \mth{} for every \rvmt on $k$ of $ht$. We formally prove legality in \secref{cmvostm} and then we finally show that history generated by \hmvotm{} is \opq{} \cite{GuerKap:Opacity:PPoPP:2008}.

%This we also call \emph{conflict inheritance} as the conflict of the later \mth{} of $T_i$ operating on $\langle ht, k \rangle$ can be found using the conflicts of the first \mth{} of $T_i$.

Coming to \tins{} \mth{s}, since a \tins{} \mth{} always returns $ok$ as they overwrite the node if already present therefore they always take effect on the $ht$. Thus, we denote all \tins{} \mth{s} as \legal{} and only give legality definition for \rvmt{}. We denote a sequential history $H$ as \emph{\legal} or \emph{linearized} if all its \rvm \mth{s} are \legal.%A sequential history \emph{linearized} \cite{HerlWing:1990:TPLS}

%If $T_j$ invokes \rvmt on key $k_1$ from $T_i$ in $H$, note that in order to this to happen, $T_i$ must have closest committed before $T_j$ i.e. $c_i <_H \rvm_j(k_{1, i})$. %If $T_i$ is having \upmt{} as insert on the same key $k_1$ then $i_i(k_{1, i}, v_{i1}) <_H c_i <_H \rvm_j(k_{1, i}, v_{i1})$. If $T_i$ is having \upmt{} as delete on the same key $k_1$ then $d_i(k_{1, i}, nil_{i1}) <_H c_i <_H \rvm_j(k_{1, i}, nil_{i1})$. Such $\rvm_j$ is considered as legal.

\vspace{1mm}
\noindent
\textbf{Opacity:} It is a \emph{\ccs} for STMs \cite{GuerKap:Opacity:PPoPP:2008}. A sequential history $H$ is said to be \opq{} if there exists a serial history $S$ such that: (1) $S$ is equivalent to $\overline{H}$, i.e., $\evts{\overline{H}} = \evts{S}$ (2) $S$ is \legal{} and (3) $S$ respects the transactional real-time order of $H$, i.e., $\prec_H^{\tr} \subseteq \prec_S^{\tr}$. 

%\noindent
%\textbf{MV-permissiveness:} It is a progress condition for STMs by Perelman et al.\cite{perel+:2010:MultVer:PODC} which ensures read-only transactions will never return abort while maintaining multi-versions corresponding to each key.

%A valid sequential history $H$ is said to be \opq{} if there exists a serial history $S$ such that: (1) $S$ is equivalent to $\overline{H}$, i.e. , $\evts{\overline{H}} = \evts{S}$ (2) $S$ is \legal{} and (3) $S$ respects the transactional real-time order of $H$, i.e., $\prec_H^{\tr} \subseteq \prec_S^{\tr}$. 

\section{Graph Characterization of opacity}
\label{sec:gcofo}

To prove that a STM system satisfies opacity, it is useful to consider graph characterization of histories. In this section, we describe the graph characterization of Guerraoui and Kapalka \cite{tm-book} modified for sequential histories.

Consider a history $H$ which consists of multiple version for each \tobj. The graph characterization uses the notion of \textit{version order}. Given $H$ and a \tobj{} $k$, we define a version order for $k$ as any (non-reflexive) total order on all the versions of $k$ ever created by committed transactions in $H$. It must be noted that the version order may or may not be the same as the actual order in which the version of $k$ are generated in $H$. A version order of $H$, denoted as $\ll_H$ is the union of the version orders of all the \tobj{s} in $H$. 

Consider the history $H3$ as shown in \figref{mvostm3} $: lu_1(ht, k_{x, 0}, null), lu_2(ht, k_{x, 0}, null),\\ lu_1(ht, k_{y, 0}, null), lu_3(ht, k_{z, 0}, null), ins_1(ht, k_{x, 1}, v_{11}), ins_3(ht,
k_{y, 3}, v_{31}), ins_2(ht, \\k_{y, 2}, v_{21}), ins_1(ht, k_{z, 1}, v_{12}), c_1, c_2, lu_4(ht, k_{x, 1}, v_{11}), lu_4(ht, k_{y, 2}, v_{21}), ins_3(ht, k_{z, 3},\\ v_{32}), c_3, lu_4(ht, k_{z, 1}, v_{12}), lu_5(ht, k_{x, 1}, v_{11}), lu_6(ht, k_{y, 2}, v_{21}), c_4, c_5, c_6$. Using the notation that a committed transaction $T_i$ writing to $k_x$ creates a version $k_{x, i}$, a possible version order for $H3$ $\ll_{H3}$ is: $\langle k_{x, 0} \ll k_{x, 1} \rangle, \langle k_{y, 0} \ll k_{y, 2} \ll k_{y, 3} \rangle, \langle k_{z, 0} \ll k_{z, 1} \ll k_{z, 3} \rangle $.
\cmnt{
Consider the history $H4: l_1(ht, k_{x, 0}, NULL) l_2(ht, k_{x, 0}, NULL) l_1(ht, k_{y, 0}, NULL) l_3(ht, k_{z, 0},\\ NULL) i_1(ht, k_{x, 1}, v_{11}) i_3(ht, k_{y, 3}, v_{31}) i_2(ht, k_{y, 2}, v_{21}) i_1(ht, k_{z, 1}, v_{12}) c_1 c_2 l_4(ht, k_{x, 1}, v_{11}) l_4(ht,\\ k_{y, 2}, v_{21}) i_3(ht, k_{z, 3}, v_{32}) c_3 l_4(ht, k_{z, 1}, v_{12}) l_5(ht, k_{x, 1}, v_{11}), l_6(ht, k_{y, 2}, v_{21}) c_4, c_5, c_6$. In our representation, we abbreviate \tins{} as $i$, \tdel{} as $d$ and \tlook{} as $l$. Using the notation that a committed transaction $T_i$ writing to $k_x$ creates a version $k_{x, i}$, a possible version order for $H4$ $\ll_{H4}$ is: $\langle k_{x, 0} \ll k_{x, 1} \rangle, \langle k_{y, 0} \ll k_{y, 2} \ll k_{y, 3} \rangle, \langle k_{z, 0} \ll k_{z, 1} \ll k_{z, 3} \rangle $. 
}
%\vspace{-.3cm}
\begin{figure}
	%\includegraphics[scale=0.7]{figs/ex1.pdf_t}
	%\centerline{\scalebox{0.7}{\input{ex1.pstex_t}}}
	\centering
	\captionsetup{justification=centering}
	\centerline{\scalebox{0.45}{\input{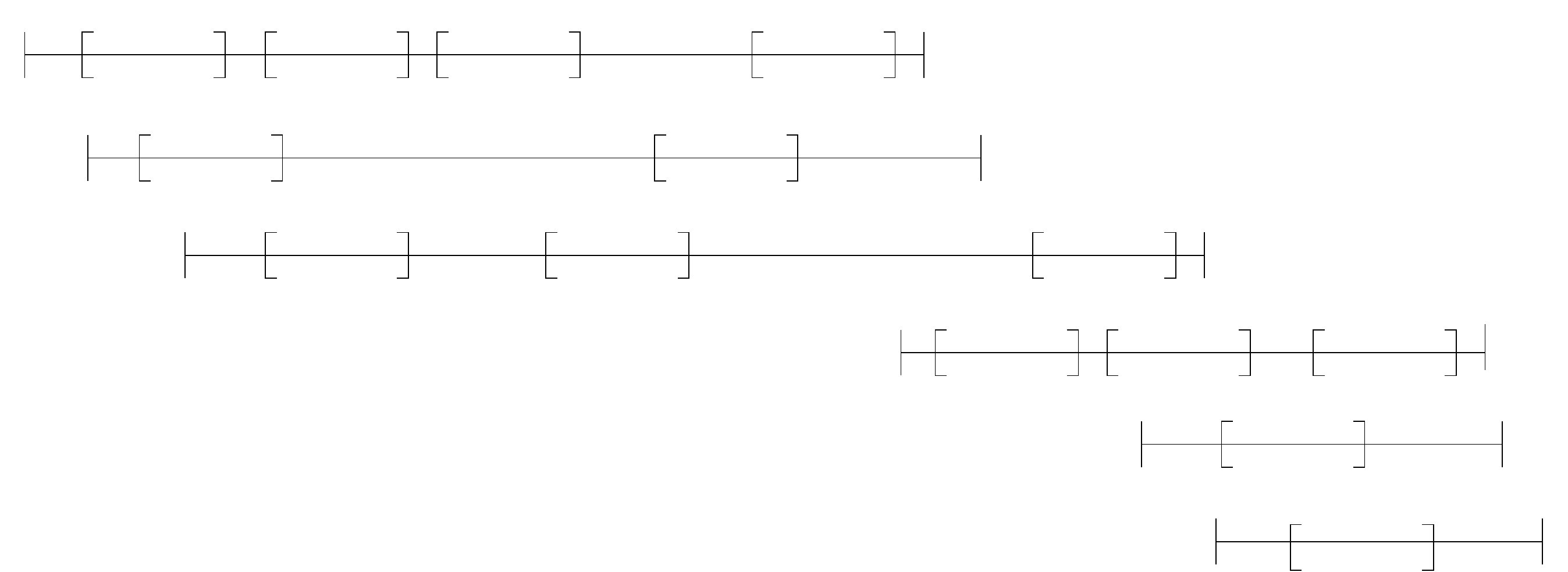_t}}}
	\caption{History $H3$ in time line view}
	\label{fig:mvostm3}
\end{figure}
%\vspace{-.3cm}
\cmnt{
\begin{figure}[tbph]
	%\includegraphics[scale=0.7]{figs/ex1.pdf_t}
	%\centerline{\scalebox{0.7}{\input{ex1.pstex_t}}}
	\centerline{\scalebox{0.45}{\input{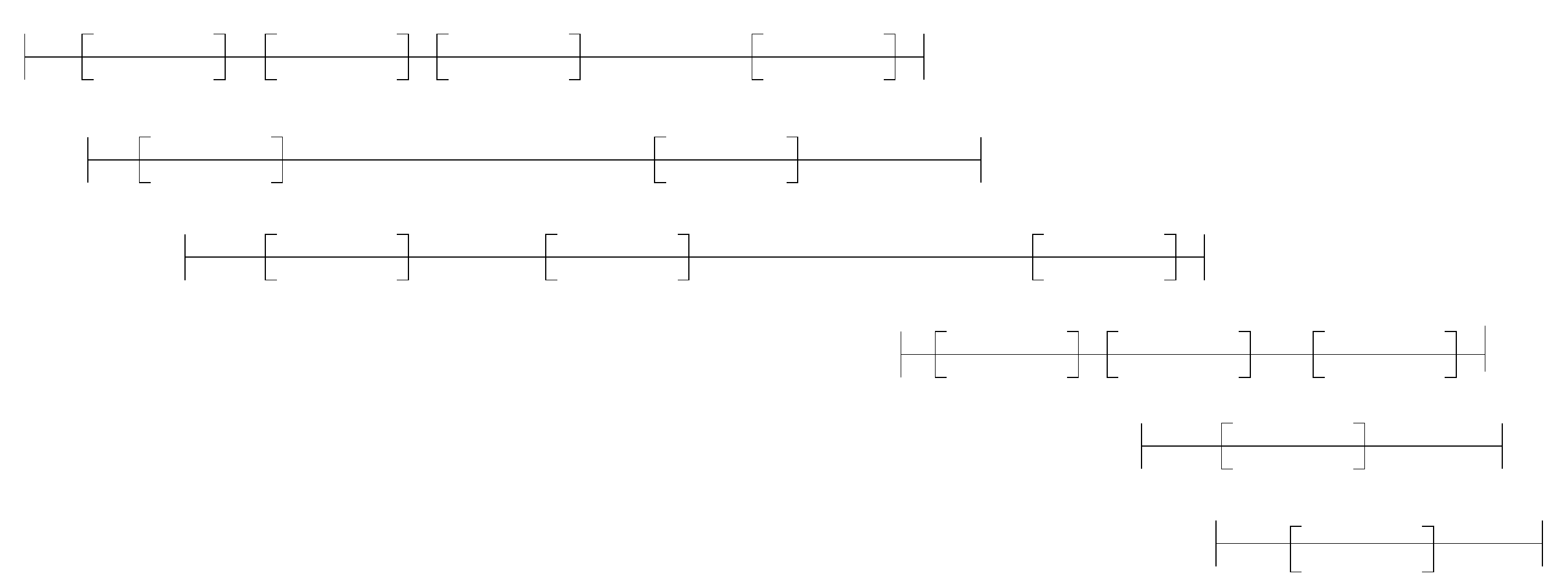_t}}}
	\caption{History $H4$ in time line view}
	\label{fig:mvostm3}
\end{figure}
}
We define the graph characterization based on a given version order. Consider a history $H$ and a version order $\ll$. We then define a graph (called opacity graph) on $H$ using $\ll$, denoted as $\opg{H}{\ll} = (V, E)$. The vertex set $V$ consists of a vertex for each transaction $T_i$ in $\overline{H}$. The edges of the graph are of three kinds and are defined as follows:
\begin{enumerate}
\setlength\itemsep{0em}
\item \textit{\rt}(real-time) edges: If commit of $T_i$ happens before beginning of  $T_j$ in $H$, then there exist a real-time edge from $v_i$ to $v_j$. We denote set of such edges as $\rt(H)$.
%\vspace{-.2cm}
\item \textit{\rvf}(return value-from) edges: If $T_j$ invokes \rvmt on key $k_1$ from $T_i$ which has already been committed in $H$, then there exist a return value-from edge from $v_i$ to $v_j$. If $T_i$ is having \upmt{} as insert on the same key $k_1$ then $ins_i(k_{1, i}, v_{i1}) <_H c_i <_H \rvm_j(k_{1, i}, v_{i1})$. If $T_i$ is having \upmt{} as delete on the same key $k_1$ then $del_i(k_{1, i}, null) <_H c_i <_H \rvm_j(k_{1, i}, null)$. We denote set of such edges as $\rvf(H)$.
%\vspace{-.2cm}
\item \textit{\mv}(multi-version) edges: This is based on version order. Consider a triplet with successful methods as  $\up_i(k_{1, i},u)$, $\rvm_j(k_{1, i},u)$, $\up_k(k_{1, k},v)$ , where $u \neq v$. As we can observe it from $\rvm_j(k_{1,i},u)$, $c_i <_H\rvm_j(k_{1,i},u)$. if $k_{1,i} \ll k_{1,k}$ then there exist a multi-version edge from $v_j$ to $v_k$. Otherwise ($k_{1,k} \ll k_{1,i}$), there exist a multi-version edge from $v_k$ to $v_i$. We denote set of such edges as $\mv(H, \ll)$.
\end{enumerate}
\cmnt{
\begin{enumerate}

\item \textit{\rt}(real-time) edges: If commit of $T_i$ happens before beginning of  $T_j$ in $H$, then there exist a real-time edge from $v_i$ to $v_j$. We denote set of such edges as $\rt(H)$.

\item \textit{\rvf}(return value-from) edges: If $T_j$ invokes \rvmt on key $k_1$ from $T_i$ which has already been committed in $H$, then there exist a return value-from edge from $v_i$ to $v_j$. If $T_i$ is having \upmt{} as insert on the same key $k_1$ then $i_i(k_{1, i}, v_{i1}) <_H c_i <_H \rvm_j(k_{1, i}, v_{i1})$. If $T_i$ is having \upmt{} as delete on the same key $k_1$ then $d_i(k_{1, i}, nil_{i1}) <_H c_i <_H \rvm_j(k_{1, i}, nil_{i1})$. We denote set of such edges as $\rvf(H)$.

\item \textit{\mv}(multi-version) edges: This is based on version order. Consider a triplet with successful methods as  $\up_i(k_{1,i},u)$ $\rvm_j(k_1,u)$ $\up_k(k_{1,k},v)$ , where $u \neq v$. As we can observe it from $\rvm_j(k_1,u)$, $c_i <_H\rvm_j(k_1,u)$. if $k_{1,i} \ll k_{1,k}$ then there exist a multi-version edge from $v_j$ to $v_k$. Otherwise ($k_{1,k} \ll k_{1,i}$), there exist a multi-version edge from $v_k$ to $v_i$. We denote set of such edges as $\mv(H, \ll)$.
\vspace{-.3cm}
\end{enumerate}
}
\noindent We now show that if a version order $\ll$ exists for a history $H$ such that it is acyclic, then $H$ is \opq. 

Using this construction, the $\opg{H3}{\ll_{H3}}$ for history $H3$ and $\ll_{H3}$ is given above is shown in \figref{mvostm1}. The edges are annotated. The only \mv{} edge from $T4$ to $T3$ is because of \tobj{s} $k_y, k_z$. $T4$ lookups value $v_{12}$ for $k_z$ from $T1$ whereas $T3$ also inserts $v_{32}$ to $k_z$ and commits before $lu_4(ht, k_{z,1}, v_{12})$. 

\begin{figure}[H]
%\includegraphics[scale=0.7]{figs/ex1.pdf_t}
%\centerline{\scalebox{0.7}{\input{ex1.pstex_t}}}
\centerline{\scalebox{0.7}{\input{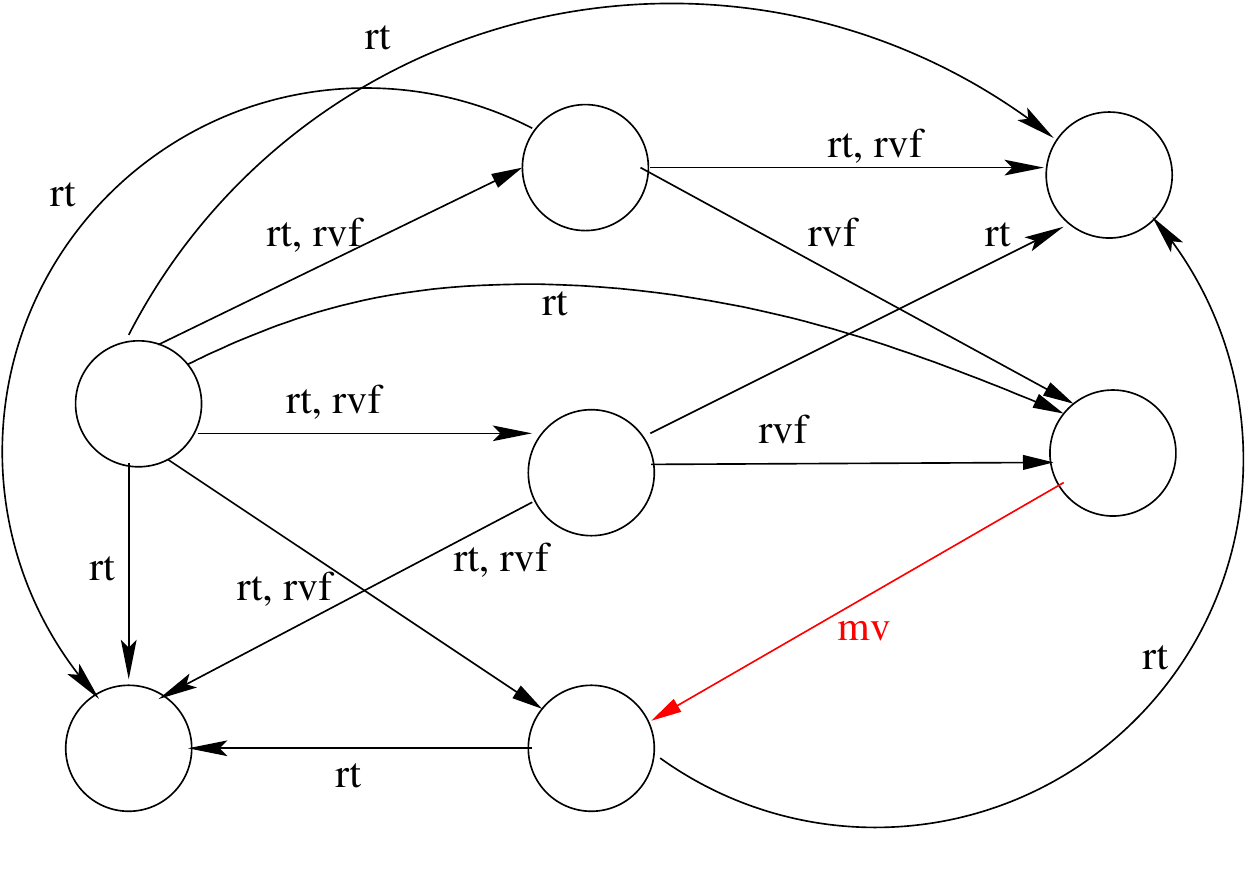_t}}}
\caption{$\opg{H3}{\ll_{H3}}$}
\label{fig:mvostm1}
\end{figure}

Given a history $H$ and a version order $\ll$, consider the graph $\opg{\overline{H}}{\ll}$. While considering the $\rt{}$ edges in this graph, we only consider the real-time relation of $H$ and not $\overline{H}$. It can be seen that $\prec_H^{RT} \subseteq \prec_{\overline{H}}^{RT}$ but with this assumption, $\rt(H) = \rt(\overline{H})$. Hence,  we get the following property, 

\begin{property}
\label{prop:hoverh}
The graphs $\opg{H}{\ll}$ and $\opg{\overline{H}}{\ll}$ are the same for any history $H$ and $\ll$. 
\end{property}
%~
\begin{definition}
\label{def:seqver}
For a \tseq{} history $S$, we define a version order $\ll_S$ as follows: For two version $k_{x,i}, k_{x,j}$ created by committed transactions $T_i, T_j$ in $S$, $\langle k_{x,i} \ll_S k_{x,j} \Leftrightarrow T_i <_S T_j \rangle $. 
\end{definition}
%~
%Now, consider the following lemmas,
Now we show the correctness of our graph characterization using the following lemmas and theorem. 

\begin{lemma}
\label{lem:seracycle}
Consider a \legal{} \tseq{} history $S$. Then the graph $\opg{S, \ll_S}$ is acyclic.
\end{lemma}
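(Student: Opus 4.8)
The plan is to show that every edge of $\opg{S, \ll_S}$ points \emph{forward} in the total order $<_S$ on transactions; that is, for every edge $T_a \to T_b$ of $\opg{S, \ll_S}$ I will prove $T_a <_S T_b$. Since $S$ is \tseq, $<_S$ is a total (hence irreflexive and transitive) order on $\txns{S}$, and a directed graph all of whose edges respect a total order cannot contain a cycle: a cycle $T_{a_1}\to\cdots\to T_{a_m}\to T_{a_1}$ would give $T_{a_1} <_S T_{a_1}$, a contradiction. So the whole proof reduces to checking the three edge types of \defref{seqver}'s graph against $<_S$.

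First, the real-time ($\rt$) edges. If $T_a \to T_b$ is an $\rt$ edge then $T_a$ commits before $T_b$ begins in $S$; because $S$ is serial, $<_S$ agrees with the transactional real-time order, so $T_a <_S T_b$. Second, the return-value-from ($\rvf$) edges. If $T_a \to T_b$ is an $\rvf$ edge then $T_b$ runs a return-value \mth{} on some key $k_1$ that reads the version created by the committed transaction $T_a$, and the edge definition requires $c_a$ to precede that read in $S$. Since $S$ is serial the two transactions do not interleave, so $c_a$ preceding $T_b$'s read forces $T_a <_S T_b$ (note $T_a \neq T_b$, as a transaction reads its own earlier writes from its local log, which is governed by LR1 and creates no $\rvf$ edge).

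The real work is the multi-version ($\mv$) edges. Consider the triplet $\up_i(k_{1,i},u)$, $\rvm_j(k_{1,i},u)$, $\up_k(k_{1,k},v)$ with $u \neq v$, where $T_j$ reads the version $k_{1,i}$ written by $T_i$. If $k_{1,k} \ll_S k_{1,i}$, the edge is $T_k \to T_i$, and by \defref{seqver} $k_{1,k} \ll_S k_{1,i}$ means exactly $T_k <_S T_i$; the edge respects $<_S$ immediately. If instead $k_{1,i} \ll_S k_{1,k}$, the edge is $T_j \to T_k$ and I must show $T_j <_S T_k$. By \defref{seqver}, $k_{1,i} \ll_S k_{1,k}$ gives $T_i <_S T_k$. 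Suppose for contradiction $T_k <_S T_j$; then $T_i <_S T_k <_S T_j$, so $T_k$ is a committed transaction updating $k_1$ strictly between $T_i$ and the read $\rvm_j$. But legality of $S$ (the ``no intervening committed update'' clause of LR2/LR3) requires $\rvm_j$ to return the value of the latest committed update before it, namely $T_k$'s value $v$, contradicting that it returns $u \neq v$. Hence $T_j <_S T_k$.

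Having verified that $\rt$, $\rvf$, and $\mv$ edges all respect $<_S$, the acyclicity of $\opg{S, \ll_S}$ follows from totality of $<_S$ as argued above. I expect the single delicate point to be the $\mv$ subcase $k_{1,i} \ll_S k_{1,k}$: this is precisely where \legality{} is indispensable, since it is the only ingredient that excludes the order $T_i <_S T_k <_S T_j$. Every other edge follows almost directly from the definitions of $\ll_S$, serial real-time order, and legality.
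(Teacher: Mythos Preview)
Your proof is correct and follows essentially the same approach as the paper's: both show that every edge of $\opg{S,\ll_S}$ respects the serial order $<_S$ (the paper introduces an equivalent numerical function $\ordfn$), and both use \legality{} in exactly the same place, namely the $\mv$ subcase $k_{1,i}\ll_S k_{1,k}$, to rule out $T_i <_S T_k <_S T_j$. Your write-up is slightly more careful in justifying $T_a\neq T_b$ for the $\rvf$ case, but the argument is otherwise identical.
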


\begin{proof}
We numerically order all the transactions in $S$ by their real-time order by using a function \textit{\ordfn}. For two transactions $T_i, T_j$, we define $\ord{T_i} < \ord{T_j} \Leftrightarrow T_i <_S T_j$. Let us analyze the edges of $\opg{S, \ll_S}$ one by one: 
\begin{itemize}
\item \rt{} edges: It can be seen that all the \rt{} edges go from a lower \ordfn{} transaction to a higher \ordfn{} transaction. 

\item \rvf{} edges: If $T_j$ lookups $k_x$ from $T_i$ in $S$ then $T_i$ is a committed transaction with $\ord{T_i} < \ord{T_j}$. Thus, all the \rvf{} edges from a lower \ordfn{} transaction to a higher \ordfn{} transaction.

\item \mv{} edges: Consider a successful \rvmt{} $\rvm_j(k_x, u)$ and a committed transaction $T_k$ writing $v$ to $k_x$ where $u \neq v$. Let $c_i$ be $\rvm_j(k_x, u)$'s \lastw. Thus, $\up_i(k_{x,i}, u) \in \evts{T_i}$. Thus, we have that $\ord{T_i} < \ord{T_j}$. Now there are two cases w.r.t $T_i$: (1) Suppose $\ord{T_k} < \ord{T_i}$. We now have that $T_k \ll T_i$. In this case, the mv edge is from $T_k$ to $T_i$. (2) Suppose $\ord{T_i} < \ord{T_k}$ which implies that $T_i \ll T_k$. Since $S$ is legal, we get that $\ord{T_j} < \ord{T_k}$. This case also implies that there is an edge from $\ord{T_j}$ to $\ord{T_k}$. Hence, in this case as well the \mv{} edges go from a transaction with lower \ordfn{} to a transaction with higher \ordfn{}. 

%\item \mv{} edges: Consider a successful \rvmt{} $\rvm_k(k_x, v)$ and a committed transaction $T_i$ writing $u$ to $k_x$ where $u \neq v$. Let $c_j$ be $\rvm_k(k_x, v)$'s \lastw. Thus, $\up_j(k_{x,j}, v) \in \evts{T_j}$. Thus, we have that $\ord{T_j} < \ord{T_k}$. Now there are two cases w.r.t $T_i$: (1) Suppose $\ord{T_i} < \ord{T_j}$. We now have that $T_i \ll T_j$. In this case, the mv edge is from $T_i$ to $T_j$. (2) Suppose $\ord{T_j} < \ord{T_i}$ which implies that $T_j \ll T_i$. Since $S$ is legal, we get that $\ord{T_k} < \ord{T_i}$. This case also implies that there is an edge from $\ord{T_k}$ to $\ord{T_i}$. Hence, in this case as well the \mv{} edges go from a transaction with lower \ordfn{} to a transaction with higher \ordfn{}. 
\end{itemize}

Thus, in all the three cases the edges go from a lower \ordfn{} transaction to higher \ordfn{} transaction. This implies that the graph is acyclic. 
\end{proof}

\begin{lemma}
\label{lem:eqv_hist_mvorder}
Consider two histories $H, H'$ that are equivalent to each other. Consider a version order $\ll_H$ on the \tobj{s} created by $H$. The mv edges $\mv(H, \ll_H)$ induced by $\ll_H$ are the same in $H$ and $H'$.
\end{lemma}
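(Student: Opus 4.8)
The plan is to observe that an \mv{} edge is completely determined by two pieces of data---the collection of methods present in the history and the fixed version order $\ll_H$---and that neither of these is affected by passing from $H$ to an equivalent history $H'$. Since equivalence means $\evts{H} = \evts{H'}$, and hence $\met{H} = \met{H'}$, and since the \emph{same} version order $\ll_H$ is used to build both graphs, the two \mv{} edge sets are forced to coincide. The real-time order $<_H$ plays no role in the definition of an \mv{} edge, which is precisely why equivalence (rather than the stronger notion of identical ordering) suffices.

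First I would recall the definition of an \mv{} edge: such an edge is generated by a triplet of successful methods $\up_i(k_{1,i}, u)$, $\rvm_j(k_{1,i}, u)$, $\up_k(k_{1,k}, v)$ with $u \neq v$, and its direction (from $v_j$ to $v_k$, or from $v_k$ to $v_i$) is decided solely by whether $k_{1,i} \ll_H k_{1,k}$ or $k_{1,k} \ll_H k_{1,i}$. The crucial point to emphasize is that every ingredient here is either a method---identified by its transaction index, the version it writes or reads, and its return value, all of which are encoded in the event itself---or a comparison in the fixed version order $\ll_H$. In particular, the ``reads-from'' matching between $\rvm_j$ and the version $k_{1,i}$ it returns is recorded inside the event $\rvm_j(k_{1,i}, u)$ and is therefore order-independent.

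Then I would carry out the set-equality argument directly. Take any \mv{} edge in $\mv(H, \ll_H)$; it arises from some triplet as above. Because $\met{H} = \met{H'}$, exactly these three methods occur in $H'$, so the identical triplet (with the same $u \neq v$) is available there; and because the version order $\ll_H$ is unchanged, the comparison between $k_{1,i}$ and $k_{1,k}$ yields the same direction. Hence the very same edge lies in $\mv(H', \ll_H)$. The reverse inclusion is symmetric, giving $\mv(H, \ll_H) = \mv(H', \ll_H)$.

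The one place that needs care---and the main obstacle---is the clause ``$c_i <_H \rvm_j(k_{1,i}, u)$'' appearing in the definition, which superficially looks like a dependence on the real-time order $<_H$ and could therefore differ between $H$ and $H'$. I would argue that this is merely an observation forced by the return-value-from structure rather than an independent side condition on the edge: the fact that $\rvm_j$ returns the value $u$ belonging to version $k_{1,i}$ (created by the committed transaction $T_i$) is already recorded in the event, so the relationship is identical in both histories regardless of how the events are interleaved. Consequently this clause contributes nothing that could distinguish $H$ from $H'$, and once it is dispatched the proof reduces to spelling out that the defining data of an \mv{} edge is order-independent.
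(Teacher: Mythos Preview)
Your proposal is correct and follows essentially the same approach as the paper: both argue that \mv{} edges depend only on the set of events and the version order $\ll_H$, not on the ordering of events within the history, so equivalence of $H$ and $H'$ forces the \mv{} edge sets to coincide. Your treatment is in fact more careful than the paper's two-line proof, since you explicitly address why the clause ``$c_i <_H \rvm_j(k_{1,i},u)$'' in the definition is an observation derived from the return-value-from relation rather than an independent order-dependent side condition---a point the paper simply glosses over.
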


\begin{proof}
Since the histories are equivalent to each other, the version order $\ll_H$ is applicable to both of them. It can be seen that the \mv{} edges depend only on events of the history and version order $\ll$. It does not depend on the ordering of the events in $H$. Hence, the \mv{} edges of $H$ and $H'$ are equivalent to each other. 
\end{proof}

\noindent Using these lemmas, we prove the following theorem.

\begin{theorem}
\label{thm:opg}
A \valid{} history H is opaque iff there exists a version order $\ll_H$ such that $\opg{H}{\ll_H}$ is acyclic.
\end{theorem}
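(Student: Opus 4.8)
The plan is to prove the two directions of the biconditional separately, leaning on the three auxiliary results already in hand: \lemref{seracycle} (a \legal{} \tseq{} history has an acyclic graph under its canonical version order), \lemref{eqv_hist_mvorder} (equivalent histories induce identical \mv{} edges), and \propref{hoverh} (passing to the completion $\overline{H}$ leaves the graph unchanged). The version order produced by \defref{seqver} is the bridge between the serial witness demanded by opacity and the graph.

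For the forward direction (opacity $\Rightarrow$ an acyclic version order exists), I would assume $H$ is \opq, so by definition there is a \legal{} \serial{} history $S$ equivalent to $\overline{H}$ that respects $\prec_H^{\tr}$. Taking $\ll_H := \ll_S$ from \defref{seqver}, I would show $\opg{\overline{H}}{\ll_H}$ is a subgraph of $\opg{S}{\ll_S}$, one edge type at a time: the \rt{} edges of $\overline{H}$ are contained in those of $S$ because $S$ respects the real-time order of $H$; the \rvf{} edges coincide because $S$ and $\overline{H}$ are equivalent and $S$ is legal, so every \rvmt{} reads from the same transaction in both; and the \mv{} edges coincide by \lemref{eqv_hist_mvorder}. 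Since $\opg{S}{\ll_S}$ is acyclic by \lemref{seracycle}, the subgraph is acyclic, and \propref{hoverh} transfers this to $\opg{H}{\ll_H}$.

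For the reverse direction (acyclic version order $\Rightarrow$ opacity), suppose $\opg{H}{\ll_H}$ is acyclic for some $\ll_H$. By \propref{hoverh} the same holds for $\opg{\overline{H}}{\ll_H}$, so I can topologically sort its vertices and read off a total order on the transactions of $\overline{H}$; arranging those transactions in this order yields a \serial{} history $S$ that is, by construction, equivalent to $\overline{H}$. Every \rt{} edge is respected by the topological sort, so $S$ respects $\prec_H^{\tr}$. The remaining and principal obligation is to verify that $S$ is \legal{}.

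The legality argument is where the real work lies and is the step I expect to be the main obstacle. Fix a \rvmt{} $\rvm_j(k_{1,i}, u)$ in $S$ returning a value $u$ created by a committed $T_i$; the \rvf{} edge $v_i \to v_j$ forces $T_i$ before $T_j$ in $S$. I would argue by contradiction: if $S$ were illegal at this method, some committed $T_k$ writing a different value $v$ to $k_1$ would lie strictly between, i.e.\ $T_i <_S T_k <_S T_j$. But the \mv{} edge for the triple $\up_i(k_{1,i},u)$, $\rvm_j(k_{1,i},u)$, $\up_k(k_{1,k},v)$ points from $v_j$ to $v_k$ when $k_{1,i} \ll_H k_{1,k}$ and from $v_k$ to $v_i$ when $k_{1,k} \ll_H k_{1,i}$; in the first case the topological order forces $T_j$ before $T_k$, in the second $T_k$ before $T_i$, each contradicting $T_i <_S T_k <_S T_j$. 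Hence no such $T_k$ exists, every \rvmt{} returns the value of the latest preceding writer in $S$, and $S$ is legal, so $S$ witnesses opacity. The delicate points to get right are the role of the initializing transaction $T_0$ (guaranteeing that a read returning $null$ always has a legal writer, matching LR3), and confirming that the version order induced by the constructed serial order agrees with $\ll_H$ on the keys involved, which is precisely what the direction of the \mv{} edges secures.
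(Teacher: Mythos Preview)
Your proposal is correct and follows essentially the same approach as the paper's proof: the forward direction shows $\opg{\overline{H}}{\ll_S}$ is a subgraph of $\opg{S}{\ll_S}$ edge-type by edge-type and invokes \lemref{seracycle}, while the reverse direction topologically sorts the acyclic graph, reads off a serial $S$, and verifies legality by the same \mv-edge contradiction you describe. The paper's argument is slightly terser and does not dwell on the $T_0$/LR3 issue you flag, but the structure and the key lemmas used are identical.
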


\begin{proof}
\textbf{(if part):} Here we have a version order $\ll_H$ such that $G_H=\opg{H}{\ll}$ is acyclic. Now we have to show that $H$ is opaque. Since the $G_H$ is acyclic, a topological sort can be obtained on all the vertices of $G_H$. Using the topological sort, we can generate a \tseq{} history $S$. It can be seen that $S$ is equivalent to $\overline{H}$. Since $S$ is obtained by a topological sort on $G_H$ which maintains the real-time edges of $H$, it can be seen that $S$ respects the \rt{} order of $H$, i.e $\prec_H^{RT} \subseteq \prec_S^{RT}$. 

Similarly, since $G_H$ maintains return value-from (\rvf{}) order of $H$, it can be seen that if $T_j$ lookups $k_x$ from $T_i$ in $H$ then $T_i$ terminates before $lu_j(k_x)$ and $T_j$ in $S$. Thus, $S$ is \valid. Now it remains to be shown that $S$ is \legal. We prove this using contradiction. Assume that $S$ is not legal. Thus, there is a successful \rvmt{} $\rvm_j(k_x, u)$ such that its \lastw{} in $S$ is $c_k$ and $T_k$ updates value $v (\neq u)$ to $k_x$, i.e $\up_k(k_{x,k}, v) \in \evts{T_k}$. Further, we also have that there is a transaction $T_i$ that insert $u$ to $k_x$, i.e $\up_i(k_{x,i}, u) \in \evts{T_i}$. Since $S$ is \valid, as shown above, we have that $T_i \prec_{S}^{RT} T_k \prec_{S}^{RT} T_j$.

Now in $\ll_H$, if $k_{x,k} \ll_H k_{x,i}$ then there is an edge from $T_k$ to $T_i$ in $G_H$. Otherwise ($k_{x,i} \ll_H k_{x,k}$), there is an edge from $T_j$ to $T_k$. Thus in either case $T_k$ can not be in between $T_i$ and $T_j$ in $S$ contradicting our assumption. This shows that $S$ is legal.

%Similarly, since $G_H$ maintains return value-from (\rvf{}) order of $H$, it can be seen that if $T_j$ lookups $k_x$ from $T_i$ in $H$ then $T_i$ terminates before $lu_j(k_x)$ and $T_j$ in $S$. Thus, $S$ is \valid. Now it remains to be shown that $S$ is \legal. We prove this using contradiction. Assume that $S$ is not legal. Thus, there is a successful \rvmt{} $\rvm_k(k_x, v)$ such that its \lastw{} in $S$ is $c_i$ and $T_i$ updates value $u (\neq v)$ to $k_x$, i.e $\up_i(k_{x,i}, u) \in \evts{T_i}$. Further, we also have that there is a transaction $T_j$ that insert $v$ to $k_x$, i.e $\up_j(k_{x,j}, v) \in \evts{T_j}$. Since $S$ is \valid, as shown above, we have that $T_j \prec_{S}^{RT} T_i \prec_{S}^{RT} T_k$.

%Now in $\ll_H$, if $k_{x,i} \ll_H k_{x,j}$ then there is an edge from $T_i$ to $T_j$ in $G_H$. Otherwise ($k_{x,j} \ll_H k_{x,i}$), there is an edge from $T_k$ to $T_i$. Thus in either case $T_i$ can not be in between $T_j$ and $T_k$ in $S$ contradicting our assumption. This shows that $S$ is legal. 

\textbf{(Only if part):} Here we are given that $H$ is opaque and we have to show that there exists a version order $\ll$ such that $G_H=\opg{H}{\ll} (=\opg{\overline{H}}{\ll}$, \propref{hoverh}) is acyclic. Since $H$ is opaque there exists a \legal{} \tseq{} history $S$ equivalent to $\overline{H}$ such that it respects real-time order of $H$. Now, we define a version order for $S$, $\ll_S$ as in \defref{seqver}. Since the $S$ is equivalent to $\overline{H}$, $\ll_S$ is applicable to $\overline{H}$ as well. From \lemref{seracycle}, we get that $G_S=\opg{S}{\ll_S}$ is acyclic. Now consider $G_H = \opg{\overline{H}}{\ll_S}$. The vertices of $G_H$ are the same as $G_S$. Coming to the edges, 

\begin{itemize}
\item \rt{} edges: We have that $S$ respects real-time order of $H$, i.e $\prec_{H}^{RT} \subseteq \prec_{S}^{RT}$. Hence, all the \rt{} edges of $H$ are a subset of $S$. 

\item \rvf{} edges: Since $\overline{H}$ and $S$ are equivalent, the return value-from relation of $\overline{H}$ and $S$ are the same. Hence, the \rvf{} edges are the same in $G_H$ and $G_S$. 

\item \mv{} edges: Since the version-order and the \op{s} of the $H$ and $S$ are the same, from \lemref{eqv_hist_mvorder} it can be seen that $\overline{H}$ and $S$ have the same \mv{} edges as well.
\end{itemize}

Thus, the graph $G_H$ is a subgraph of $G_S$. Since we already know that $G_S$ is acyclic from \lemref{seracycle}, we get that $G_H$ is also acyclic. 
\end{proof}

\section{\hmvotm Design and Data Structure}
\label{sec:mvdesign}

\hmvotm is a \tab based \mvotm that explores the idea of using multiple versions in \otm{s} for \tab object to achieve greater concurrency. The design of \hmvotm is similar to \hotm{} \cite{Peri+:OSTM:Netys:2018} consisting of $B$ buckets. All the keys of the \tab in the range $\mathcal{K}$ are statically allocated to one of these buckets. 

Each bucket consists of linked-list of nodes along with two sentinel nodes \emph{head} and \emph{tail} with values -$\infty$ and +$\infty$ respectively. The structure of each node is as $\langle key, ~ lock, ~ \\marked, ~ vl, ~ nnext \rangle$. The $key$ is a unique value from the set of all keys $\mathcal{K}$. All the nodes are stored in increasing order in each bucket as shown in \figref{1mvostmdesign} (a), similar to any linked-list based concurrent set implementation \cite{Heller+:LazyList:PPL:2007, Harris:NBList:DISC:2001}. In the rest of the document, we use the terms key and node interchangeably. To perform any operation on a key, the corresponding $lock$ is acquired. $marked$ is a boolean field which represents whether the key is deleted or not. The deletion is performed in a lazy manner similar to the concurrent linked-lists structure \cite{Heller+:LazyList:PPL:2007}. If the $marked$ field is true then key corresponding to the node has been logically deleted; otherwise, it is present. The $vl$ field of the node points to the version list (shown in \figref{1mvostmdesign} (b)) which stores multiple versions corresponding to the key. The last field of the node is $nnext$ which stores the address of the next node. It can be seen that the list of keys in a bucket is as an extension of \emph{\lazy} \cite{Heller+:LazyList:PPL:2007}. Given a node $n$ in the linked-list of bucket $B$, we denote its fields as $n.key(k.key), ~ n.lock(k.lock), ~ n.marked(k.marked), ~ n.vl(k.vl), ~ n.nnext(k.nnext)$.

%\hmvotm is a \tab based \mvotm that explores the idea of using multiple versions in \otm{s} for \tab object to achieve greater concurrency. The design of \hmvotm is similar as \hotm{}\cite{Peri+:OSTM:Netys:2018} consisting of $m$ buckets. Each bucket consists of linked-list of nodes along with two sentinel nodes head and tail with values -$\infty$ and +$\infty$ respectively. The structure of each node is as $\langle key, ~ lock, ~ marked, ~ vl, ~ nnext \rangle$. The $key$ $\langle k_1, k_2, ...k_n \rangle$ are unique and store in increasing order in each bucket refer \figref{1mvostmdesign} (a). To perform any operation on the corresponding key $lock$ should be acquired. $marked$ is a Boolean field which represents the key is deleted or not. Here, we perform lazy deletion which is popular in literature lazy-lists \cite{Heller+:LazyList:PPL:2007, Herlihy:ArtBook:2012}. If the $marked$ field is true then key corresponding to the node has been logically deleted otherwise it's present. The $vl$ field of node points to a list as version list to store multiple version corresponding to each key that reduces the number of aborts and provides greater concurrency. The last field of the node is $nnext$ which stores the address of the next node. 

%that addresses a concurrent \textit{hash-table} as a shared data structure 

%\vspace{-.3cm}
\cmnt{
\begin{figure}
	\centering
	\captionsetup{justification=centering}
	\centerline{\scalebox{0.47}{\input{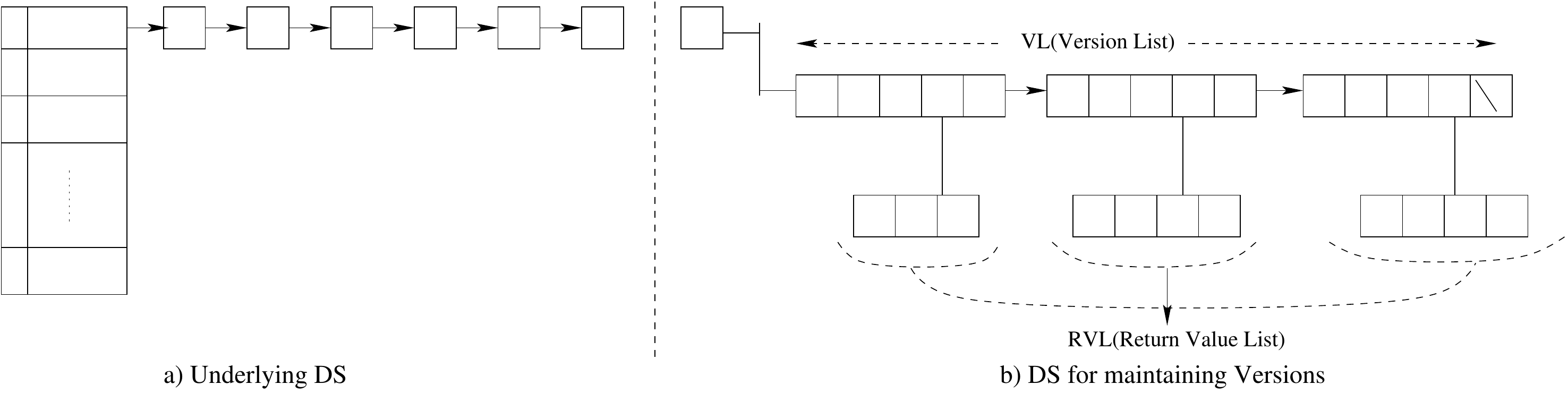_t}}}
	\caption{\htmvotm design}
	\label{fig:1mvostmdesign1}
\end{figure}
%\vspace{-.3cm}
}
%\vspace{-.7cm}
\begin{figure}
	\centering
	\captionsetup{justification=centering}
	\centerline{\scalebox{0.4}{\input{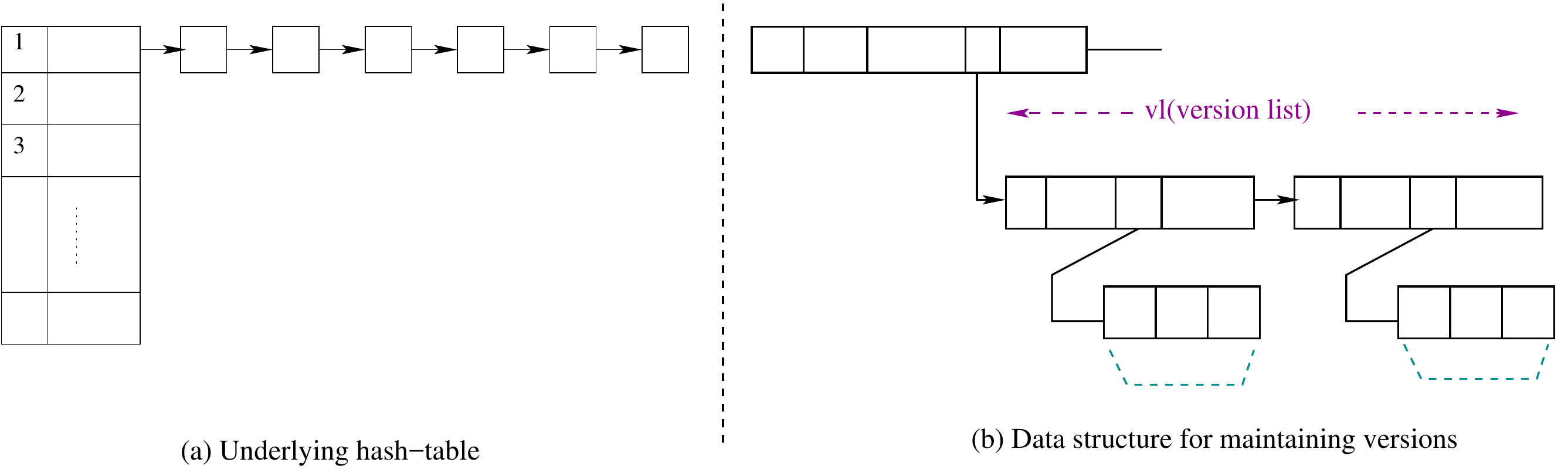_t}}}
	\caption{\emph{HT-MVOSTM} design}
	\label{fig:1mvostmdesign}
\end{figure}

%\vspace{-.7cm}
The structure of each version in the $vl$ of a key $k$ is $\langle ts, ~ val, ~ rvl, ~ vnext \rangle$ as shown in \figref{1mvostmdesign} (b). The field $ts$ denotes the unique timestamp of the version. In our algorithm, every transaction is assigned a unique timestamp when it begins which is also its $id$. Thus $ts$ of this version is the timestamp of the transaction that created it. All the versions in the $vl$ of $k$ are sorted by $ts$. Since the timestamps are unique, we denote a version, $ver$ of a node $n$ with key $k$ having $ts$ $j$ as $n.vl[j].ver$ or $k.vl[j].ver$. The corresponding fields in the version as $k.vl[j].ts, ~ k.vl[j].val, ~ k.vl[j].rvl, ~ k.vl[j].vnext$. 

The field $val$ contains the value updated by an update transaction. If this version is created by an insert \mth $\tins_i(ht, k, v)$ by transaction $T_i$, then $val$ will be $v$. On the other hand, if the \mth is $\tdel_i(ht, k)$ with the return value $v$, then $val$ will be $null$. In this case, as per the algorithm, the node of key $k$ will also be marked. \hmvotm algorithm does not immediately physically remove deleted keys from the \tab. The need for this is explained below. Thus a \rvmt (\tdel or \tlook) on key $k$ can return $null$ when it does not find the key or encounters a $null$ value for $k$. 

The $rvl$ field stands for \emph{return value list} which is a list of all the transactions that executed \rvmt{} on this version, i.e., those transactions which returned $val$. The field $vnext$ points to the next available version of that key. 

Number of versions in $vl$ (the length of the list) as per \hmvotm can be bounded or unbounded. It can be bounded by having a limit on the number of versions such as $K$. Whenever a new version $ver$ is created and is about to be added to $vl$, the length of $vl$ is checked. If the length becomes greater than $K$, the version with lowest $ts$ (i.e., the oldest) is replaced with the new version $ver$ and thus maintaining the length back to $K$. If the length is unbounded, then we need a garbage collection scheme to delete unwanted versions for efficiency. 

\ignore{
%The need of adding the $ts$ field in the version list to identify the cyclic conflict which violates \opty{} \cite{GuerKap:Opacity:PPoPP:2008}. %This is shown by the graph characterization in \secref{conflicts} of \figref{1mvostm1} if any cyclic conflict will occur then the generated history leads to non \opq{} history.    
%	\vspace{-.3cm}

	\begin{figure}
	\captionsetup{justification=centering}
		%\includegraphics[scale=0.7]{figs/ex1.pdf_t}
		%\centerline{\scalebox{0.7}{\input{ex1.pstex_t}}}
		\centerline{\scalebox{0.40}{\input{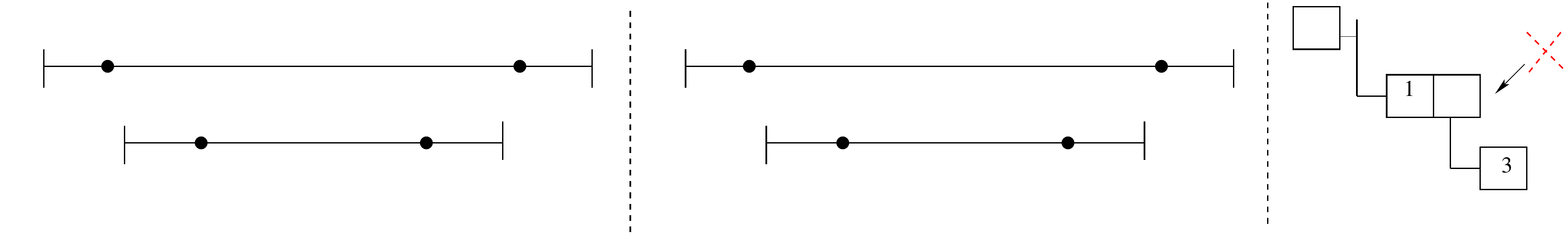_t}}}
		\caption{Need of maintaining deleted node in underlying DS to satisfy opacity}
		\label{fig:mvostm81}
	\end{figure}	
%	\vspace{-.3cm}

\textbf{Why do we need to store the deleted node?} This will be clear by the \figref{mvostm81}, where we have two concurrent transactions $T_2$ and $T_3$. History in the \figref{mvostm81} (a) is not opaque because we can't come up with any serial order. To make it serial (or opaque) the second method $ins_2(ht, k_{3,2})$ of transaction $T_2$ has to return abort. For that \hmvotm scheduler have to keep the information about the already executed conflicting method $del_3(ht, k_{3,0}, NULL)$ of transaction $T_3$ using $ts$. Therefore, $del_3(ht, k_{3,0}, NULL)$ of transaction $T_3$ add itself into $T_1.rvl$ (assume that transaction $T_1$ has already inserted a version on key $k_3$ refer \figref{mvostm81} (c). So in future if any lower time-stamp transaction less than $T_3$ will come then that lower transaction will ABORT (in this case transaction $T_2$ is aborting in (\figref{mvostm81} (b)) because higher time-stamp already present in the $rvl$ (\figref{mvostm81} (c)) of the same version. After aborting $T_2$ we will get the equivalent serial history $T_2$ followed by $T_3$ %\todo{Change according to updated fig,}.
}

\vspace{1mm}
\noindent
\textbf{Marked Nodes:} \hmvotm stores keys even after they have been deleted (nodes which have $marked$ field as true). This is because some other concurrent transactions could read from a different version of this key and not the $null$ value inserted by the deleting transaction. Consider for instance the transaction $T_1$ performing $\tlook(ht, k)$ as shown in \figref{pop} (b). Due to the presence of previous version $v_0$, \hmvotm could return this earlier version $v_0$ for $\tlook(ht, k)$ \mth. Whereas, it is not possible for \hotm to return the version $v_0$  because $k$ has been removed from the system after the delete by $T_2$. In that case, $T_1$ would have to be aborted. Thus as explained in \secref{intro}, storing multiple versions increases the concurrency. 

To store deleted keys along with live keys (or unmarked node) in a \lazy will increase the traversal time to access unmarked nodes. Consider the \figref{nostm2}, in which there are four keys $\langle k_5, k_8, k_9, k_{12}\rangle$ present in the list. Here $\langle k_5, k_8, k_9 \rangle$ are marked (or deleted) nodes while $k_{12}$ is unmarked. Now, consider an access the key $k_{12}$ as by \hmvotm as a part of one of its methods. Then \hmvotm would have to unnecessarily traverse the marked nodes to reach key $k_{12}$. %So, this motivates us to maintain a data structure \lsl{} corresponding to each bucket of \tab{} \cite{DBLP:journals/corr/abs-1709-00681}. Along with it also stores version list corresponding to each key that reduces number of aborts. 

%corresponding to each bucket of \tab{} where each key also maintains the version list to reduces number of aborts.  
%\vspace{-.7cm}
\begin{figure}
	\centering
	\begin{minipage}[b]{0.49\textwidth}
		\centering
		\captionsetup{justification=centering}
		\scalebox{.38}{\input{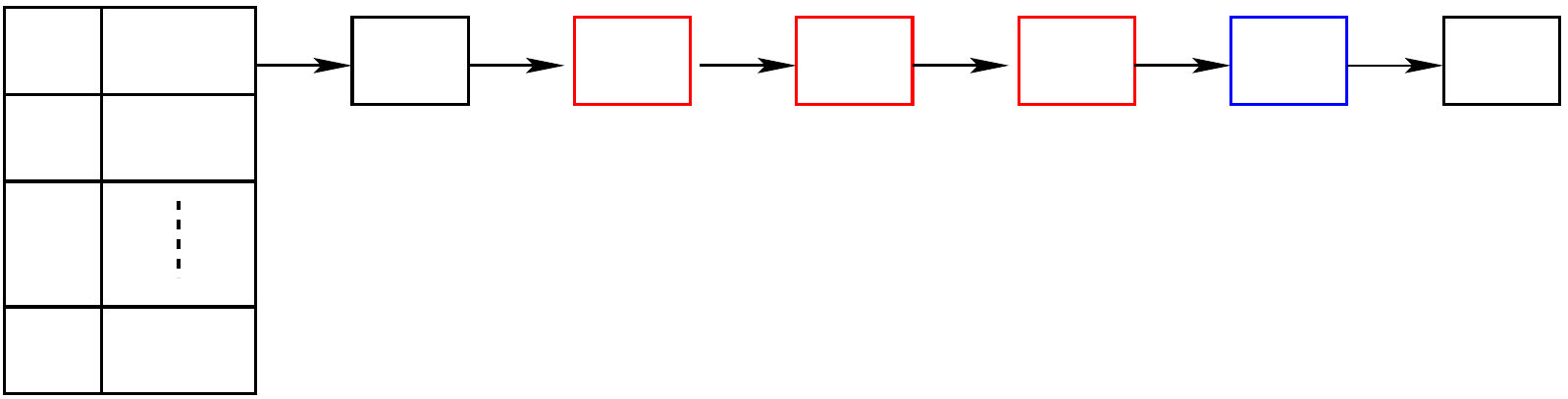_t}}
		% \vspace{-16cm}
		\caption{Searching $k_{12}$ over \emph{lazy-list}}
		\label{fig:nostm2}
	\end{minipage}   
	\hfill
	\begin{minipage}[b]{0.49\textwidth}
		\scalebox{.38}{\input{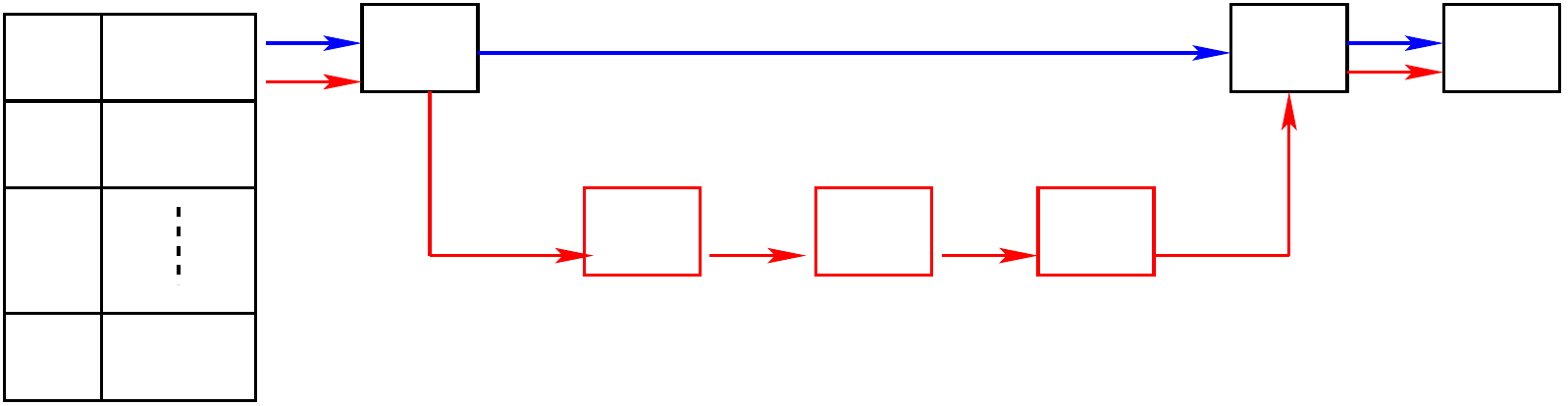_t}}
		\centering
		\captionsetup{justification=centering}
		% \vspace{-16cm}
		\caption{Searching $k_{12}$ over $\lsl{}$}
		\label{fig:nostm}
	\end{minipage}
\end{figure}
%\vspace{-.7cm}
This motivated us to modify the \lazy structure of nodes in each bucket to form a skip list based on red and blue links. We denote it as \emph{red-blue lazy-list} or \emph{\lsl}. This idea was earlier explored by Peri et al. in developing \otm{s} \cite{Peri+:OSTM:Netys:2018}. $\lsl{}$ consists of nodes with two links, red link (or \rn) and blue link (or \bn). The node which are not marked (or not deleted) are accessible from the head via \bn{}. While all the nodes including the marked ones can be accessed from the head via \rn. With this modification, let us consider the above example of accessing unmarked key $k_{12}$. It can be seen that $k_{12}$ can be accessed much more quickly through \bn as shown in \figref{nostm}. Using the idea of $\lsl{}$, we have modified the structure of each node as \emph{$\langle$ key, lock, marked, vl, \rn, \bn $\rangle$}. Further, for a bucket $B$, we denote its linked-list as $B.\lsl$. 

\ignore{

Now, if node corresponding to the key doesn't exist in the underlying DS then \textbf{How we will maintain the node time-stamp by \rvmt{}?} This case will occur when node corresponding to the key is not present in \bn{} as well as \rn{}. Then \rvmt{} will create the node corresponding to the key in \rn{} with marked field as true and append the $0^{th}$ version in its $vl$ and add itself into $0^{th}.rvl$. Inserting the $0^{th}$ version ensures that the transactions which contain only \rvmt{s} will never abort. 
%\vspace{-.7cm}
\begin{figure}
	\centering
	\captionsetup{justification=centering}
	\scalebox{.4}{\input{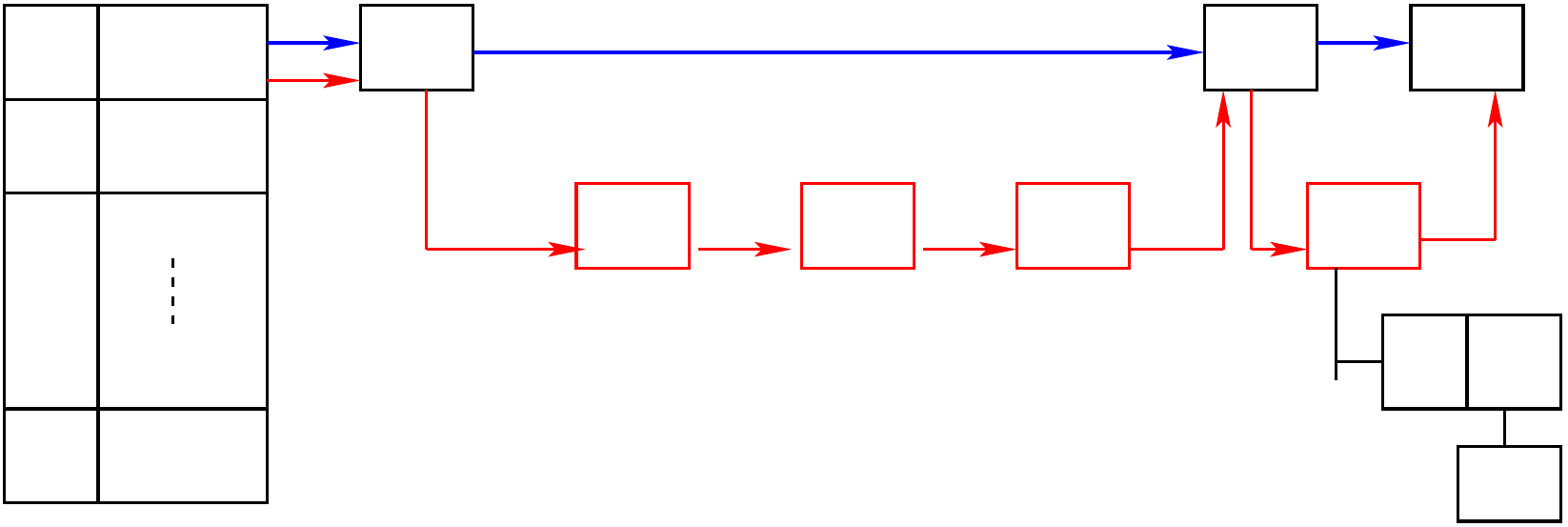_t}}
	% \vspace{-16cm}
	\caption{Execution under \lsl{}. $k_{20}$ is added in \lsl{} if not present.}
	\label{fig:nostm1}
\end{figure}
%\vspace{-.7cm}

Consider the \figref{nostm}, in which \npluk{} want to search a key $k_{20}$. Here, the key $k_{20}$ is not present in the \bn{} as well as \rn{}. So it will create the node corresponding to the key in \rn{} with marked field as true and append the $0^{th}$ version in its $vl$ and add itself into $0^{th}.rvl$ refer \figref{nostm1}. 

Each transaction maintains local log in the form of $L\_txlog$ where they store unique transaction id, status and $L\_list$. If transaction is currently executing or committed or aborted due to some inconsistency then status will be live, commit or abort, respectively. $L\_list$ is a vector which contains $\langle bucket\_id, key, val, preds, currs, op\_status, opn \rangle$. The $op\_status$ is an operation status which could be $OK$ or $FAIL$.
Each \mth{} identifies the location corresponding to the key in \bn{} and \rn{} from \lsl{} where they have to work and store the $preds$ and $currs$ in the form of an array. $opn$ stands for operation name which could be \npluk, \npins{}, \npdel{} which will work on $key$. 

}

\section{Working of \hmvotm{}} 
\label{sec:pcode}

% Within a transaction, a process can invoke layer-1 \mth{s} (or \op{s}) on a \emph{\tab} \tobj. A \tab{}($ht$) consists of multiple key-value pairs of the form $\langle k, v \rangle$. The keys and values are respectively from sets $\mathcal{K}$ and $\mathcal{V}$. The \mth{s} that a thread can invoke are: (1) $\tbeg_i{}$: Begins a transaction and returns an unique id to the invoking thread (2) $\tins_i(ht, k, v)$: Transaction $T_i$ inserts a value $v$ onto key $k$ in $ht$ (3) $\tdel_i(ht, k, v)$: Transaction $T_i$ deletes the key $k$ from the \tab{} $ht$ \& returns the current value $v$ for $T_i$. If Key $k$ does not exist, it return $null$. (4) $\tlook_i(ht, k, v)$: returns the current value $v$ for key $k$ in $ht$  for $T_i$. Similar to \tdel, if the key $k$ does not exist then \tlook returns $null$ (5) $\tryc_i$ which tries to commit all the \op{s} of $T_i$  and (6) $\trya_i$ aborts $T_i$. We assume that each \mth{} consists of an \inv{} and \rsp{} event.

As explained in \secref{model}, \hmvotm exports \tbeg, \tins, \tdel, \tlook, \tryc \mth{s}. \tdel, \tlook are \rvmt{s} while \tins, \tdel are \upmt{s}. We treat \tdel as both \rvmt as well as \upmt. The \rvmt{s} return the current value of the key. %The \upmt{s} update the value of the keys in the shared memory after validation as a part of the \tryc \mth. 
The \upmt{s}, update to the keys are first noted down in local log, \emph{\llog}. Then in the \tryc \mth after  validations of these updates are transferred to the shared memory. %\todo{Removed the figure as we lack space.} 
\ignore{
%\figref{nostm24} shows the life cycle of a transaction in \mvotm.  

\begin{figure}
	\centering	
	\captionsetup{justification=centering}
	\centerline{\scalebox{0.5}{\input{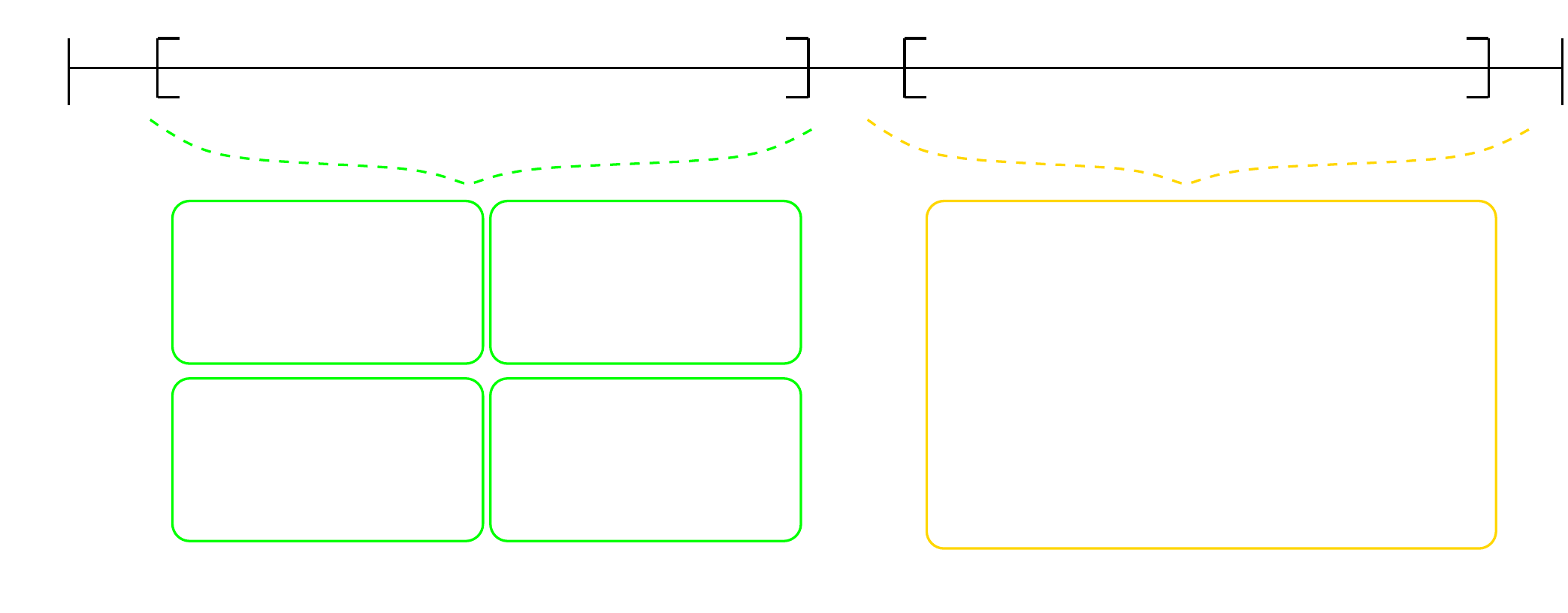_t}}}
	\caption{Transaction life cycle of \mvotm}	
	\label{fig:nostm24}
\end{figure}
}
We now explain the working of \rvmt and \upmt . %Additional details including pseudocode are in \apnref{rpcode} and in the accompanying technical report\cite{Juyal+:MVOSTM:Corr:2017}. %\todo{cite only at one place in the pcode} \cite{Juyal+:MVOSTM:Corr:2017}.
%\todo{SK: compressed. Review }
%Due to lack of space, the pseudocode of some of the \mth{s} are shown here. The remaining details are explained in the \apnref{rpcode} and in the accompanying technical report \todo{cite only at one place in the pcode} \cite{Juyal+:MVOSTM:Corr:2017}. 

%\textit{STM\_begin(), STM\_lookup(), STM\_insert(), STM\_delete()} and \textit{STM\_tryC()} methods. Each transaction is divided into two phases. First, \textbf{{\rvp{}}} in which initially each transaction invokes \textit{STM begin()} method by a thread to begin a new transaction $T_i$. It creates transaction local log and allocates unique id. In the case of \npins{} method actual insertion will happen in \cp{}(or \nptc{}). First, it will identify the node corresponding to the key in local log. If the node exists then it just update the local log with useful information like value, operation name and operation status for later use in \nptc{}. Otherwise, it will create a local log and update it. 
%\textit{STM \init()} This method invokes at the start of the STM system. Initialize the global counter as 1. 
%\vspace{-.5cm}

\vspace{1mm}
\noindent 
\textbf{$\tbeg():$} A thread invokes a new transaction $T_i$ using this method. This \mth returns a unique id to the invoking thread by incrementing an atomic counter. This unique id is also the timestamp of the transaction $T_i$. For convenience, we use the notation that $i$ is the timestamp (or id) of the transaction $T_i$. The transaction $T_i$ local log $\llog_i$ is initialized in this \mth. 

\vspace{1mm}
\noindent 
\textbf{\rvmt{s}} - $\tdel_i(ht, k, v)$ and $\tlook_i(ht, k, v):$ Both these \mth{s} return the current value of key $k$. \algoref{rvmt} gives the high-level overview of these \mth{s}. First, the algorithm checks to see if the given key is already in the local log, $\llog$ of $T_i$ (\linref{rvm-chk_log}). If the key is already there then the current \rvmt is not the first method on $k$ and is a subsequent method of $T_i$ on $k$. So, we can return the value of $k$ from the $\llog_i$. 

If the key is not present in the $\llog_i$, then \hmvotm searches into shared memory. Specifically, it searches the bucket to which $k$ belongs to. Every key in the range $\mathcal{K}$ is statically allocated to one of the $B$ buckets. So the algorithms search for $k$ in the corresponding bucket, say $B_k$ to identify the appropriate location, i.e., identify the correct \emph{predecessor} or $pred$ and \emph{current} or  $curr$ keys in the \lsl of $B_k$ without acquiring any locks similar to the search in \lazy \cite{Heller+:LazyList:PPL:2007}. Since each key has two links, \rn and \bn, the algorithm identifies four node references: two $pred$ and two $curr$ according to red and blue links. They are stored in the form of an array with $\bp$ and $\bc$ corresponding to blue links; $\rp$ and $\rc$ corresponding to red links. If both $\rp$ and $\rc$ nodes are unmarked then the $pred, curr$ nodes of both red and blue links will be the same, i.e., $\bp = \rp$ and $\rc = \bc$. Thus depending on the marking of $pred, curr$ nodes, a total of two, three or four different nodes will be identified. Here, the search ensures that $\bp.key \leq \rp.key < k \leq \rc.key \leq \bc.key$. 

Next, the re-entrant locks on all the $pred, curr$ keys are acquired in increasing order to avoid the deadlock. Then all the $pred$ and $curr$ keys are validated by \emph{rv\_Validation()} in \linref{rvm-chk_valid} as follows: (1) If $pred$ and $curr$ nodes of blue links are not marked, i.e, $(\neg \bp.marked) ~ \&\& ~ (\neg \bc.marked)$. (2) If the next links of both blue and red $pred$ nodes point to the correct $curr$ nodes: $(\bp.\bn = \bc) ~ \&\& ~ (\rp.\\\rn = \rc)$. 

If any of these checks fail, then the algorithm retries to find the correct $pred$ and $curr$ keys. It can be seen that the validation check is similar to the validation in concurrent \lazy \cite{Heller+:LazyList:PPL:2007}.%\todo{SK: Removed ref to app} 
%Detailed reasons behind these checks along with pseudocode is given in \apnref{rpcode} and \cite{Juyal+:MVOSTM:Corr:2017}.\todo{cite only at one place in the pcode}

Next, we check if $k$ is in $B_k.\lsl$. If $k$ is not in $B_k$, then we create a new node for $k$ as: $\langle key=k, lock=false, marked=false, vl=v, nnext=\phi \rangle$ and insert it into $B_k.\lsl$ such that it is accessible only via \rn since this node is marked (\linref{rvm-ver_abs}). This node will have a single version $v$ as: $\langle ts=0, val=null, rvl=i, vnext=\phi \rangle$. Here invoking transaction $T_i$ is creating a version with timestamp $0$ to ensure that \rvmt{s} of other transactions will never abort. As we have explained in \figref{pop} (b) of \secref{intro}, even after $T_2$ deletes $k_1$, the previous value of $v_0$ is still retained. Thus, when $T_1$ invokes $lu$ on $k_1$ after the delete on $k_1$ by $T_2$, \hmvotm will return $v_0$ (as previous value). Hence, each \rvmt{s} will find a version to read while maintaining the infinite version corresponding to each key $k$. In $rvl$, $T_i$ adds the timestamp as $i$ in it and $vnext$ is initialized to empty value. Since $val$ is null and the $n$, this version and the node is not technically inserted into $B_k.\lsl$. %\todo{Done:Add the reason for this version.}

If $k$ is in $B_k.\lsl$ then, $k$ is the same as $\rc$ or $\bc$ or both. Let $n$ be the node of $k$ in $B_k.\lsl$. We then find the version of $n$, $ver_j$ which has the timestamp $j$ such that $j$ has the largest timestamp smaller than $i$ (timestamp of $T_i$). Add $i$ to $ver_j$'s $rvl$ (\linref{rvm-add_i}). Then release the locks, update the local log $\llog_i$ in \linref{rvm-unlock} and return the value stored in $ver_j.val$ in \linref{rvm-ret}). %\todo{Removed rv\_validation from here due to lack of space.} 
%We considered \npluk{} and \npdel{} methods as a \rvmt{}. At \Lineref{rvm2} of algorithm 1, if \npluk{} is not the first method on a particular key means if its a subsequent method of the same transaction for that key then it will search into the local log and return the value and operation status based on the previous operation. If \npluk{} is the first method on that key then it will search into \emph{concurrent data structure (CDS)} to identify the location of the node corresponding to the key in \bn{} as well as \rn{} at \Lineref{rvm5}. Then at \Lineref{rvm7}, it will do the \emph{rv\_Validation()} after acquiring the locks on $preds$ and $currs$ in increasing order of keys to avoid deadlocks. If \emph{rv\_Validation()} succeed and node corresponding to the key is not present in the underlying DS then it will create the node and insert the $0^{th}$ version and add itself into $0^{th}$.$rvl$ refer \figref{mvostm81} and release the locks in same lock acquisition order. From \Lineref{rvm13} to \Lineref{rvm16}, it identifies the version from $CDS$ with largest TS as $T_j$ which is less than TS($T_i$) with the help of \emph{find\_lts()} method. Then add TS($T_i$) into $rvl$ of $T_j$ and finally release the locks and update the local log. \npdel{} will work same as a \npluk{}. The actual deletion will happen in the \nptc{}.
%\setlength{\textfloatsep}{0pt} 
%%%%%%%%%%%%%%%%%%%%%%%%%%%%%%%%%%%%%%%%%%%%%%%%%%rv_method()%%%%%%%%%%%%%
%\vspace{-.5cm}
\begin{algorithm}
	\scriptsize
	\caption{\emph{\rvmt:} Could be either $\tdel_i(ht, k, v)$ or $\tlook_i(ht, k, v)$ on key $k$ that maps to bucket $B_k$.} \label{algo:rvmt} 	
	%\setlength{\multicolsep}{0pt}
	%\begin{multicols}{2}
	\begin{algorithmic}[1]
		%\makeatletter\setcounter{ALG@line}{0}\makeatother
		\Procedure{$\rvmt_i$}{$ht, k, v$}		
		\If{($k \in \llog_i$)} \label{lin:rvm-chk_log}
			\State Update the local log and return $val$. \label{lin:rvm-pres_log}
		\Else \label{lin:rvm-els_log}
			\State Search in \lsl to identify the $preds[]$ and $currs[]$ for $k$ using \bn and \rn in bucket $B_k$. \label{lin:rvm-search}
			\State Acquire the locks on $preds[]$ and $currs[]$ in increasing order. \label{lin:rvm-locks}
			\If{($!rv\_Validation(preds[], currs[])$)} \label{lin:rvm-chk_valid}
				\State Release the locks and goto \linref{rvm-search}. \label{lin:rvm-retry}
			\EndIf %\label{lin:rvm-end_valid}
			\If{($k  ~ \notin ~ B_k.\lsl$)} \label{lin:rvm-chk_k} 
				\State Create a new node $n$ with key $k$ as: $\langle$ \emph{key = k, lock = false, marked = false, vl = v, nnext = }$\phi \rangle$. \label{lin:rvm-chk_k1} 
				\State \Comment{The $vl$ consists of a single element $v$ with $ts$ as $i$}
				\State Create the version $v$ as: $\langle ts=0, val=null, rvl=i, vnext=\phi \rangle$.  
				\State Insert $n$ into $B_k.\lsl$ such that it is accessible only via \rn{s}. \Comment{$n$ is marked} \label{lin:rvm-ver_abs}  	\State Release the locks; update the $\llog_i$ with $k$.
				\State return $null$.
			\EndIf %\label{lin:rvm-end-chk_k} 
			\State Identify the version $ver_j$ with $ts=j$ such that $j$ is the largest timestamp smaller than $i$.
			\If{($ver_j$ == $null$)}
			\State goto \Lineref{rvm-chk_k1}.
			\EndIf
			%\State $ver_j = find\_lts(k)$.\label{lin:corr-ver} 
			\State Add $i$ into the $rvl$ of $ver_j$. \label{lin:rvm-add_i} 
			\State $retVal = ver_j.val$.
			\State Release the locks; update the $\llog_i$ with $k$ and $retVal$. \label{lin:rvm-unlock} 
		\EndIf %\label{lin:rvm-end_log}
		\State return $retVal$. \label{lin:rvm-ret}
		\EndProcedure
	\end{algorithmic}
%\end{multicols}
\end{algorithm}
%\vspace{-.7cm}

\ignore {
The \emph{rv\_Validation()} is called by the \rvmt{} and \upmt{}. It will identify the conflicts among the concurrent methods of different transactions. It will be more clear by the \figref{mvostm9}, where two concurrent conflicting methods of different transactions are working on the same key $k_3$. Initially, at stage $s_1$ in \figref{mvostm9} (c) both the conflicting methods $ins_1(k_3)$ and $lu_2(k_3)$ optimistically (without acquiring locks) identify the same $preds$ and $currs$ for key $k_3$ from underlying $CDS$ in \figref{mvostm9} (a). At stage $s_2$ in \figref{mvostm9} (c), method $ins_1(k_3)$ of transaction $T_1$ acquired the lock on $preds$ and $currs$ and inserted the node into it (\figref{mvostm9} (b)). After successful insertion by $T_1$, $preds$ and $currs$ will change for $lu_2(k_3)$ at stage $s_3$ in \figref{mvostm9} (c). It will caught via \emph{rv\_Validation} method at \Lineref{rvv1} when $(\bp.\bn \neq \bc)$ for $lu_2(k_3)$. After that again it will find the new $preds$ and $currs$ for $lu_2(k_3)$ and eventually it will commit.

%%%%%%%%%%%%%%%%%%%%%%%%%%%%%%%%%%%%%%%%%%%%%%%%%%rv_validation()%%%%%%%%%%%%%
\begin{algorithm}
	%\label{alg:llsearch} 
	\scriptsize
	\caption{\emph{rv\_Validation()} }
	\setlength{\multicolsep}{0pt}
	%\begin{multicols}{2}
	\begin{algorithmic}[1]
		\makeatletter\setcounter{ALG@line}{14}\makeatother
		\Procedure{rv\_validation{()}}{}
		\If{$((\bp.marked) || (\bc.marked) ||(\bp.\bn) \neq \bc || (\rp.\rn) \neq {\rc})$}\label{lin:rvv1}
		\State return $false$. \label{lin:rvv2}
		\Else \label{lin:rvv3}
		\State return $true$. \label{lin:rvv4}
		\EndIf 
		\EndProcedure
	\end{algorithmic}
	%\end{multicols}
\end{algorithm}

\cmnt{
\begin{figure}[tbph]
\captionsetup{justification=centering}
	%\includegraphics[scale=0.7]{figs/ex1.pdf_t}
	%\centerline{\scalebox{0.7}{\input{ex1.pstex_t}}}
	\centerline{\scalebox{0.47}{\input{figs/mvostm9.pdf_t}}}
	\caption{Method validation}
	\label{fig:mvostm91}
\end{figure}
}

\begin{figure}[tbph]
\captionsetup{justification=centering}
	%\includegraphics[scale=0.7]{figs/ex1.pdf_t}
	%\centerline{\scalebox{0.7}{\input{ex1.pstex_t}}}
	\centerline{\scalebox{0.42}{\input{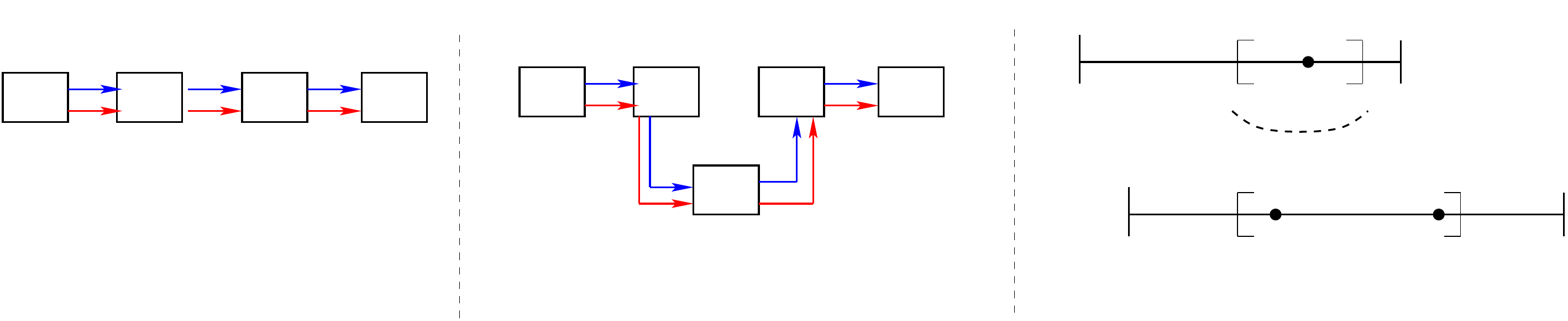_t}}}
	\caption{Interference validation (or rv\_Validation)}
	\label{fig:mvostm9}
\end{figure}
%\vspace{-.7cm}
%%%%%%%%%%%%%%%%%%%%%%%%%%%%%%%%%%%%%%%%%%%%%%%%%%try_C()%%%%%%%%%%%%%
}	

\ignore {

We are considering the chaining \tab{} as a underlying data structure where chaining is done via lazy-list refer \figref{1mvostmdesign} a). Each bucket of the \tab{} is having two sentinel nodes: $head$ and $tail$. $Head$ and $tail$ are initialized as -$\infty$ and +$\infty$ respectively. Keys $\langle k_1, k_2, ...k_n \rangle$ are added in increasing order in the list between the sentinel nodes. %as represented in \figref{mvostmdesign} a). %We assume that all the keys are ordered as $k_1, k_2, ...k_n$ and belong to the set $\mathcal{T}$. 

Each key is maintaining the multiple versions in increasing order of timestamp (\figref{1mvostmdesign} b)). %as shown in \figref{mvostmdesign} b). 
For each key $k_1$ of transaction $T_i$, we maintain $k_1.vl$ (version list) which is a list consisting of version tuples in the form $\langle ts, val, mark, rvl, vnext \rangle$.  Description of each field as: $ts$ stands for timestamp which is unique for each transaction, $val$ is the value written by any transaction corresponding to the key, $mark$ is the Boolean variable which can be true or false (if the method corresponding to the key is \npdel{} then the value of $mark$ field will be true, ($T$) and if the method corresponding to the key is \npins{} then the value of $mark$ field will be false, ($F$)), $rvl$ represents $return$-$value$ $list$ which is having all the transactions who has performed \rvmt{} on the same key $k_1$ and $vnext$ is having the information about next available version of the same key $k_1$. The \mvotm system consists of the following main methods: \emph{STM init()}, \emph{STM begin()}, $\npins{}$, $\npluk{}$, $\npdel{}$ and $\nptc{}$.
} 

\vspace{1mm}
\noindent
\textbf{\upmt{s}} - $\tins{}$ and $\tdel$: 
Both the \mth{s} create a version corresponding to the key $k$. The actual effect of \tins{} and \tdel{} in shared memory will take place in \tryc{}. \algoref{mtryc} represents the high-level overview of \tryc{}. 

%In the transaction local log, $\llog_i$ consist of \upmt{} only. 
Initially, to avoid deadlocks, algorithm sorts all the $keys$ in increasing order which are present in the local log, $\llog_i$. In \tryc{}, $\llog_i$ consists of \upmt{s} (\tins{} or \tdel{}) only. For all the \upmt{s} ($opn_i$) it searches the key $k$ in the shared memory corresponding to the bucket $B_k$. It identifies the appropriate location ($pred$ and $curr$) of key $k$ using \bn{} and \rn{} (\Lineref{mtryc3}) in the \lsl of $B_k$ without acquiring any locks similar to \rvmt{} explained above.

Next, it acquires the re-entrant locks on all the $pred$ and $curr$ keys in increasing order. After that, all the $pred$ and $curr$ keys are validated by \emph{tryC\_Validation} in \Lineref{mtryc5} as follows: (1) It does the \emph{rv\_Validation()} as explained above in the \rvmt{}. (2) If key $k$ exists in the $B_k.\lsl$ and let $n$ as a node of $k$. Then algorithm identifies the version of $n$, $ver_j$ which has the timestamp $j$ such that $j$ has the largest timestamp smaller than $i$ (timestamp of $T_i$). If any higher timestamp $k$ of $T_k$ than timestamp $i$ of $T_i$ exist in $ver_j.rvl$ then algorithm returns $Abort$ in \Lineref{mtryc6}.

If all the above steps are true then each \upmt{s} exist in $\llog_i$ will take the effect in the shared memory after doing the \emph{intraTransValidation()} in \Lineref{mtryc11}. If two $\upmt{s}$ of the same transaction have at least one common shared node among its recorded $pred$ and $curr$ keys, then the previous $\upmt{}$ effect may overwrite if the current $\upmt{}$ of $pred$ and $curr$ keys are not updated according to the updates done by the previous $\upmt{}$. Thus to solve this we have \emph{intraTransValidation()} that modifies the $pred$ and $curr$ keys of current operation based on the previous operation in \Lineref{mtryc11}.

%\vspace{-.5cm}
\begin{algorithm}
	
	\scriptsize
	\caption{\emph{tryC($T_i$)}: Validate the \upmt{s} of the transaction and then commit}
	\setlength{\multicolsep}{0pt}
	\label{algo:mtryc}
	\begin{algorithmic}[1]
		\makeatletter\setcounter{ALG@line}{27}\makeatother
		\Procedure{$tryC{(T_i)}$}{}
		%\State Get the local log list for corresponding transaction;
		\State /*Operation name ($opn$) which could be either \tins or \tdel*/
		\State /*Sort the $keys$ of $\llog_i$ in increasing order.*/
		\ForAll{($opn_i$ $\in$ $\llog_i$)} \label{lin:mtryc1}
		\If{(($opn_i$ == \tins{}) $||$ ($opn_i$ == \tdel{}))}\label{lin:mtryc2}
		\State Search in $\lsl$ to identify the $preds[]$ and $currs[]$ for $k$ of $opn_i$ using \bn and \rn in bucket $B_k$. \label{lin:mtryc3}
		\State Acquire the locks on $preds[]$ and $currs[]$ in increasing order. \label{lin:mtryc4}
		\If{($!tryC\_Validation()$)} \label{lin:mtryc5}
		\State return $Abort$.\label{lin:mtryc6}
		\EndIf\label{lin:mtryc7}
		\EndIf\label{lin:mtryc8}
		\EndFor\label{lin:mtryc9}
		\ForAll{($opn_i$ $\in$ $\llog_i$)}\label{lin:mtryc10}
		\State $intraTransValidation()$  modifies the $preds[]$ and $currs[]$ of current operation which would have been updated by the previous operation of the same transaction.\label{lin:mtryc11}
		\If{(($opn_i$ == \tins{}) \&\& ($k  ~ \notin ~ B_k.\lsl$))}\label{lin:mtryc12}
		\State Create new node $n$ with $k$ as: $\langle$ \emph{key = k, lock = false, marked = false, vl = v, nnext = $\phi$} $\rangle$. \label{lin:mtryc13}
		\State Create two versions $v$ as: $\langle$ \emph{ts=i, val=v, rvl=$\phi$, vnext=$\phi$} $\rangle$.
		\State Insert node $n$ into $B_k.\lsl$ such that it is accessible via \rn{} as well as \bn{} \Comment{$lock$ sets $true$}.  \label{lin:mmtryc14}
		\ElsIf{($opn_i$ == \tins{})}\label{lin:mtryc14}
		\State Add the version $v$ as: $\langle$ \emph{ts = i, val = v, rvl = $\phi$, vnext = $\phi$} $\rangle$ into $B_k.\lsl$ such that it is accessible via \rn as well as \bn.\label{lin:mtryc15}
		\EndIf\label{lin:mtryc16}
		\If{($opn_i$ == \tdel{})}\label{lin:mtryc17}
		\State Add the version $i$ as: $\langle$ \emph{ts=i, val=null, rvl=$\phi$, vnext=$\phi$} $\rangle$ into $B_k.\lsl$ such that it is accessible only via \rn.\label{lin:mtryc18}
		\EndIf\label{lin:mtryc19}
		\State Update the $preds[]$ and $currs[]$ of $opn_i$ in $\llog_i$.\label{lin:mtryc20}
		
		\EndFor
		%\EndFor
		\State Release the locks; return $Commit$.\label{lin:mtryc21}
		\EndProcedure
	\end{algorithmic}
	%\end{multicols}
\end{algorithm}
%\vspace{-.5cm}
 
Next, we check if \upmt{} is \tins{} and $k$ is in $B_k.\lsl$. If $k$ is not in $B_k$, then create a new node $n$ for $k$ as: $\langle key=k, lock=false, marked=false, vl=v, nnext=\phi \rangle$. This node will have a single version $v$ as: $\langle ts=i, val=v, rvl=\phi, vnext=\phi \rangle$. Here $i$ is the timestamp of the transaction $T_i$ invoking this method; $rvl$ and $vnext$ are initialized to empty values. We set the $val$ as $v$ and insert $n$ into $B_k.\lsl$ such that it is accessible via \rn{} as well as \bn{} and set the lock field to be $true$ (\linref{mmtryc14}). If $k$ is in $B_k.\lsl$ then, $k$ is the same as $\rc$ or $\bc$ or both. Let $n$ be the node of $k$ in $B_k.\lsl$.
Then, we create the version $v$ as: $\langle ts=i, val=v, rvl=\phi, vnext=\phi \rangle$ and insert the version into $B_k.\lsl$ such that it is accessible via \rn{} as well as \bn{} (\linref{mtryc15}).

Subsequently, we check if \upmt{} is \tdel{} and $k$ is in $B_k.\lsl$. Let $n$ be the node of $k$ in $B_k.\lsl$.
Then create the version $v$ as: $\langle ts=i, val=null, rvl=\phi, vnext=\phi \rangle$ and insert the version into $B_k.\lsl$ such that it is accessible only via \rn{} (\linref{mtryc18}). 

Finally, at \Lineref{mtryc20} it updates the $pred$ and $curr$ of $opn_i$ in local log, $\llog_i$. At \Lineref{mtryc21} releases the locks on all the $pred$ and $curr$ in increasing order of keys to avoid deadlocks and return $Commit$.  

We illustrate the helping methods of \rvmt{} and \upmt{} as follows:

\noindent
\textbf{rv\_Validation():} It is called by both the \rvmt{} and \upmt{}. It identifies the conflicts among the concurrent methods of different transactions. Consider an example shown in \figref{mvostm9}, where two concurrent conflicting methods of different transactions are working on the same key $k_3$. Initially, at stage $s_1$ in \figref{mvostm9} (c) both the conflicting method optimistically (without acquiring locks) identify the same $pred$ and $curr$ keys for key $k_3$ from $B_k.\lsl$ in \figref{mvostm9} (a). At stage $s_2$ in \figref{mvostm9} (c), method $ins_1(k_3)$ of transaction $T_1$ acquired the lock on $pred$ and $curr$ keys and inserted the node into $B_k.\lsl$ as shown in \figref{mvostm9} (b). After successful insertion by $T_1$, $pred$ and $curr$ has been changed for $lu_2(k_3)$ at stage $s_3$ in \figref{mvostm9} (c). So, the above modified information is delivered by \emph{rv\_Validation} method at \Lineref{rvv1} when $(\bp.\bn \neq \bc)$ for $lu_2(k_3)$. After that again it will find the new $pred$ and $curr$ for $lu_2(k_3)$ and eventually it will commit. %Where $\bp$ and $\rp$ are the predecessor nodes in $\bn$ and $\rn$, respectively. $\bc$ and $\rc$ are the successor nodes in $\bn$ and $\rn$ respectively.
%\vspace{-.7cm}
%%%%%%%%%%%%%%%%%%%%%%%%%%%%%%%%%%%%%%%%%%%%%%%%%%rv_validation()%%%%%%%%%%%%%
\begin{algorithm}
	%\label{alg:llsearch} 
	\scriptsize
	\caption{\emph{rv\_Validation()} }
	\setlength{\multicolsep}{0pt}
	%\begin{multicols}{2}
	\begin{algorithmic}[1]
		\makeatletter\setcounter{ALG@line}{55}\makeatother
		\Procedure{rv\_validation{()}}{}
		\If{$((\bp.marked) || (\bc.marked) ||(\bp.\bn) \neq \bc || (\rp.\rn) \neq {\rc})$}\label{lin:rvv1}
		\State return $false$. \label{lin:rvv2}
		\Else \label{lin:rvv3}
		\State return $true$. \label{lin:rvv4}
		\EndIf 
		\EndProcedure
	\end{algorithmic}
	%\end{multicols}
\end{algorithm}
%\vspace{-.7cm}
\begin{figure}
	\captionsetup{justification=centering}
	%\includegraphics[scale=0.7]{figs/ex1.pdf_t}
	%\centerline{\scalebox{0.7}{\input{ex1.pstex_t}}}
	\centerline{\scalebox{0.4}{\input{figs/disc1.pdf_t}}}
	\caption{rv\_Validation}
	\label{fig:mvostm9}
\end{figure}
%\vspace{-.7cm}

\begin{figure}[H] 
	\captionsetup{justification=centering}
	%\centering
	%\includegraphics[scale=.5]{figs/fig3.png}
	%\includegraphics[scale=.5]{figs/fig3.pdf_t}
	\centerline{\scalebox{0.45}{\input{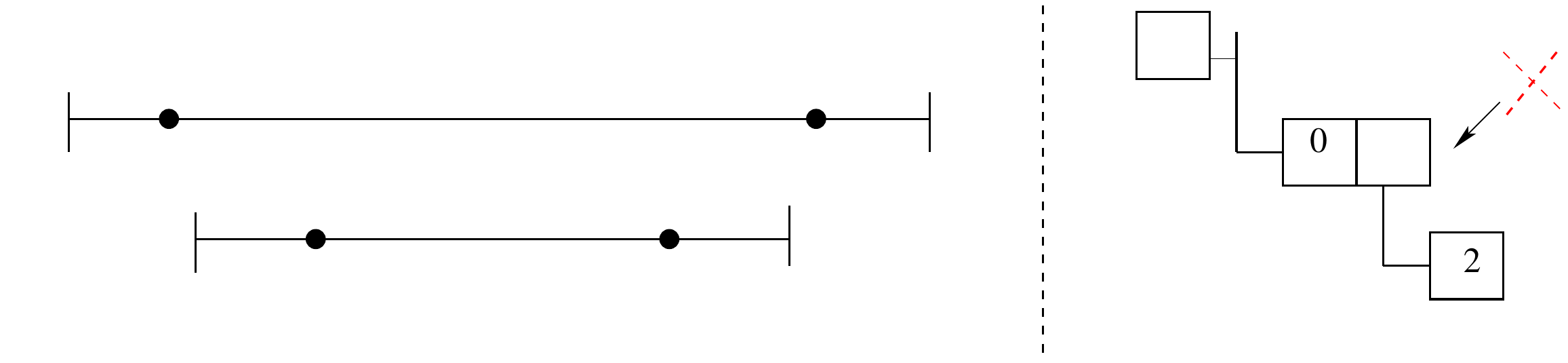_t}}}
	\caption{tryC\_Validation}
	\label{fig:mvostm12}
\end{figure}

\noindent
\textbf{tryC\_Validation:} It is called by \upmt{} in \tryc{}. First it does the \emph{rv\_Validation()} in \Lineref{trycv1}. If its successful and key $k$ exists in the $B_k.\lsl$ and let $n$ as a node of $k$. Then algorithm identifies the version of $n$, $ver_j$ which has the timestamp $j$ such that $j$ has the largest timestamp smaller than $i$ (timestamp of $T_i$). If any higher timestamp $T_k$ than timestamp $T_i$ exist in $ver_j.rvl$ then algorithm returns false (in \Lineref{trycv8}) and eventually return $Abort$ in \Lineref{mtryc6}. Consider an example as shown in \figref{mvostm12} (a), where second method $ins_1$ of transaction $T_1$ returns $Abort$ because higher timestamp of transaction $T_2$ is already present in the $rvl$ of version $T_0$ identified by $T_1$ in \figref{mvostm12} (b).

%If its successful and key $k$ is part of $B_k.\lsl$ then it identifies the version $ver_j$ with $ts=j$ such that $j$ is the largest timestamp smaller than $i$ in \Lineref{trycv5}. From \Lineref{trycv6} to \Lineref{trycv10}, if there exist any higher timestamp transaction in the $ver_j.rvl$ of the $T_j$ return false and eventually return ABORT.
%%%%%%%%%%%%%%%%%%%%%%%%%%%%%%%%%%%%%%%%%%%%%%%%%%tryC_validation()%%%%%%%%%%%%%
\begin{algorithm}[H]
	%\label{alg:llsearch} 
	\scriptsize
	\caption{\emph{tryC\_Validation()} }
	\setlength{\multicolsep}{0pt}
	%\begin{multicols}{2}
	\begin{algorithmic}[1]
		\makeatletter\setcounter{ALG@line}{62}\makeatother
		\Procedure{tryC\_validation{()}}{}
		\If{($!rv\_Validation()$)}\label{lin:trycv1}
		\State Release the locks and retry.\label{lin:trycv2}
		\EndIf\label{lin:trycv3}
		\If{(k $\in$ $B_k.\lsl$)}\label{lin:trycv4}
		\State Identify the version $ver_j$ with $ts=j$ such that $j$ is the largest timestamp smaller than $i$.\label{lin:trycv5}
		\ForAll {$T_k$ in $ver_j.rvl$}\label{lin:trycv6}
		\If{(TS($(T_k)$ $>$ TS($T_i$)))}\label{lin:trycv7}
		\State return $false$.\label{lin:trycv8}
		\EndIf \label{lin:trycv9}
		\EndFor\label{lin:trycv10}
		\EndIf\label{lin:trycv11}
		\State return $true$.\label{lin:trycv12}
		\EndProcedure
	\end{algorithmic}
	%\end{multicols}
\end{algorithm}
%\vspace{-.7cm}
\begin{algorithm}[H]
	\label{alg:intra} 
	\scriptsize
	\caption{\emph{intraTransValidation()} }
	\setlength{\multicolsep}{0pt}
	%\begin{multicols}{2}
	\begin{algorithmic}[1]
		\makeatletter\setcounter{ALG@line}{76}\makeatother
		\Procedure{intraTransValidation{()}}{}
		\If{$((\bp.marked) || (\bp.\bn \neq \bc ))$} \label{lin:intra1}
		\If{($opn_k$ == Insert)}\label{lin:intra2}
		\State $\bp_{i}$ $\gets$ $\bp_{k}$.\bn.\label{lin:intra3}
		\Else\label{lin:intra4}
		\State $\bp_{i}$ $\gets$ $\bp_{k}$.\label{lin:intra5}
		\EndIf\label{lin:intra6}
		\EndIf \label{lin:intra9}
		\If{(\rp.\rn $\neq$ \rc)}\label{lin:intra7}
		\State $\rp_{i}$ $\gets$ $\rp_{k}.\rn$.\label{lin:intra8}
		\EndIf\label{lin:intra10}
		\EndProcedure
	\end{algorithmic}
	%\end{multicols}
\end{algorithm}

\noindent
\textbf{intraTransValidation:} It is called by \upmt{} in \tryc{}. If two $\upmt{s}$ of the same transaction have at least one common shared node among its recorded $pred$ and $curr$ keys, then the previous $\upmt{}$ effect may overwrite if the current $\upmt{}$ of $pred$ and $curr$ keys are not updated according to the updates done by the previous $\upmt{}$. Thus to solve this we have \emph{intraTransValidation()} that modifies the $pred$ and $curr$ keys of current operation based on the previous operation from \Lineref{intra1} to \Lineref{intra10}. Consider an example as shown in \figref{mvostm11}, where two \upmt{s} of transaction $T_1$ are $ins_{11}(k_3)$ and $ins_{12}(k_5)$ in \figref{mvostm11} (c). At stage $s_1$ in \figref{mvostm11} (c) both the \upmt{s} identify the same $pred$ and $curr$ from underlying DS as $B_k.\lsl$ shown in \figref{mvostm11} (a). After the successful insertion done by first \upmt{} at stage $s_2$ in \figref{mvostm11} (c), key $k_3$ is part of $B_k.\lsl$ (\figref{mvostm11} (b)). At stage $s_3$ in \figref{mvostm11} (c), $ins_{12}(k_5)$ identified $(\bp.\bn \neq \bc)$ in \emph{intraTransValidation()} at \Lineref{intra1}. So it updates the $\bp$ in \Lineref{intra3} for correct updation in $B_k.\lsl$. 
%\vspace{-.7cm}

%\vspace{-.7cm}

%if we do not update the $pred$ and $curr$ for $ins_{12}(k_5)$ then it overwrites the previous method updates. To resolve this issue we are doing the intraTransValidation() after each \upmt{} to assign the appropriate $pred$ and $curr$ for the upcoming \upmt{} of same transaction.

%%%%%%%%%%%%%%%%%%%%%%%%%%%%%%%%%%%%%%%%%%%%%%%%%%intraTransValidation()%%%%%%%%%%%%%

%\vspace{-.7cm}
\begin{figure}[H]
	\captionsetup{justification=centering}
	%\centering
	%\includegraphics[scale=.5]{figs/fig3.png}
	%\includegraphics[scale=.5]{figs/fig3.pdf_t}
	\centerline{\scalebox{0.43}{\input{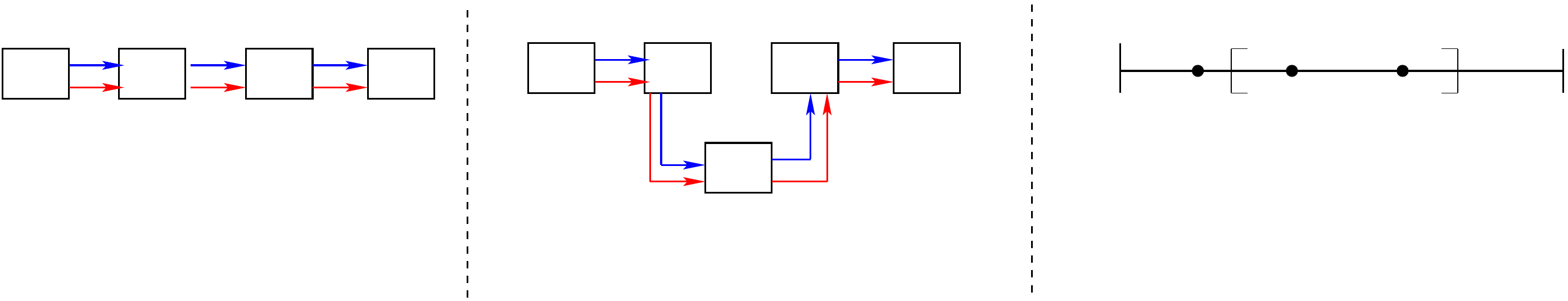_t}}}
	\caption{Intra transaction validation}
	\label{fig:mvostm11}
\end{figure}
%\vspace{-.7cm}

%\todo{remove??}
%Detailed pseudocode and correctness proof are omitted here due to lack of space. For more details please refer the \apnref{appendix} and in the accompanying technical report \cite{Juyal+:MVOSTM:Corr:2017}.
%\noindent
%\textbf{Correctness of \hmvotm:}  

%\todo{}
%\mvpness (explained in \secref{model}) was originally defined for \rwtm{s} \cite{Perel+:2011:SMV:DISC}. When extended to \tab based STM system, \mvpness  implies that no \tlook \mth ever returns abort. \thmref{prog} shows that in \hmvotm with unbounded versions, no \rvmt returns abort. This implies that \tlook will not return abort in a \hmvotm system with unbounded versions. 

\cmnt{
\begin{algorithm}
	%\label{alg:llsearch} 
	\scriptsize
	\caption{\emph{tryC($T_i$)} }
	\setlength{\multicolsep}{0pt}
	%\begin{multicols}{2}
	\begin{algorithmic}[1]
		\makeatletter\setcounter{ALG@line}{19}\makeatother
		\Procedure{tryC{($T_i$)}}{}
		%\State Get the local log list for corresponding transaction;
		\ForAll{($opn_i$ $\in$ local\_log())} \label{lin:mtryc1}
		\If{(($m_{ij}(k)$ == \npins{}) $||$ ($m_{ij}(k)$ == \npdel{}))}\label{lin:mtryc2}
		\State Search into the $CDS$ to identify the $preds$ and $currs$ for key in \bn{} and \rn.\label{lin:mtryc3}
		\State Acquire the locks in increasing order. \label{lin:mtryc4}
		\If{($!tryC\_Validation()$)} \label{lin:mtryc5}
		\State return $Abort$.\label{lin:mtryc6}
		\EndIf\label{lin:mtryc7}
		\EndIf\label{lin:mtryc8}
		\EndFor\label{lin:mtryc9}
		\ForAll{($opn_i$ $\in$ local\_log())}\label{lin:mtryc10}
		\State $intraTransValidation():$ Modify the preds and currs of consecutive operation which will work on same region.\label{lin:mtryc11}
		\If{(($m_{ij}(k)$ == \npins{}) \&\& (k $\notin$ CDS)}\label{lin:mtryc12}
		\State Create the new node into \bn{} and add the $0^{th}$, $T_i$ versions in it.\label{lin:mtryc13}
		\ElsIf{($m_{ij}(k)$ == \npins{})}\label{lin:mtryc14}
		\State Add the node in \bn{} with $T_i$ version.\label{lin:mtryc15}
		\EndIf\label{lin:mtryc16}
		\If{(($m_{ij}(k)$ == \npdel{})}\label{lin:mtryc17}
		\State Move the node from \bn{} to \rn{} and add the $T_i$ version in it with value NULL.\label{lin:mtryc18}
		\EndIf\label{lin:mtryc19}
		%\State Update the local log;
		\EndFor\label{lin:mtryc20}
		%\EndFor
		\State Release the locks.\label{lin:mtryc21}
		\EndProcedure
	\end{algorithmic}
	%\end{multicols}
\end{algorithm}
%\vspace{-.3cm}
}
 %Otherwise it will perform the above steps for remaining \upmt{s}.
	
%On successful validation of all the \upmt{s}, the actually effect will be taken place. If the \upmt{} is insert and node corresponding to the key is part of underlying DS then it creates the new version tuple and add it in increasing order of version list. Otherwise it will create the node with the help of \nplslins{} and insert the version tuple. If the \upmt{} is delete and node corresponding to the key is part of underlying DS then it creates the new version tuple and set its mark field as TRUE and add it in increasing order of version list. Otherwise it will create the node with the help of \nplslins{} and insert the version tuple with mark field TRUE. After successful completion of each \upmt{} it will release all the locks in the same order of lock acquisition. For more details of each \emph{MV-OSTM} method please refer the \apnref{rpcode}.

\noindent
\cmnt{
\section{Pcode of MV-OSTM}
\label{sec:mvdspcode}
The \mvotm system consists of the following methods: $\init()$, $\tbeg()$, $\tlook()$, $\tins()$, $\tdel()$ and $\tryc()$. 
%\vspace{-.3cm}
\\
%%%%%%%%%%%%%%%%%%%%%%%%%%%%%%%%%% STM_init()
\textbf{Pcode convention:} In pcode all the global and local variables are denoted as G\_ and L\_ respectively. We have used $\downarrow$ for input parameters and $\uparrow$ for output parameters (or return value). $\Phi_{lp}$ denotes the linearization point (LP) of corresponding to the each method.   \\
\noindent
\textbf{STM \init():} This method invokes at the start of the STM system. Initialize the global counter ($\cnt$) as 1 at \Lineref{init1} and return it.

\noindent
\textbf{STM \textit{begin()}:} It invoked by a thread to being a new transaction $T_i$. It creates transaction local log and allocate unique id at \Lineref{begin3} and \Lineref{begin5} respectively. %Return the transaction id ($L\_t\_id$) at \Lineref{begin8}.

\noindent
\textbf{STM \textit{insert()}:} Optimistically, actual insertion will happen in the \nptc{} method. First it will identify the node corresponding to the key in local log. If node exist then it just update the local log with useful information like value, operation name and operation status for node corresponding to the key at \Lineref{insert8}, \Lineref{insert9} and \Lineref{insert10} respectively for later use in \nptc{}. Otherwise, it will create a local log and update it.%Transaction id ($L\_t\_id$), object id (L\_obj\_id), node corresponding to the key (L\_key) and value(L\_val) are the inputs of this function.  

\noindent
\textbf{STM \textit{lookup()}:} If \npluk{} is not the first method on a particular key means if its a subsequent method of the same transaction on that key then first it will search into local log from \Lineref{lookup3} to \Lineref{lookup14}. If previous method on the same key of same transaction was insert or lookup (from \Lineref{lookup7} to \Lineref{lookup9}) then \npluk{} will return the value and operation status based on previous operation value and operation status. If previous method on the same key of same transaction was delete (from \Lineref{lookup11} to \Lineref{lookup13}) then \npluk{} will return the value and operation status as NULL and FAIL respectively. 

If \npluk{} is the first method on that key (from \Lineref{lookup16} to \Lineref{lookup22}) then it will identify the location of node corresponding to the key in underlying DS with the help of \nplsls{} inside the \npcld{} method at \Lineref{lookup17}.
\setlength{\textfloatsep}{0pt}
%\vspace{-.3cm}
%%%%%%%%%%%%%%%%%%%%%%%%%%%%%%%%%%%%%%%%%%%%%%%%%%%%%%%%%%%%%%%%%%%%%%%%%%
%-----------------------------------------STM\_LOOKUP-------------------------------------%------
%----
%%%%%%%%%%%%%%%%%%%%%%%%%%%%%%%%%%%%%%%%%%%%%%%%%%%%%%%%%%%%%%%%%%%%%%%%%%
%\begin{spacing}{2}
\begin{algorithm}
%\algsetup{linenosize=\tiny}

	\caption{\tabspace[0.2cm] STM $lookup_{i}()$ %: If the transaction to which this operation belongs has locally done an operation on the same key then returns apt value and status(wrt the previous local operation). Else do the \nplsls{} to find the correct location of the key and validate it.
		%\emph{DESCP}\tabspace: If the Tx has locally done an operation returns apt value and status else does TO \tabspace[2.2cm] and method validation traversal phase in only.\\
		%\emph{IN}\tabspace \tabspace[0.72cm]: $obj\_id$, $key$\\
		%\emph{OUT}\tabspace \tabspace[0.37cm]: $value$, $op\_status$     
	}
	\scriptsize
\setlength{\multicolsep}{0pt}
\begin{multicols}{2}

	\label{algo:lookup}
%	\setlength{\multicolsep}{0pt}
%	\begin{multicols}{2}
	
	\begin{algorithmic}[1]
	\makeatletter\setcounter{ALG@line}{0}\makeatother
		\Procedure{STM lookup}{$L\_t\_id \downarrow, L\_obj\_id \downarrow, L\_key \downarrow, L\_val \uparrow, L\_op\_status \uparrow$} \label{lin:lookup1}
		%\State $op\_status$ $\gets$ RETRY \label{lin:lookup2};
		\State    /*First identify the node corresponding to the key into local log*/\label{lin:lookup2}
		\If{$($\txlfind$)$} \label{lin:lookup3}
	    \State    /*Getting the previous operation's name*/\label{lin:lookup4}
			\State $L\_opn$ $\gets$ \llgopn{} \label{lin:lookup5}; %\Comment{$\Phi_{lp}$}
			\State    /*If previous operation is insert/lookup then get the value/op\_status based on the previous operations value/op\_status*/\label{lin:lookup6}
			\If{$(($\textup{INSERT} $=$ \textup{$L\_opn$} $)||($ \textup{LOOKUP} $=$ \textup{$L\_opn$}$))$} \label{lin:lookup7}
	
				\State $L\_val$ $\gets$ \llgval{} \label{lin:lookup8};
				\State $L\_op\_status$ $\gets$  $L\_rec.L\_getOpStatus$($L\_obj\_id \downarrow, L\_key \downarrow$) \label{lin:lookup9};
			\State    /*If previous operation is delete then set the value as NULL and op\_status as FAIL*/\label{lin:lookup10}
				\ElsIf{$($\textup{DELETE} $=$ \textup{$L\_opn$}$)$} \label{lin:lookup11}
					\State $L\_val$ $\gets$ NULL \label{lin:lookup12}; 
					\State $L\_op\_status$ $\gets$ FAIL \label{lin:lookup13}; 
				\EndIf \label{lin:lookup14}
		%	\EndIf \label{lin:lookup11}
		\Else \label{lin:lookup15}
	    \State /* Common function for \rvmt{}, if node corresponding to the key is not part of local log*/\label{lin:lookup16}
		\State \cld{};\label{lin:lookup17}
\cmnt{		
            %\State $op\_status $ $\gets$ lslSearch($obj\_id$ $\downarrow$, $key$ $\downarrow$, $preds[]$ $\uparrow$, $currs[]$ $\uparrow$, $value_{BL}$ $\uparrow$, $RV$ $\downarrow$);	
            \State    /*if key is not present in local log then search in underlying DS with the help of list\_lookup*/
			\State \lsls{} \label{lin:lookup13}; 
			%State    /*if list\_lookup return op\_status as ABORT then method will return ABORT*/
			%If{$(op\_status_{ll}$ $=$ \textup{ABORT}$)$} \label{lin:lookup14}
%			\State release all the locks
				%State \handlea{} \label{lin:lookup15}; 
			%Else \label{lin:lookup16}
%			\State    /*check node corresponding to the key is part of underlying DS or not*/	
%			\If{$((G\_curr.key) == L\_key)$} \label{lin:lookup17}
					\State /*From $G\_k.vls$, identify the right $version\_tuple$*/ 
                    \State \find($L\_t\_id \downarrow,L\_key \downarrow, closest\_tuple \uparrow)$;	
                    \State /*Adding $L\_t\_id$ into $j$'s $rvl$*/
                    \State Append $L\_t\_id$ into $rvl$; 
 %                   \State \emph{unlock $x$};
%                    \State return $(v)$; \Comment{v is the value returned}
                \If{$(closest\_tuple.m = TRUE)$}
                    
                    \State $L\_op\_status$ $\gets$ FAIL \label{lin:lookup18};
                    \State $L\_val$ $\gets$ NULL \label{lin:lookup20};
                \Else
                    \State $L\_op\_status$ $\gets$ OK;
                    \State $L\_val$ $\gets$ $closest\_tuple.v$;
                \EndIf    
%            \Else
%                    \State /*Adding $i$ into $0$'s $rvl$*/
%                    \State Append $i$ into $rvl$; 
%                    \State $L\_op\_status$ $\gets$ FAIL;
%                    \State $L\_val$ $\gets$ $NULL$;
					%\EndIf \label{lin:lookup30}

			  \State $G\_pred.unlock()$;//$\Phi_{lp}$
                \State $G\_curr.unlock()$;
\State    /*new log entry created to help upcoming method on the same key of the same tx*/
						\State $L\_rec$ $\gets$ Create new $L\_rec\langle L\_obj\_id, L\_key \rangle$\label{lin:lookup31}; 
						%\State ll.setPreds&Currs($obj\_id$ $\downarrow$, $key$ $\downarrow$, $preds[]$ $\downarrow$, $currs[]$ $\downarrow$);
						\State \llsval{$L\_val \downarrow$}
						\State \llspc{} \label{lin:lookup32};
}						  
			
		%\EndIf \label{lin:lookup38}
					
		\EndIf \label{lin:lookup18}
		\State /*Update the local log*/ \label{lin:lookup19}
				\State \llsopn{$LOOKUP \downarrow$} \label{lin:lookup20};
			\State \llsopst{$L\_op\_status \downarrow$} \label{lin:lookup21};
			
			\State return $\langle L\_val, L\_op\_status\rangle$\label{lin:lookup22}; 
				
	\EndProcedure \label{lin:lookup23}
	\end{algorithmic}
	
	\end{multicols}
	
\end{algorithm}
\cmnt{
\begin{figure}
	\centering
	\begin{minipage}[b]{0.49\textwidth}
		\scalebox{.46}{\input{figs/mvostm5.pdf_t}}
		\centering
%		\captionsetup{justification=centering}
		% \vspace{-16cm}
		\caption{History is not opaque}
		\label{fig:mvostm5}
	\end{minipage}
	\hfill
	\begin{minipage}[b]{0.49\textwidth}
		\centering
%		\captionsetup{justification=centering}
		\scalebox{.46}{\input{figs/mvostm6.pdf_t}}
		% \vspace{-16cm}
		\caption{Opaque history}
		\label{fig:mvostm6}
	\end{minipage}   
\end{figure}
}

%\vspace{-.3cm}

%\vspace{-.3cm}
\noindent
\textbf{STM \textit{tryC()}:} The actual effect of \upmt{} (\npins{} and \npdel{}) will take place in \nptc{} method. From \Lineref{tryc5} to \Lineref{tryc15} will identify and validate the $G\_pred$ and $G\_curr$ of each \upmt{} of same transaction. At \Lineref{tryc9} it will validate if there exist any higher timestamp transaction in the $rvl$ of the $closest\_tuple$ of $G\_curr$ then returns ABORT at \Lineref{tryc11}. Otherwise it will perform the above steps for remaining \upmt{s}.

On successful validation of all the \upmt{s}, the actually effect will be taken place from \Lineref{tryc17} to \Lineref{tryc43}. If the \upmt{} is insert and node corresponding to the key is part of underlying DS then it creates the new version tuple and add it in increasing order of version list from \Lineref{tryc22} to \Lineref{tryc24}. Otherwise it will create the node with the help of \nplslins{} and insert the version tuple from \Lineref{tryc25} to \Lineref{tryc29}. 

If the \upmt{} is delete and node corresponding to the key is part of underlying DS then it creates the new version tuple and set its mark field as TRUE and add it in increasing order of version list from \Lineref{tryc31} to \Lineref{tryc34}. Otherwise it will create the node with the help of \nplslins{} and insert the version tuple with mark field TRUE from \Lineref{tryc35} to \Lineref{tryc39}. 
 
After successful completion of each \upmt{}, it will validate the $G\_pred$ and $G\_curr$ of upcoming \upmt{} of the same transaction with the help of \npintv{} at \Lineref{tryc42}. Eventually, it will release all the locks at \Lineref{tryc45} in the same order of lock acquisition.
\vspace{-.3cm}
}

\section{Correctness of \hmvotm}
\label{sec:cmvostm}
In this section, we will prove that our implementation satisfies opacity. Consider the history $H$ generated by \emph{MVOSTM} algorithm. Recall that only the \emph{STM\_begin}, \rvmt{}, \upmt{} (or $tryC$) access shared memory. 

Note that $H$ is not necessarily sequential: the transactional \mth{s} can execute in overlapping manner. To reason about correctness we have to prove $H$ is opaque. Since we defined opacity for histories which are sequential, we order all the overlapping \mth{s} in $H$ to get an equivalent sequential history. We then show that this resulting sequential history satisfies \mth{}.

We order overlapping \mth{s} of $H$ as follows: (1) two overlapping \emph{STM\_begin} \mth{s} based on the order in which they obtain lock over $counter$; (2) two \rvmt{s} accessing the same key $k$ by their order of obtaining lock over $k$; (3) a \rvmt{} $rvm_i(k)$ and a $\tryc_j$, of a transaction $T_j$ which has written to $k$, are similarly ordered by their order of obtaining lock over $k$; (4) similarly, two \tryc{} \mth{s} based on the order in which they obtain lock over same key $k$.

Combining the real-time order of events with above mentioned order, we obtain a partial order which we denote as \emph{$\locko_H$}. (It is a partial order since it does not order overlapping \rvmt{s} on different $keys$ or an overlapping \rvmt{} and a \tryc{} which do not access any common $key$).

In order for $H$ to be sequential, all its \mth{s} must be ordered. Let $\alpha$ be a total order or \emph{linearization} of \mth{s} of $H$ such that when this order is applied to $H$, it is sequential. We denote the resulting history as $H^{\alpha} = \seq{\alpha}{H}$. 
% ~
We now argue about the \validity{} of histories generated by the algorithm. 
\begin{lemma}
	\label{lem:histvalid}
	Consider a history $H$ generated by the algorithm. Let $\alpha$ be a linearization of $H$ which respects $\locko_H$, i.e. $\locko_H \subseteq \alpha$. Then $H^{\alpha} = \seq{\alpha}{H}$ is \valid. 
\end{lemma}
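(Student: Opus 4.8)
The plan is to show that every \rvmt{} in $H^{\alpha}$ returns a value written by some committed transaction that precedes it in $H^{\alpha}$. The central fact I would establish first is a structural invariant on the shared version lists: a version $\langle ts = j,\, val,\, \ldots \rangle$ occurs in the list $k.vl$ of a key $k$ \emph{only} because either (i) transaction $T_j$ executed its \tryc{} and installed this version while holding the lock on $k$ (\linref{mmtryc14}, \linref{mtryc15}, \linref{mtryc18} of \algoref{mtryc}), or (ii) it is the sentinel version with $ts=0$ and $val=null$, which materialises the \tdel{} of the initial transaction $T_0$ on $k$. In case (i) this forces $T_j \in \comm{H}$ and $T_j$ performed an \upmt{} on $k$ writing $val$; in case (ii) the value $null$ is exactly the value established by $T_0$'s committed delete on $k$. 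Thus any value ever read from a version list is the update of a committed transaction.

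Next I would fix an arbitrary \rvmt{} $m = \rvm_i(ht,k,v)$ of $H^{\alpha}$ and split on whether it touches shared memory. If $m$ is the first method of $T_i$ on $\langle ht,k\rangle$, the algorithm acquires the lock on $k$, locates the version $ver_j$ whose timestamp $j$ is the largest below $i$, and returns $v = ver_j.val$ (\linref{rvm-add_i}--\linref{rvm-ret}); when no node for $k$ exists it instead installs and reads the $ts=0$ sentinel and returns $null$ (\linref{rvm-chk_k1}--\linref{rvm-ver_abs}). By the structural invariant, $v$ is the update of a committed transaction $T_j$ (possibly $T_0$) on $k$. It remains to argue precedence: since $m$ observed $ver_j$ while holding $k$'s lock and $ver_j$ was installed by $\tryc_j$ under the same lock, the lock release by $\tryc_j$ precedes the lock acquisition by $m$. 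Hence $\tryc_j$ is ordered before $m$ in $\locko_H$, and because $\alpha$ respects the lock order (i.e.\ $\locko_H \subseteq \alpha$) the commit of $T_j$ precedes $m$ in $H^{\alpha}$. So $m$ returns the value of a previously committed transaction that updated $k$, i.e.\ $m$ is \valid{}.

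For the remaining case, where $m$ is not the first method of $T_i$ on $\langle ht,k\rangle$, the value is served from the local log $\llog_i$ (\linref{rvm-pres_log}) and equals the value recorded by the immediately preceding method of $T_i$ on $k$. I would close this by an induction on the (well-defined, since $H^{\alpha}$ is sequential) order of $T_i$'s accesses to $k$: chasing the log entry backward reaches either the first method on $k$, which is a shared-memory read shown valid above, or an \upmt{} of $T_i$ on $k$; the inheriting step uses only that the log faithfully stores the value returned or written earlier and that the methods of a single transaction are totally ordered. The sub-case in which the log value originates from $T_i$'s own prior \tins{}/\tdel{} is exactly where validity-as-literally-stated is delicate, and I would handle it together with the legality argument (clause LR1), where the same value is tied to the latest committed writer in the serial order.

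The step I expect to be the main obstacle is precisely the precedence argument, namely turning ``$m$ saw $ver_j$ in the list'' into ``$\tryc_j$ is ordered before $m$ in $\alpha$.'' Two points require care here. First, the timestamp inequality $j<i$ does \emph{not} by itself yield temporal precedence, since timestamps are assigned at \tbeg{} and a smaller-timestamp transaction may commit later; the argument must therefore rest on the \emph{physical presence} of $ver_j$ under the lock, not on timestamps. Second, one must invoke exactly clause~(3) of the definition of $\locko_H$, which orders an \rvmt{} and the \tryc{} of a transaction writing the same key by their lock-acquisition order, and only then transport this order into $H^{\alpha}$ via $\locko_H \subseteq \alpha$.
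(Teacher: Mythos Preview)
Your proposal is correct and follows essentially the same line as the paper: the core step is that the version $ver_j$ read by $\rvm_i$ under the lock on $k$ must have been installed by $\tryc_j$ under that same lock, so $\tryc_j$'s lock release precedes $\rvm_i$'s lock acquisition, whence $\tryc_j \prec_{\locko_H} \rvm_i$ and hence $\tryc_j \prec_\alpha \rvm_i$; the case $T_j = T_0$ is handled separately by the assumption that $T_0$ commits before everything.

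The paper's proof is considerably terser than yours---it is a single short paragraph that treats only the shared-memory case and does not spell out the structural invariant on version lists or the local-log case at all. Your decomposition into (a) the invariant ``versions present in $k.vl$ are exactly those installed by committed $\tryc$ or the $T_0$ sentinel'', (b) the shared-memory precedence argument via clause~(3) of $\locko_H$, and (c) the local-log induction, is a genuine refinement: the paper implicitly assumes every \rvmt{} hits shared memory, whereas you correctly observe that a subsequent \rvmt{} on the same key returns from $\llog_i$ and that the sub-case where the logged value was written by $T_i$'s own prior \tins{}/\tdel{} sits awkwardly with the literal phrasing of validity. Your deferral of that sub-case to the LR1 clause of legality is the right call; the paper simply does not discuss it. Your emphasis that the precedence argument must rest on physical presence under the lock rather than on the timestamp inequality $j<i$ is also a point the paper leaves tacit.
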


\begin{proof}
	Consider a successful \rvmt{} $rvm_i(k)$ that returns value $v$. The \rvmt{} first obtains lock on key $k$. Thus the value $v$ returned by the \rvmt{} must have already been stored in $k$'s version list by a transaction, say $T_j$ when it successfully returned OK from its \tryc{} \mth{} (if $T_j \neq T_0$). For this to have occurred, $T_j$ must have successfully locked and released $k$ prior to $T_i$'s locking \mth. Thus from the definition of $\locko_H$, we get that $\tryc_j(ok)$ occurs before $rvm_i(k,v)$ which also holds in $\alpha$. 
	
	If $T_j$ is $T_0$, then by our assumption we have that $T_j$ committed before the start of any \mth{} in $H$. Hence, this automatically implies that in both cases $H^{\alpha}$ is \valid.  
\end{proof}

It can be seen that for proving correctness, any linearization of a history $H$ is sufficient as long as the linearization respects $\locko_H$. The following lemma formalizes this intuition, 

\begin{lemma}
	\label{lem:histseq}
	Consider a history $H$. Let $\alpha$ and $\beta$ be two linearizations of $H$ such that both of them respect $\locko_H$, i.e. $\locko_H \subseteq \alpha$ and $\locko_H \subseteq \beta$. Then, $H^{\alpha} = \seq{\alpha}{H}$ is opaque if $H^{\beta} = \seq{\beta}{H}$ is opaque. 
\end{lemma}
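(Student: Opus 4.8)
The plan is to reduce the statement to the graph characterization of \thmref{opg} and to exploit the fact that $H^\alpha$ and $H^\beta$ are \emph{equivalent}: being two linearizations of the same $H$, they contain exactly the same methods and events, so $\evts{H^\alpha} = \evts{H^\beta}$, and by \lemref{histvalid} both are \valid. It therefore suffices to prove one direction, namely that $H^\beta$ opaque implies $H^\alpha$ opaque, since interchanging the roles of $\alpha$ and $\beta$ yields the converse. First I would invoke \thmref{opg} on $H^\beta$ to obtain a version order $\ll$ for which $\opg{H^\beta}{\ll}$ is acyclic, and then aim to show that the \emph{same} $\ll$ makes $\opg{H^\alpha}{\ll}$ acyclic; applying \thmref{opg} in the reverse direction to $H^\alpha$ would then finish the argument.

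Next I would compare the three edge types of $\opg{H^\alpha}{\ll}$ and $\opg{H^\beta}{\ll}$ edge by edge. The \mv{} edges coincide at once by \lemref{eqv_hist_mvorder}, since they depend only on the events and on $\ll$, both shared by the two equivalent histories. The \rvf{} edges also coincide: the return-value-from relation is fixed by which committed transaction's version each \rvmt{} returns, and this is recorded in the (identical) events and method responses of $H^\alpha$ and $H^\beta$, not in their ordering. Hence the only edges that can differ are the \rt{} (real-time) edges, which are read off from the transactional real-time orders $\prec_{H^\alpha}^{\tr}$ and $\prec_{H^\beta}^{\tr}$.

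The real-time edges are the crux, and I expect them to be the main obstacle. The delicate point is that linearizing the overlapping methods of $H$ can genuinely manufacture transaction-level real-time precedences: if the terminating method of some $T_i$ overlaps the \tbeg{} of some $T_j$ in $H$ and the two share no lock, then $\alpha$ and $\beta$ may order them oppositely, so that $T_i \prec_{H^\alpha}^{\tr} T_j$ while $T_i \not\prec_{H^\beta}^{\tr} T_j$. My plan is to control these discrepancies rather than to claim they never occur. The key structural fact I would use is that all \tbeg{} methods are totally ordered inside $\locko_H$ by the counter lock, hence identically in $\alpha$ and $\beta$; consequently every \rt{} edge, in either graph, runs from a transaction of smaller begin-timestamp to one of larger begin-timestamp. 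I would then argue that an \rt{} edge $T_i \to T_j$ present in $\opg{H^\alpha}{\ll}$ but absent from $\opg{H^\beta}{\ll}$ cannot close a cycle: completing it would require a path from $T_j$ back to $T_i$ built from \rt{}, \rvf{}, and \mv{} edges common to both graphs, and I would bound such a path using the begin-timestamp order together with the acyclicity already established for $\opg{H^\beta}{\ll}$.

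Equivalently — and this is the formulation I would try first — I would show directly that the serial witness $S$ extracted from $H^\beta$ can be re-used for $H^\alpha$. Since $\overline{H^\alpha}$ and $\overline{H^\beta}$ are equivalent, $S$ is a \legal{} \tseq{} history equivalent to $\overline{H^\alpha}$ as well, and it already respects $\prec_{H^\beta}^{\tr}$; the remaining obligation is that $S$, or a suitably chosen topological sort of $\opg{H^\alpha}{\ll}$, also respects the extra precedences of $\prec_{H^\alpha}^{\tr}$. This is precisely where the begin-timestamp monotonicity of the \rt{} edges must be leveraged, and I anticipate the bulk of the work lying in verifying that these augmented real-time constraints stay consistent with a topological order, i.e. that they introduce no cycle.
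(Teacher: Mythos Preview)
Your second plan—reusing the serial witness $S$ directly—is exactly the paper's route, and it is much shorter than your graph-characterization detour. The paper argues as follows: by \lemref{histvalid} both $H^\alpha$ and $H^\beta$ are valid; if $H^\alpha$ is opaque with legal \tseq{} witness $S$ equivalent to $\overline{H^\alpha}$, then since $\overline{H^\alpha}$ and $\overline{H^\beta}$ have the same events, $S$ is also equivalent to $\overline{H^\beta}$; finally the paper asserts in one line that $\prec_{H^\alpha}^{RT} = \prec_{\locko_H}^{RT} = \prec_{H^\beta}^{RT}$ ``from the definition of $\locko_H$,'' which immediately yields $\prec_{H^\beta}^{RT} \subseteq \prec_S^{RT}$, and hence $H^\beta$ is opaque. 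The converse is by symmetry. No appeal to \thmref{opg}, no edge-by-edge comparison, no timestamp-monotonicity argument.

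What you do differently is that you treat the possible discrepancy between $\prec_{H^\alpha}^{TR}$ and $\prec_{H^\beta}^{TR}$ as a genuine obstacle to be overcome—because, as you correctly observe, a terminal method and an overlapping $\tbeg$ share no lock and hence are unordered by $\locko_H$, so different linearizations could in principle induce different transaction-level precedences. The paper simply asserts the three real-time orders coincide and moves on; you are being more scrupulous than the paper here. Your graph-based route and the begin-timestamp monotonicity idea would work to close that gap, but they are considerably heavier than what the paper actually writes. If your goal is to match the paper, drop the graph machinery and present the direct witness-reuse argument, noting that the paper takes the equality of real-time orders as given.
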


\begin{proof} From \lemref{histvalid}, we get that both $H^{\alpha}$ and $H^{\beta}$ are \valid{} histories. Now let us consider each case \\
	\textbf{If:}  Assume that $H^{\alpha}$ is opaque. Then, we get that there exists a legal \tseq{} history $S$ that is equivalent to $\overline{H^{\alpha}}$. From the definition of $H^{\beta}$, we get that $\overline{H^{\alpha}}$ is equivalent to $\overline{H^{\beta}}$. Hence, $S$ is equivalent to $\overline{H^{\beta}}$ as well. We also have that, $\prec_{H^{\alpha}}^{RT} \subseteq \prec_{S}^{RT}$. From the definition of $\locko_H$, we get that $\prec_{H^{\alpha}}^{RT} = \prec_{\locko_H}^{RT} = \prec_{H^{\beta}}^{RT}$. This automatically implies that $\prec_{H^{\beta}}^{RT} \subseteq \prec_{S}^{RT}$. Thus $H^{\beta}$ is opaque as well. 
	
	~ \\
	\textbf{Only if:} This proof comes from symmetry since $H^{\alpha}$ and $H^{\beta}$ are not distinguishable. 
\end{proof}

This lemma shows that, given a history $H$, it is enough to consider one sequential history $H^{\alpha}$ that respects $\locko_H$ for proving correctness. If this history is opaque, then any other sequential history that respects $\locko_H$ is also opaque.

Consider a history $H$ generated by \hmvotm{} algorithm. We then generate a sequential history that respects $\locko_H$. For simplicity, we denote the resulting sequential history of \hmvotm{} as $H_{to}$. Let $T_i$ be a committed transaction in $H_{to}$ that writes to $k$ (i.e. it creates a new version of $k$). 

To prove the correctness, we now introduce some more notations. We define $\stl{T_i}{k}{H_{to}}$ as a committed transaction $T_j$ such that $T_j$ has the smallest timestamp greater than $T_i$ in $H_{to}$ that writes to $k$ in $H_{to}$. Similarly, we define $\lts{T_i}{k}{H_{to}}$ as a committed transaction $T_k$ such that $T_k$ has the largest timestamp smaller than $T_i$ that writes to $k$ in $H_{to}$. \cmnt{ We denote $\vli{ts}{x}{H_{to}}$, as the $v\_tuple$ in $x.vl$ after executing all the events in $H_{to}$ created by transaction $T_{ts}$. If no such $v\_tuple$ exists then it is nil. }Using these notations, we describe the following properties and lemmas on $H_{to}$,

\begin{property}
	\label{prop:uniq-ts}
	Every transaction $T_i$ is assigned an unique numeric timestamp $i$.
\end{property}

\begin{property}
	\label{prop:ts-inc}
	If a transaction $T_i$ begins after another transaction $T_j$ then $j < i$.
\end{property}

\begin{property}
	\label{prop:readsfrom}
	If a transaction $T_k$ lookup key $k_x$ from (a committed transaction) $T_j$ then $T_j$ is a committed transaction updating to $k_x$ with $j$ being the largest timestamp smaller than $k$. Formally, $T_j = \lts{k_x}{T_k}{H_{to}}$. 
\end{property}

\cmnt{
	\begin{definition}
		\label{def:fLP}
		%For each successful method, $LP$ is the first unlocking point of that method. 
		Linearization Point (LP) is the first unlocking point of each successful method.
	\end{definition}
}

\begin{lemma}
	\label{lem:readswrite}
	Suppose a transaction $T_k$ lookup $k_x$ from (a committed transaction) $T_j$ in $H_{to}$, i.e. $\{\up_j(k_{x,j}, v), \rvm_k(k_{x,i}, v)\} \in \evts{H_{to}}$. Let $T_i$ be a committed transaction that updates to $k_x$, i.e. $\up_i(k_{x,i}, u) \in \evts{T_i}$. Then, the timestamp of $T_i$ is either less than $T_j$'s timestamp or greater than $T_k$'s timestamp, i.e. $i<j \oplus k<i$ (where $\oplus$ is XOR operator). 
\end{lemma}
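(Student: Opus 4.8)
The plan is to argue by contradiction, showing that no committed transaction can install a version of $k_x$ whose timestamp lies strictly between $j$ and $k$. First I would record the standing facts supplied by \propref{readsfrom}: since $T_k$ looks up $k_x$ from $T_j$ we have $T_j = \lts{k_x}{T_k}{H_{to}}$, so $T_j$ is committed, writes $k_x$, and $j < k$. In particular the two disjuncts $i<j$ and $k<i$ are mutually exclusive, since both together would force $k<i<j$, contradicting $j<k$. Hence, assuming the three transactions are distinct (so $i\neq j$, which follows from $u\neq v$, and $i\neq k$), the claimed XOR reduces to proving that no committed writer $T_i$ of $k_x$ has $i\in(j,k)$. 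By \lemref{histseq} it suffices to reason on the sequential history $H_{to}$ ordered by $\locko_H$.

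Since timestamps are unique and totally ordered (\propref{uniq-ts}), I would suppose toward a contradiction that at least one committed transaction writes a version of $k_x$ with timestamp in $(j,k)$, and then take $T_i$ to be the one with the \emph{smallest} such timestamp; thus there is no committed writer of $k_x$ with timestamp in $(j,i)$. Both $\rvm_k$ and the \tryc{} of $T_i$ access the shared node of $k_x$ and so acquire its lock, and are therefore ordered by $\locko_H$. I would split on this order.

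If the \tryc{} of $T_i$ acquires the lock on $k_x$ before $\rvm_k$ does, then the version $k_{x,i}$ is already installed when $\rvm_k$ runs; because $j<i<k$, the version of largest timestamp smaller than $k$ that $\rvm_k$ selects (the one whose $rvl$ it updates at \Lineref{rvm-add_i}) has timestamp at least $i>j$, so $T_k$ would read from a transaction of timestamp $\ge i$, contradicting $T_j=\lts{k_x}{T_k}{H_{to}}$. If instead $\rvm_k$ acquires the lock first, then $\rvm_k$ has already appended $k$ to the $rvl$ of the version $k_{x,j}$ it read from. When $T_i$ subsequently runs \emph{tryC\_Validation} inside \tryc{}, it locates the version of $k_x$ with the largest timestamp smaller than $i$; by minimality of $i$ no committed writer lies in $(j,i)$, and $k_{x,j}$ is installed (as $T_j$ committed before $\rvm_k$, which precedes this \tryc{}), so that version is exactly $k_{x,j}$. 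Its $rvl$ now contains $k>i$, so the validation detects a higher-timestamp reader (\Lineref{trycv7}--\Lineref{trycv8}) and \tryc{} returns Abort (\Lineref{mtryc6}), contradicting that $T_i$ committed.

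Both cases being impossible, no committed writer of $k_x$ has timestamp in $(j,k)$; together with $j<k$ and the distinctness of the three transactions this gives $i<j \oplus k<i$. I expect the main obstacle to be the second case, where one must exclude an intervening committed writer $T_p$ with $j<p<i$ whose version $T_i$ would validate against instead of $k_{x,j}$; choosing $T_i$ of minimal timestamp in $(j,k)$ is precisely the device that removes this possibility. The other point demanding care is justifying the phrases ``already installed'' and ``already in the $rvl$'' in each branch purely from the lock order $\locko_H$, which is exactly what \lemref{histseq} permits by letting us work with the single sequential history $H_{to}$.
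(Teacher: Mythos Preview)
Your proof is correct and follows the same contradiction strategy as the paper: assume $j<i<k$ and derive that either $T_i$ would have aborted or $T_k$ would not have read from $T_j$. The paper's own proof is a two-line sketch that simply asserts ``from the implementation of \rvmt{} and \tryc{} \mth{s}, we get that either transaction $T_i$ is aborted or $T_k$ lookup $k$ from $T_i$''; you supply the mechanics the paper omits, namely the case split on the $\locko_H$ order between $\rvm_k$ and $\tryc_i$, and the minimality device to ensure in the second case that $T_i$'s validation looks at $k_{x,j}$ rather than some intervening version. Your treatment of the XOR (showing the disjuncts are mutually exclusive via $j<k$) is also a detail the paper leaves implicit.
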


\begin{proof}
	We will prove this by contradiction. Assume that $i<j \oplus k<i$ is not true. This implies that, $j<i<k$. But from the implementation of \rvmt{} and \tryc{} \mth{s}, we get that either transaction $T_i$ is aborted or $T_k$ lookup $k$ from $T_i$ in $H$. Since neither of them are true, we get that $j<i<k$ is not possible. Hence, $i<j \oplus k<i$. 
\end{proof}

To show that $H_{to}$ satisfies opacity, we use the graph characterization developed above in \secref{gcofo}. For the graph characterization, we use the version order defined using timestamps. Consider two committed transactions $T_i, T_j$ such that $i < j$. Suppose both the transactions write to key $k$. Then the versions created are ordered as: $k_i \ll k_j$. We denote this version order on all the $keys$ created as $\ll_{to}$. Now consider the opacity graph of $H_{to}$ with version order as defined by $\ll_{to}$, $G_{to} = \opg{H_{to}}{\ll_{to}}$. In the following lemmas, we will prove that $G_{to}$ is acyclic.

\begin{lemma}
	\label{lem:edgeorder}
	All the edges in $G_{to} = \opg{H_{to}}{\ll_{to}}$ are in timestamp order, i.e. if there is an edge from $T_j$ to $T_i$ then the $j < i$. 
\end{lemma}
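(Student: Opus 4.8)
The plan is to take the three kinds of edges of $G_{to} = \opg{H_{to}}{\ll_{to}}$ one at a time and show that each of them, when directed from some $T_a$ to some $T_b$, forces $a < b$. Since the version order $\ll_{to}$ is by construction just the timestamp order on the versions of each key, the multi-version edges should reduce to the reads-from structure, whereas the real-time and return-value-from edges follow almost immediately from the properties already established. I would state the lemma's conclusion as the conjunction of these three sub-claims and then dispatch them individually.

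First I would handle the real-time edges. If there is an $\rt$ edge from $T_j$ to $T_i$, then $T_j$ committed before $T_i$ began in $H_{to}$; in particular $T_i$ begins after $T_j$, so \propref{ts-inc} gives $j < i$ directly. Next, for a return-value-from edge directed from a committed writer $T_j$ to a reader $T_i$, the reader $T_i$ obtains its value from $T_j$, and by \propref{readsfrom} the transaction read from is exactly the committed updater of that key carrying the largest timestamp smaller than the reader's. Hence $j < i$, as required.

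The main work lies in the multi-version edges, and this is the step I expect to be the crux. Recall such an edge arises from a triple $\up_i(k_{x,i}, u)$, $\rvm_j(k_{x,i}, u)$, $\up_k(k_{x,k}, v)$ with $u \neq v$, where $T_j$ reads the version of $k_x$ written by $T_i$ and $T_k$ is another committed writer of $k_x$. The definition splits on the version order: if $k_{x,i} \ll_{to} k_{x,k}$ the edge is $T_j \to T_k$, and otherwise ($k_{x,k} \ll_{to} k_{x,i}$) the edge is $T_k \to T_i$. In the second case, $k_{x,k} \ll_{to} k_{x,i}$ means $k < i$ by the definition of $\ll_{to}$, which is precisely the claim for that edge. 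In the first case, $k_{x,i} \ll_{to} k_{x,k}$ gives $i < k$, so I still need $j < k$ for the edge $T_j \to T_k$; here I would invoke \lemref{readswrite} with reader $T_j$, its read-from writer $T_i$, and the other writer $T_k$, which yields $k < i \oplus j < k$. Since $i < k$ already rules out $k < i$, the exclusive-or forces $j < k$, closing this case.

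Having verified that every edge points from a lower to a higher timestamp, I would conclude the lemma. The only delicate point is aligning the orientation of the multi-version edge with the correct instance of \lemref{readswrite} and checking that the version-order hypothesis and the XOR alternative are mutually consistent; once the index bookkeeping is fixed, each case is a one-line deduction.
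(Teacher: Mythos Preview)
Your proposal is correct and mirrors the paper's own proof essentially step for step: the same three-way case split on $\rt$, $\rvf$, and $\mv$ edges, the same appeals to \propref{ts-inc} and \propref{readsfrom} for the first two, and for the $\mv$ case the same two sub-cases on $\ll_{to}$ with the $T_j \to T_k$ direction closed via \lemref{readswrite}. The only difference is that you spell out the XOR elimination explicitly where the paper simply writes ``combining this with \lemref{readswrite}, we get that $j<k$.''
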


\begin{proof}
	To prove this, let us analyze the edges one by one, 
	\begin{itemize}
		\item \rt{} edges: If there is a \rt{} edge from $T_j$ to $T_i$, then $T_j$ terminated before $T_i$ started. Hence, from \propref{ts-inc} we get that $j < i$.
		
		\item \rvf{} edges: This follows directly from \propref{readsfrom}. 
		
		\item \mv{} edges: The \mv{} edges relate a committed transaction $T_k$ updates to a key $k$, $up_k(k,v)$; a successful \rvmt{} $rvm_j(k,u)$ belonging to a transaction $T_j$ lookup $k$ updated by a committed transaction $T_i$, $up_i(k, u)$. Transactions $T_i, T_k$ create new versions $k_i, k_k$ respectively. According to $\ll_{to}$, if $k_k \ll_{to} k_i$, then there is an edge from $T_k$ to $T_i$. From the definition of $\ll_{to}$ this automatically implies that $k < i$.
		
		On the other hand, if $k_i \ll_{to} k_k$ then there is an edge from $T_j$ to $T_k$. Thus in this case, we get that $i < k$. Combining this with \lemref{readswrite}, we get that $j < k$.

		%	\item \mv{} edges: The \mv{} edges relate a committed transaction $T_i$ updates to a key $k$, $up_i(k,u)$; a successful \rvmt{} $rvm_k(k,v)$ belonging to a transaction $T_k$ lookup $k$ updated by a committed transaction $T_j$, $up_j(k, v)$ and transaction $T_j$. Transactions $T_j, T_i$ create new versions $k_i, k_j$ respectively. According to $\ll_{to}$, if $k_i \ll_{to} k_j$, then there is an edge from $T_i$ to $T_j$. From the definition of $\ll_{to}$ this automatically implies that $i < j$.
		
		%	On the other hand, if $k_j \ll_{to} k_i$ then there is an edge from $T_k$ to $T_i$. Thus in this case, we get that $j < i$. Combining this with \lemref{readswrite}, we get that $k < i$. 
	\end{itemize}
	Thus in all the cases we have shown that if there is an edge from $T_j$ to $T_i$ then the $j < i$.
\end{proof}

\begin{theorem}
	\label{thm:to-opaque}
	Any history $H_{to}$ generated by \hmvotm is \opq. 
\end{theorem}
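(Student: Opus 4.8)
The plan is to reduce the opacity of $H_{to}$ to the acyclicity of its opacity graph using the characterization in \thmref{opg}, and then to obtain that acyclicity almost immediately from the timestamp-ordering of edges already established in \lemref{edgeorder}. In other words, the two hard ingredients (validity and edge monotonicity) are supplied by earlier results, and this theorem is essentially their assembly.

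First I would confirm that the hypotheses of \thmref{opg} are met. That theorem requires a \valid{} history together with a version order whose opacity graph is acyclic. Validity is already in hand: $H_{to}$ is the linearization of the (possibly non-sequential) history $H$ that respects $\locko_H$, so \lemref{histvalid} gives that $H_{to}$ is \valid. I would also note that \lemref{histseq} justifies reasoning about this single linearization, since if $H_{to}$ is shown opaque then every linearization respecting $\locko_H$ is opaque, making the particular choice immaterial. For the version order I would use $\ll_{to}$, which sets $k_i \ll_{to} k_j$ whenever committed transactions $T_i, T_j$ both write $k$ and $i<j$; this is a well-defined total order on the versions of each \tobj{} because timestamps are unique by \propref{uniq-ts}.

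Next I would form the graph $G_{to} = \opg{H_{to}}{\ll_{to}}$ and argue it is acyclic. The crux is \lemref{edgeorder}, which asserts that every edge of $G_{to}$ — be it an \rt, \rvf, or \mv{} edge — runs from a transaction of smaller timestamp to one of larger timestamp. Hence any directed cycle $T_{i_1} \to T_{i_2} \to \cdots \to T_{i_m} \to T_{i_1}$ would force $i_1 < i_2 < \cdots < i_m < i_1$, which is impossible. Therefore $G_{to}$ contains no cycle, and applying the ``if'' direction of \thmref{opg} to the \valid{} history $H_{to}$ equipped with the acyclic graph $G_{to}$ yields that $H_{to}$ is \opq.

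I expect the substantive work to lie entirely in the supporting lemmas rather than in this final step, so the proof itself should be short. The only places warranting care are verifying that the hypotheses of \thmref{opg} are invoked exactly as stated — in particular that $H_{to}$ is genuinely \valid{} and that $\ll_{to}$ is a legitimate per-object version order — and confirming that \lemref{edgeorder} covers \emph{all three} edge types uniformly, so that no stray edge can break timestamp monotonicity. Granting that lemma, acyclicity and hence opacity follow at once.
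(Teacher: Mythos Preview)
Your proposal is correct and follows essentially the same approach as the paper: establish validity of $H_{to}$ via \lemref{histvalid}, use \lemref{edgeorder} to conclude that any cycle in $G_{to}=\opg{H_{to}}{\ll_{to}}$ would force a strictly increasing chain of timestamps returning to its start, hence $G_{to}$ is acyclic, and then invoke \thmref{opg}. Your additional remarks about \lemref{histseq} and \propref{uniq-ts} are sound but not strictly needed for the argument as the paper presents it.
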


\begin{proof}
	From the definition of $H_{to}$ and \lemref{histvalid}, we get that $H_{to}$ is \valid. We show that $G_{to} = \opg{H_{to}}{\ll_{to}}$ is acyclic. We prove this by contradiction. Assume that $G_{to}$ contains a cycle of the form, $T_{c1} \rightarrow T_{c2} \rightarrow .. T_{cm} \rightarrow T_{c1}$. From \lemref{edgeorder} we get that, $c1 < c2 < ... < cm < c1$ which implies that $c1 < c1$. Hence, a contradiction. This implies that $G_{to}$ is acyclic. Thus from \thmref{opg} we get that $H_{to}$ is opaque.
\end{proof}

Now, it is left to show that our algorithm is \emph{live}, i.e., under certain conditions, every operation eventually completes. We have to show that the transactions do not deadlock. This is because all the transactions lock all the $keys$ in a predefined order. As discussed earlier, the STM system orders all $keys$. We denote this order as \aco and denote it as $\prec_{ao}$. Thus $k_1 \prec_{ao} k_2 \prec_{ao} ... \prec_{ao} k_n$. 

From \aco, we get the following property

\begin{property}
	\label{prop:aco}
	Suppose transaction $T_i$ accesses shared objects $p$ and $q$ in $H$. If $p$ is ordered before $q$ in \aco, then $lock(p)$ by transaction $T_i$ occurs before $lock(q)$. Formally, $(p \prec_{ao} q) \Leftrightarrow (lock(p) <_H lock(q))$. 
\end{property}

\begin{theorem}
	\label{thm:prog}
	\hmvotm with unbounded versions ensures that \rvmt{s} do not abort.
\end{theorem}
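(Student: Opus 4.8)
The plan is to argue directly from the code of the \rvmt{} method (\algoref{rvmt}), showing that no execution path of an \rvmt{} ever returns $\mathcal{A}$ (abort), and that the method always has a version available to return. Since a transaction aborts only when one of its \mth{s} returns $\mathcal{A}$, and a lookup-only transaction invokes only \rvmt{s} (its \tryc{} has no \upmt{} to validate and so simply commits), this suffices.

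First I would inspect the control flow of \algoref{rvmt}. The only validation performed by an \rvmt{} is \emph{rv\_Validation} at \linref{rvm-chk_valid}; crucially, on failure this does \emph{not} return $\mathcal{A}$ but releases the locks and retries the search (\linref{rvm-retry}). This is the key structural difference from \tryc{} (\algoref{mtryc}), where a failed \emph{tryC\_Validation} does return $Abort$ (\linref{mtryc6}). So I would observe that \algoref{rvmt} contains no abort statement at all: every terminating branch returns either $null$ (the \linref{rvm-ver_abs} branch) or the value $ver_j.val$ of some version (\linref{rvm-ret}).

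Next I would show that such a version always exists, which is where unbounded versioning is essential. There are two cases. If the key $k$ is absent from $B_k.\lsl$, the method itself creates a node carrying a single version $\langle ts=0, val=null, \dots\rangle$ and returns $null$ (\linref{rvm-chk_k}--\linref{rvm-ver_abs}); no abort is possible. If $k$ is present, the method reads the version $ver_j$ whose timestamp $j$ is the largest one smaller than $i$. I would argue $ver_j$ is always well defined: by the standing assumption that the initial transaction $T_0$ has invoked \tdel{} on every key, each key carries a committed version with $ts=0$, and since every other transaction is assigned a timestamp $i>0$ (\propref{uniq-ts}), that $ts=0$ version is a legitimate candidate with timestamp smaller than $i$. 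With unbounded versions no committed version is ever discarded, so this candidate (hence $ver_j$) is guaranteed present when the \rvmt{} executes.

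The main obstacle is precisely this last invariant, and I would stress why it fails under bounded versions. In the $K$-bounded variant (\kotm), once the version list exceeds $K$ entries the oldest version is overwritten; in particular the $ts=0$ version, or every version with timestamp below $i$, could be evicted before $T_i$'s \rvmt{} runs, leaving no readable version and thereby forcing an abort. Thus the proof hinges on ``no committed version is ever removed,'' which holds only in the unbounded setting. A secondary point I would address for completeness is that the \emph{rv\_Validation} retry does not silently turn into an abort: each retry re-runs the lock-free search of \lazy{} and re-acquires locks in the access order $\prec_{ao}$ (\propref{aco}), so regardless of how often it retries, it can only exit a single attempt by returning a value. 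Assembling these observations gives that every \rvmt{} returns a value rather than $\mathcal{A}$, so lookup-only transactions never abort.
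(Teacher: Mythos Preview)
Your proof is correct and follows the same core idea as the paper's: in the unbounded-version setting every \rvmt{} is guaranteed to find a version to read from, and since \algoref{rvmt} contains no abort path (a failed \emph{rv\_Validation} merely retries), the method never returns $\mathcal{A}$. The paper's own argument is a two-line informal observation to exactly this effect; your explicit walk through the control flow, the case split on whether $k$ is present, and the contrast with the bounded-version variant are considerably more detailed but reach the same conclusion by the same route.
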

\begin{proof}
	This is self explanatory with the help of \hmvotm{} algorithm because each $key$ is maintaining multiple versions in the case of unbounded versions. So \rvmt{} always finds a correct version to read it from. Thus, \rvmt{s} do not $abort$. 
\end{proof}

\noindent \thmref{prog} gives us a nice property a transaction with \tlook only \mth{s} will not abort. 
%\vspace{-7mm}
\section{Experimental Evaluation}
\label{sec:exp}
%\vspace{1mm}

In this section, we present our experimental results. We have two main goals in this section: (1) evaluating the benefit of multi-version object STMs over the single-version object STMs, and (2) evaluating the benefit of multi-version object STMs over  multi-version read-write STMs. We use the \hmvotm described in \secref{pcode} as well as the corresponding \lmvotm which implements the list object. We also consider extensions of these multi-version object STMs to reduce the memory usage. Specifically, we consider a variant that implements garbage collection with unbounded versions and another variant where the number of versions never exceeds a given threshold $K$. 

%\todo{SK: Remove}
%In this section, we study the performance characteristics of our proposed algorithm hash-table based MVOSTM (HT-MVOSTM) against HT-OSTM \cite{Peri+:OSTM:Netys:2018}, ESTM \cite{Felber+:ElasticTrans:2017:jpdc}, HT-MVTO \cite{Kumar+:MVTO:ICDCN:2014}, HT-MVOSTM with garbage collection (HT-MVOSTM-GC), HT-KOSTM(bounded version OSTM) and HT-RWSTM which is synchronized by basic time-stamp ordering \cite{WeiVoss:TIS:2002:Morg}. We have also compared proposed list-MVOSTM against list-OSTM \cite{Peri+:OSTM:Netys:2018}, list-MVTO \cite{Kumar+:MVTO:ICDCN:2014}, NOrec STM list (NTM) \cite{Dalessandro+:NoRec:PPoPP:2010}, Boosting list (BST) \cite{HerlihyKosk:Boosting:PPoPP:2008}, Lock-free transactional list (LFT) \cite{ZhangDech:LockFreeTW:SPAA:2016}, list-MVOSTM with garbage collection(list-MVOSTM-GC) and list-KOSTM(bounded version OSTM).

%Our experimental goals are : (G1) To study the performance of HT-MVOSTM and list-MVOSTM against the state-of-the-art object and hash-table/list based  STM algorithms over different workloads. (G2) To analyze the number of aborted transactions on both settings. (G3) To analyze the behavior of garbage collection on unbounded OSTM(MVOSTM) against bounded OSTM(KOSTM).
\vspace{1mm}
\noindent
\textbf{Experimental system:} The Experimental system is a large-scale 2-socket Intel(R) Xeon(R) CPU E5-2690 v4 @ 2.60GHz with 14 cores per socket and two hyper-threads (HTs) per core, for a total of 56 threads. Each core has a private 32KB L1 cache and 256 KB L2 cache (which is shared among HTs on that core). All cores on a socket share a 35MB L3 cache. The machine has 32GB of RAM and runs Ubuntu 16.04.2 LTS. All code was compiled with the GNU C++ compiler (G++) 5.4.0 with the build target x86\_64-Linux-gnu and compilation option -std=c++1x -O3.

\vspace{1mm}
\noindent
\textbf{STM implementations:} We have taken the implementation of NOrec-list \cite{Dalessandro+:NoRec:PPoPP:2010}, Boosting-list \cite{HerlihyKosk:Boosting:PPoPP:2008}, Trans-list \cite{ZhangDech:LockFreeTW:SPAA:2016},  ESTM \cite{Felber+:ElasticTrans:2017:jpdc}, and RWSTM directly from the TLDS framework\footnote{https://ucf-cs.github.io/tlds/}. And the implementation of OSTM and MVTO published by the author. We implemented our algorithms in C++. Each STM algorithm first creates N-threads, each thread, in turn, spawns a transaction. Each transaction exports the following methods as follows: \tbeg{}, \tins{}, \tlook{}, \tdel{} and \tryc{}.
%Each transaction has two phases, the first phase, i.e., the \emph{return value method execution} followed by the second phase, i.e., the \emph{update method execution} phase defined in detail in \secref{mvdesign}.

\vspace{1mm}
\noindent
\textbf{Methodology:\footnote{Code is available here: https://github.com/PDCRL/MVOSTM}} We have considered two types of workloads: ($W1$) Li - Lookup intensive (90\% lookup, 8\% insert and 2\% delete) and ($W2$) Ui - Update intensive(10\% lookup,  45\% insert and 45\% delete). The experiments are conducted by varying number of threads from 2 to 64 in power of 2, with 1000 keys randomly chosen. We assume that the \tab{} of \hmvotm has five buckets and each of the bucket (or list in case of \lmvotm) can have a maximum size of 1000 keys. Each transaction, in turn, executes 10 operations which include \tlook, \tdel{} and \tins{} operations. %For accurate results, 
We take an average over 10 results as the final result for each experiment.

\vspace{1mm}
\noindent
\textbf{Results:}
\figref{htw1} shows \hmvotm outperforms all the other algorithms(HT-MVTO, RWSTM, ESTM, HT-OSTM) by a factor of 2.6, 3.1, 3.8, 3.5 for workload type $W1$ and by a factor of 10, 19, 6, 2 for workload type $W2$ respectively. As shown in \figref{htw1}, List based MVOSTM (\lmvotm)  performs even better compared with the existing state-of-the-art STMs (list-MVTO, NOrec-list, Boosting-list, Trans-list, list-OSTM) by a factor of 12, 24, 22, 20, 2.2 for workload type $W1$ and by a factor of  169, 35, 24, 28, 2 for workload type $W2$ respectively.
As shown in \figref{htw2} for both types of workloads, HT-MVOSTM and list-MVOSTM have the least number of aborts.
%\vspace{1mm}
%\\

%\vspace{-1.3cm}
\begin{figure}[H]
	\centering
	%\centering
	\captionsetup{justification=centering}
	\includegraphics[width=13cm]{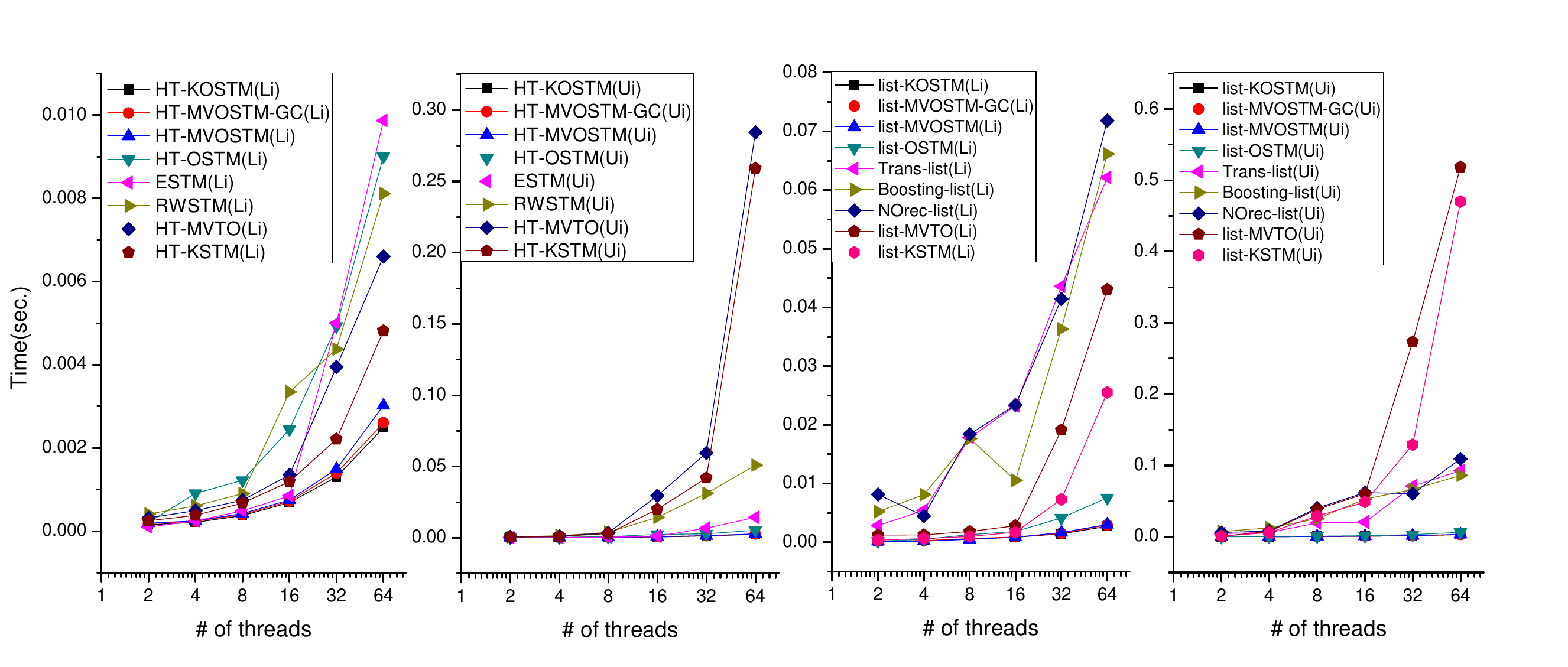}
	\centering
	\caption{Performance of \hmvotm and \lmvotm}\label{fig:htw1}
\end{figure}
%\vspace{-1.8cm}
\begin{figure}[H]
%	\centering
	%\centering
	\captionsetup{justification=centering}
	\includegraphics[width=13.5cm]{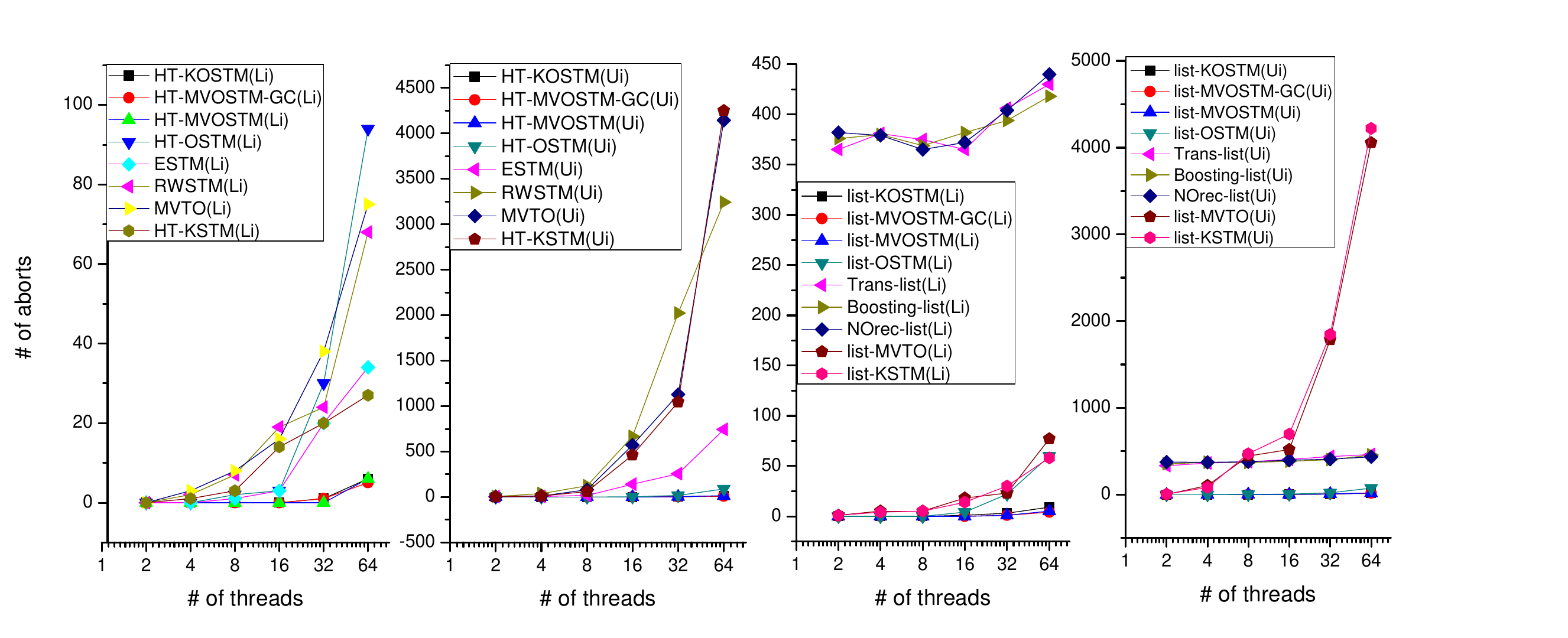}
	\centering
	\caption{Aborts of \hmvotm and \lmvotm}\label{fig:htw2}
\end{figure}
%\vspace{-1cm}

\noindent
\\
\textbf{MVOSTM-GC and KOSTM:} For efficient memory utilization, we develop two variations of \mvotm.\cmnt{\todo{SK: Fix KOSTM macro}}
The first, \mvotmgc, uses unbounded versions but performs garbage collection. \textbf{This is achieved by deleting non-latest versions whose timestamp is less than the timestamp of  the least live transaction.} \mvotmgc gave a performance gain of 15\%  over \mvotm without garbage collection in the best case.
%as follows: whenever a transaction wants to create a new version of a key, it checks for the least live transaction (LLT) in the system. If current transaction is LLT then it deletes all the previous versions of that key and create one of its own.
%
The second, \kotm, keeps at most $K$  versions by deleting the oldest version when $(K+1)^{th}$ version is created by a current transaction. As \kotm has limited number of versions while \mvotmgc can have infinite versions, the memory consumed by \kotm is 21\% less than \mvotm. (Implementation details for both are in the below.)

\cmnt{\todo{SK:Remove this paragraph}
one with garbage collection on unbounded MVOSTMs (MVOSTM-GC.\footnote{Implementation details of MVOSTM-GC and KOSTM are give in appendix. ???\label{chirag}}), where each transaction that wants to create a new version of a key checks for the least live transaction (LLT) in the system, if current transaction is LLT then it deletes all the previous versions of that key and create one of its own. Other variation is by using finite limit on the number of versions or bounding the versions in MVOSTMs (KOSTM \footref{chirag}), where oldest versions is overridden by a validated transaction that wishes to create a new version once the limit of version count is reached.} 

We have integrated these variations in both \tab{} based (\hmvotmgc and \hkotm) and linked-list based MVOSTMs (\lmvotmgc and \lkotm), we observed that these two variations increase the performance, concurrency and reduces the number of aborts as compared to MVOSTM.

Experiments show that these variations outperform the corresponding MVOSTMs. Between these two variations, \kotm perform better than \mvotmgc as shown in \figref{htw1} and \figref{htw2}. \hkotm helps to achieve a performance speedup of 1.22 and 1.15 for workload type $W1$ and speedup of 1.15 and 1.08 for workload type $W2$ as compared to \hmvotm and \hmvotmgc respectively. Whereas \lkotm (with four versions) gives a speedup of 1.1, 1.07 for workload type $W1$ and speedup of 1.25, 1.13 for workload type $W2$ over the \lmvotm and \lmvotmgc respectively. 

\vspace{1mm}
\noindent
\textbf{Mid-Intensive workload:} 
Similar to Lookup intensive and Update intensive experiments we have conducted experiments for mid intensive workload ($W3$) as well where we have considered 50\% update operations(25\% insert, 25\% delete) and 50\% read operations. Under this setting again \mvotm outperforms all the other algorithms for both \hmvotm and \lmvotm. 
\figref{pmidI} shows \hmvotm outperforms all the other algorithms(HT-MVTO, RWSTM, ESTM, HT-OSTM) by a factor of 10.1, 4.85, 3, 1.4 for workload type $W3$ respectively. As shown in \figref{pmidI}, list-based MVOSTM (\lmvotm)  performs even better compared with the existing state-of-the-art algorithms (list-MVTO, NOrec-list, Boosting-list, Trans-list, list-OSTM) by a factor of 26.8, 29.4, 25.9, 20.9, 1.58 for workload type $W3$ respectively. Even the abort count for \mvotm is least as compared to all other algorithms for both \hmvotm as well as \lmvotm. \figref{amidI} shows our experimental results for the abort count.
%\vspace{-.7cm}
\begin{figure}
	\centering
	%\centering
	\captionsetup{justification=centering}
	\includegraphics[width=13cm]{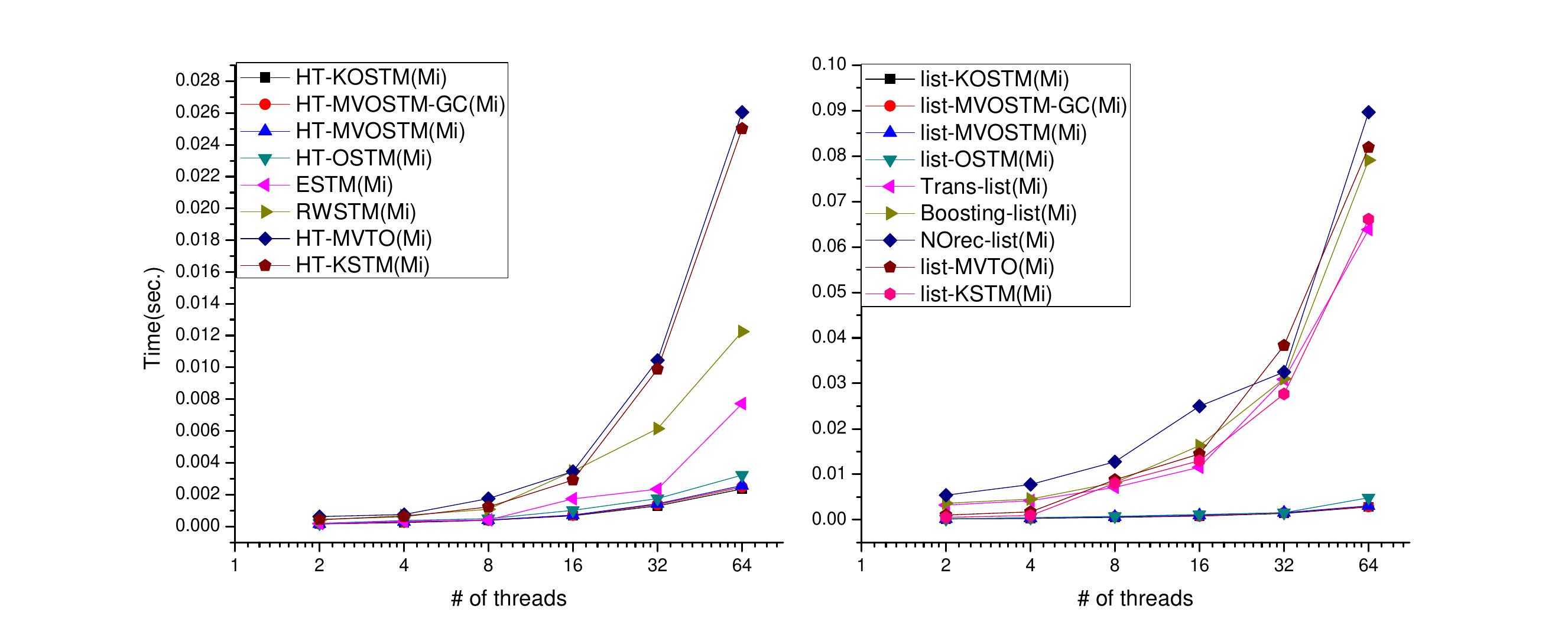}
	\centering
	\caption{Performance of \hmvotm and \lmvotm}\label{fig:pmidI}
\end{figure}
%\vspace{-.7cm}
\begin{figure}
	\centering
	%\centering
	\captionsetup{justification=centering}
	\includegraphics[width=13cm]{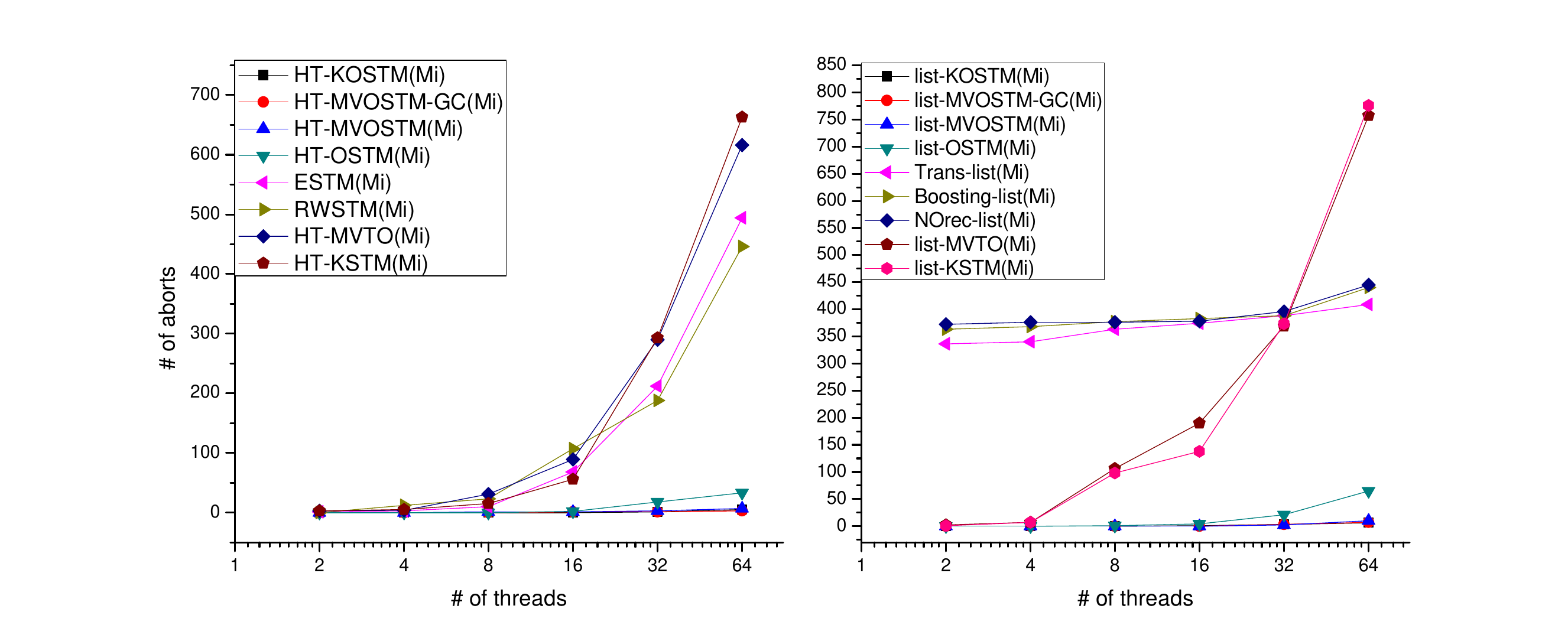}
	\centering
	\caption{Aborts of \hmvotm and \lmvotm}\label{fig:amidI}
\end{figure}
%\vspace{-.7cm}

\vspace{1mm}
\noindent
\textbf{Garbage Collection in MVOSTMs (\mvotmgc):} Providing multiple versions to increase the performance of OSTMs in MVOSTMs lead to more space requirements. As many unnecessary versions pertain in the memory a technique to remove these versions or to collect these garbage versions is required. Hence we came up with the idea of garbage collection in MVOSTMs. We have implemented garbage collection for \mvotm for both \tab{} and linked-list based approaches. Each transaction, in the beginning, logs its time stamp in a global list named as ALTL (All live transactions list), which keeps track of all the live transactions in the system. Under the optimistic approach of STM, each transaction performs its updates in the shared memory in the update execution phase. Each transaction in this phase performs some validations and if all validations are completed successfully a version of that key is created by that transaction. When a transaction goes to create a version of a key in the shared memory, it checks for the least time stamp live transaction present in the ALTL. If the current transaction is the one with least timestamp present in ALTL, then this transaction deletes all the older versions of the current key and create a version of its own. If current transaction is not the least timestamp live transaction then it doesn't do any garbage collection. In this way, we ensure each transaction performs garbage collection on the keys it is going to create a version on. Once the transaction, changes its state to commit, it removes its entry from the ALTL. As shown in \figref{KHTGC} and \figref{KlistGC} MVOSTM with garbage collection (\hmvotmgc and \lmvotmgc) performs better than MVOSTM without garbage collection.
%\vspace{-.7cm}
\begin{figure}
	\centering
	%\centering
	\captionsetup{justification=centering}
	\includegraphics[width=13cm]{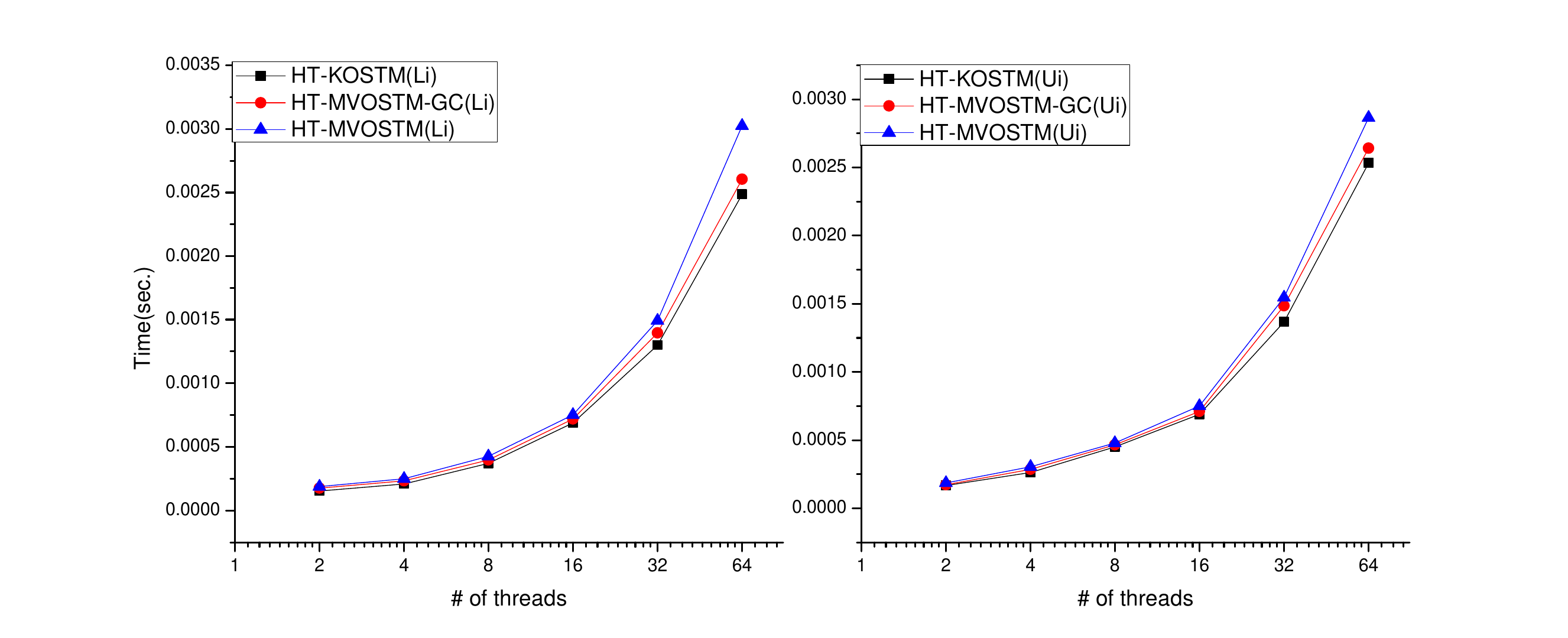}
	\centering
	\caption{Performance comparisons of variations (\hmvotm and \hkotm) of \hmvotm}\label{fig:KHTGC}
\end{figure}
%\vspace{-.7cm}
\begin{figure}
	\centering
	%\centering
	\captionsetup{justification=centering}
	\includegraphics[width=13cm]{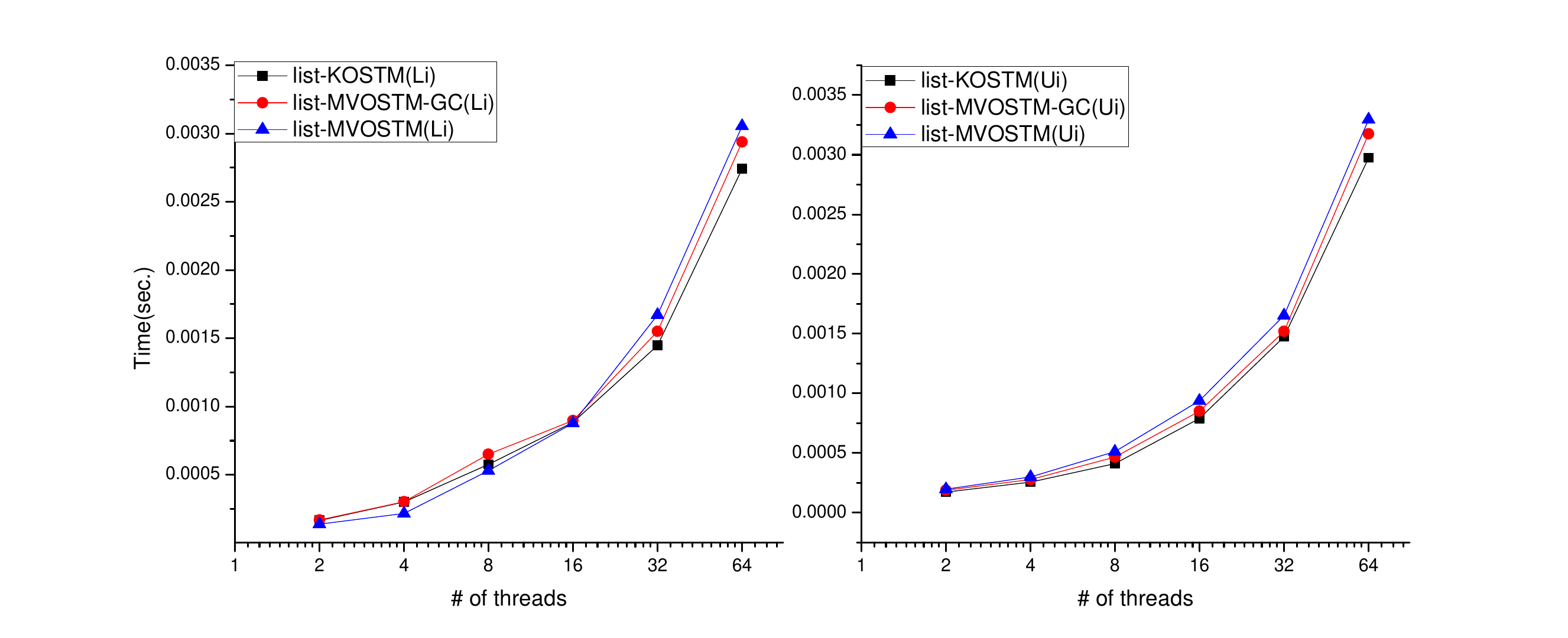}
	\centering
	\caption{Performance comparisons of variations (\lmvotm and \lkotm) of \lmvotm}\label{fig:KlistGC}
\end{figure}
%\vspace{-.7cm}

\noindent
\textbf{Finite version \mvotm (\kotm):} Another technique to efficiently use memory is to restrict the number of versions rather than using unbounded number of versions, without compromising on the benefits of multi-version. \kotm, keeps at most $K$  versions by deleting the oldest version when $(K+1)^{th}$ version is created by a validated transaction. That is, once a key reaches its maximum number of versions count $K$, no new version is created in a new memory location rather new version overrides the version with the oldest time stamp. 
To find the ideal value of $K$ such that performance as compared to \mvotmgc does not degrade or can be increased, we perform experiments on two settings one on high contention high workload ($C1$) and other on low contention low workload ($C2$).
%\vspace{-.7cm}
\begin{figure}
	\centering
	%\centering
	\captionsetup{justification=centering}
	\includegraphics[width=15cm]{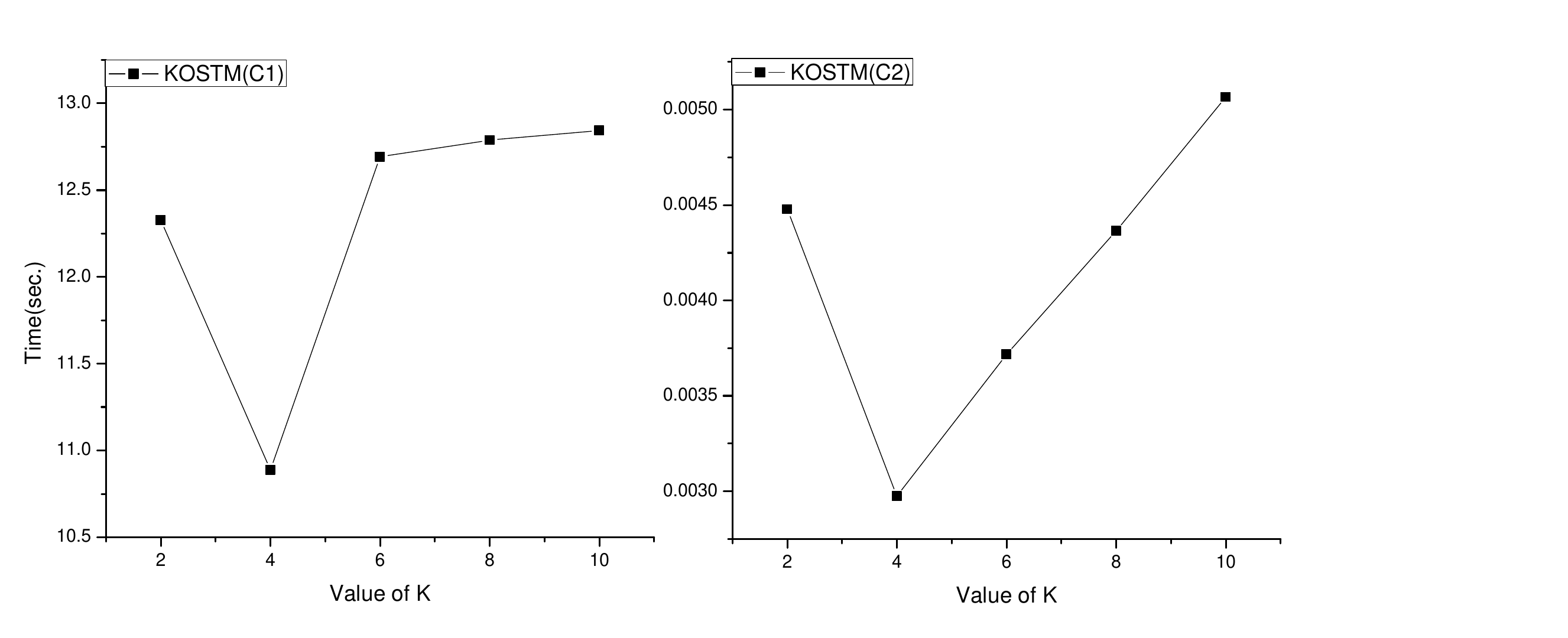}
	\centering
	\caption{Optimal value of K as 4}\label{fig:opmitalK}
\end{figure}
%\vspace{-.7cm}

Under high contention $C1$, each thread spawns over 100  different transactions and each transaction performs insert (45\%)/delete (45\%)/lookup (10\%) operations over 50 random keys. And under low contention $C2$, each thread spawns over one transaction and each transaction performs insert (45\%)/delete (45\%)/lookup (10\%) operations over 1000 random keys. Our experiments as shown in \figref{opmitalK} give the best value of $K$ as 4 under both contention settings. These experiments are performed for \lmvotm and similar experiments can be performed for \hmvotm.
Between these two variations, \kotm performs better than \mvotmgc as shown in \figref{KHTGC} and \figref{KlistGC}. \hkotm helps to achieve a performance speedup of 1.22 and 1.15 for workload type $W1$ and speedup of 1.15 and 1.08 for workload type $W2$ as compared to \hmvotm and \hmvotmgc respectively. Whereas \lkotm (with four versions) gives a speedup of 1.1, 1.07 for workload type $W1$ (8\% insert, 2\% delete and 90\% look Up) and speedup of 1.25, 1.13 for workload type $W2$ (45\% insert, 45\% delete and 10\% lookUp) over the \lmvotm and \lmvotmgc respectively.

\vspace{1mm}
\noindent
\textbf{Memory Consumption by \mvotmgc and \kotm:} As depicted above \kotm performs better than \mvotmgc. Continuing the comparison between the two variations of \mvotm we chose another parameter as memory consumption. Here we test for the memory consumed by each variation algorithms in creating a version of a key. We count the total versions created, where creating a version increases the counter value by 1 and deleting a version decreases the counter value by 1. Our experiments, as shown in \figref{memoryC}, under the same contentions $C1$ and $C2$ show that \kotm needs less memory space than \mvotmgc. These experiments are performed for \lmvotm and similar experiments can be performed for \hmvotm. 
\vspace{-.7cm}
\begin{figure}
	\centering
	%\centering
	\captionsetup{justification=centering}
	\includegraphics[width=13cm]{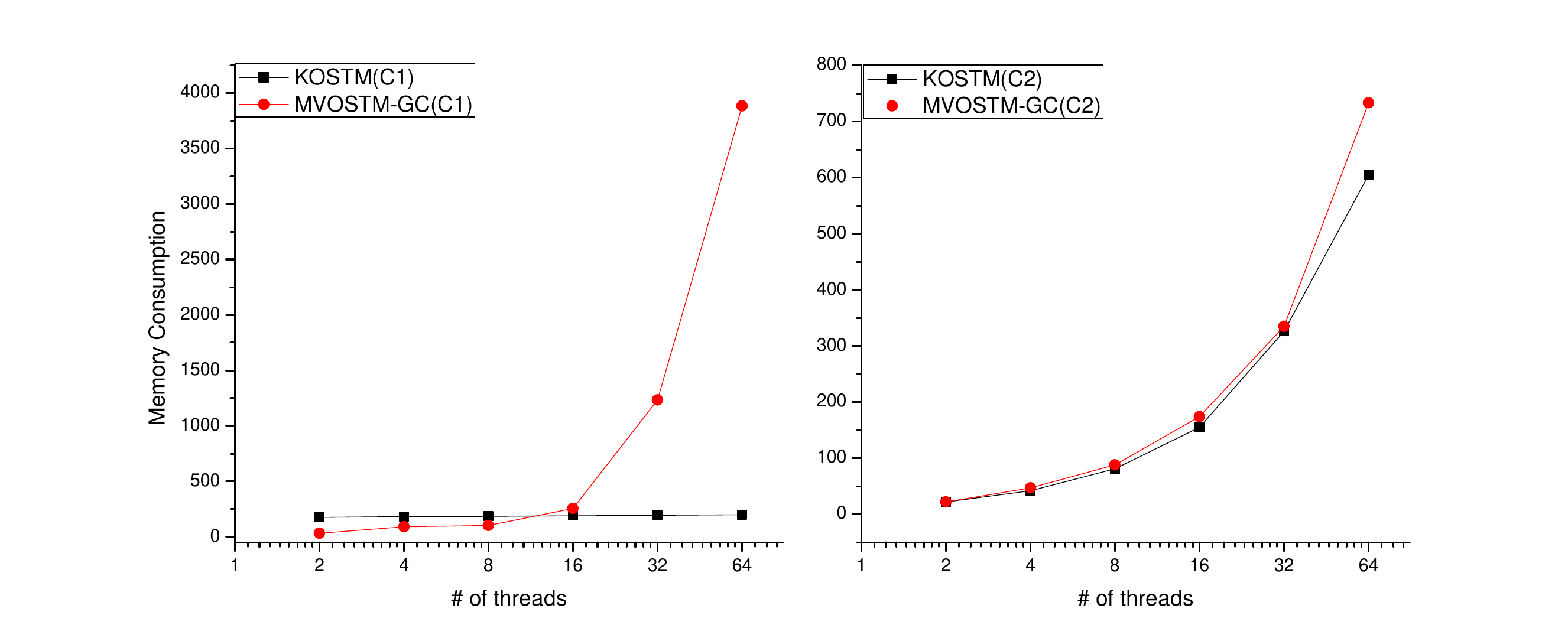}
	\centering
	\caption{Memory Consumption}\label{fig:memoryC}
\end{figure}

%For best performance of KOSTM we have considered value of K equals to be 4.\todo{fix macros}

\cmnt{

We have implemented garbage collection for MVOSTM for both hash-based and linked-list based approaches. Each transaction, in the beginning, does its entry in a global list named as ALTL (All live transactions list), which keeps track of all the live transactions. Under the
optimistic approach of STM, each transaction performs its updates in the shared memory in the update execution phase. Each transaction in this phase performs some validations and if all validations are completed successfully a version of that key is created by that transaction. When a transaction goes to create a version of a key in the shared memory, it checks for the least time stamp live transaction present in the ALTL. If the current transaction is the one with least time-stamp present in ALTL, then this transaction deletes all the older versions of the current key and create a version of its own. If current transaction is not the least time-stamp live transaction then it doesn't do any garbage collection. In this way, we ensure each transaction performs garbage collection on the keys it is going to create a version on. Once the transaction, changes its state to commit, it removes its entry from the ALTL. As shown in all the graphs above MVOSTM with garbage collection (HT-MVOSTM-GC and list MVOSTM-GC) performs better than MVOSTM without garbage collection.
}
\section{Conclusion and Future Work}
\label{sec:confu}
Multi-core systems have become very common nowadays. Concurrent programming using multiple threads has become necessary to utilize all the cores present in the system effectively. But concurrent programming is usually challenging due to synchronization issues between the threads. 

In the past few years, several STMs have been proposed which address these synchronization issues and provide greater concurrency. STMs hide the synchronization and communication difficulties among the multiple threads from the programmer while ensuring correctness and hence making programming easy. Another advantage of STMs is that they facilitate compositionality of concurrent programs with great ease. Different concurrent operations that need to be composed to form a single atomic unit is achieved by encapsulating them in a single transaction.
 
In literature, most of the STMs are \rwtm{s} which export read and write operations. To improve the performance, a few researchers have proposed \otm{s} \cite{HerlihyKosk:Boosting:PPoPP:2008,Hassan+:OptBoost:PPoPP:2014, Peri+:OSTM:Netys:2018} which export higher level objects operation such as \tab{} insert, delete etc. By leveraging the semantics of these higher level \op{s}, these STMs provide greater concurrency. On the other hand, it has been observed in STMs and databases that by storing multiple versions for each \tobj in case of \rwtm{s} provides greater concurrency \cite{perel+:2010:MultVer:PODC,Kumar+:MVTO:ICDCN:2014}.

\ignore{
	
\todo{SK: Remove this para}
So, we get inspired from literature and proposed a new STM as \mvotm which is the combination of both of these ideas (multi versions in OSTMs). It provides greater concurrency and reduces number of abort while maintaining multiple versions corresponding to each key. \mvotm ensures compositionality by making the transactions atomic. In addition to that, we develop garbage collection for MVOSTM (MVOSTM-GC) to delete unwanted versions corresponding to the each keys to reduce traversal overhead. \mvotm satisfies \ccs{} as \emph{opacity}.
%\color{blue}

%\color{black}
}

This paper presents the notion of multi-version object STMs and compares their effectiveness with single version object STMs and multi-version read-write STMs. We find that multi-version object STM provides a significant benefit over both of these for different types of workloads. Specifically, we  have  evaluated the effectiveness of MVOSTM  for the list and \tab{} data structure as \lmvotm and \hmvotm. Experimental results of \lmvotm provide almost two to twenty fold speedup over existing state-of-the-art list based STMs (Trans-list,  Boosting-list,  NOrec-list,  list-MVTO,  and  list-OSTM). Similarly, \hmvotm shows a  significant performance gain of almost two to nineteen times better than existing state-of-the-art \tab{} based STMs (ESTM, RWSTMs, HT-MVTO, and HT-OSTM). 

\hmvotm and \lmvotm and use unbounded number of versions for each key. To limit the number of versions, we develop two variants for both \tab and list data-structures: (1) A garbage collection method in \mvotm to delete the unwanted versions of a key, denoted as \mvotmgc. (2) Placing a limit of $k$ on the number versions in \mvotm, resulting in \kotm. Both these variants gave a performance gain of over 15\% over \mvotm.

%Garbage collection gave a performance gain of 15\%  over \mvotm without garbage collection in the best case. Thus, the overhead of garbage collection is less than the performance improvement due to improved memory usage. This gave performance gain of 22\% over \mvotm without garbage collection in the best case. 

%Furthermore, we optimize \mvotm with limited (say $K$) number of versions corresponding to each key. We find that this optimization improves the performance even further.
%compare our proposed \mvotm with limited number of version.

\bibliographystyle{splncs}% the recommended bibstyle
\bibliography{citations}

\clearpage
\section*{Appendix}
\label{apn:appendix}

\appendix

\section{Detailed Pcode of MVOSTM}
\label{apn:rpcode}
%\vspace{-.3cm}

\subsection{Global DS}
\begin{lstlisting}{language = C++}
struct G_node{
int G_key;
struct G_vl;
G_lock;			
node G_knext;    
};
\end{lstlisting}

\begin{lstlisting}{language = C++}
struct G_vl{
int G_ts;
int G_val;
bool G_mark; 	
/*rvl stands for return value list*/
int G_rvl[];			
vl G_vnext;    
};
\end{lstlisting}

\subsection{Local DS}

\begin{lstlisting}{language = C++}
class L_txlog{
int L_t_id;			
STATUS L_tx_status;
vector <L_rec> L_list;
find();
getList();
};
\end{lstlisting}

\begin{lstlisting}{language = C++}
class L_rec{
int L_obj_id;
int L_key; 
int L_val;		
node* L_knext, G_pred, G_curr, node;
STATUS L_op_status;		
OP_NAME L_opn;
getOpn();	getPreds&Currs();  getOpStatus();
getKey&Objid();	getVal();	   getAptCurr();
setVal();	setPreds&Currs();  setOpStatus();   setOpn();
};
enum OP_NAME = {INSERT, DELETE, LOOKUP}
enum STATUS = {ABORT = 0, OK, FAIL, COMMIT}
\end{lstlisting}

\begin{table}[H]
	\centering
	\begin{tabular}{ || m{8em} | m{9cm}|| } 
		\hline
		\textbf{Functions} & \textbf{Description} \\ 
		\hline 		\hline
		setOpn() & set method name into transaction local log\\ 
		\hline
		setVal() & set value of the key into transaction local log\\ 
		\hline
		setOpStatus() & set status of method into transaction local log\\ 
		\hline
		setPred\&Curr() & set location of $G\_pred$ and $G\_curr$ according to the node corresponding to the key into transaction local log\\ 
		\hline
		getOpn() & get method name from transaction local log\\ 
		\hline
		getVal() & get value of the key from transaction local log\\ 
		\hline
		getOpStatus() & get status of the method from transaction local log\\ 
		\hline
		getKey\&Objid() & get key and obj\_id corresponding to the method from transaction local log\\ 
		\hline
		getPred\&Curr() & get location of $G\_preds$ and $G\_currs$ according to the node corresponding to the key from transaction local log \\ 
		\hline		
	\end{tabular}
	\caption{Description of accessing transaction local log methods}
	\label{tabel:1}
\end{table}

\begin{table}[H]
	\centering
	\begin{tabular}{ || m{2.3cm} | m{2.3cm} | m{2.3cm} | m{2.3cm} | m{2.3cm}|| } 
		\hline
		\textbf{p/q} & \textbf{\npins{}} & \textbf{\npdel{}} & \textbf{\npluk{}} & \textbf{\nptc} \\ 
		\hline 		
		\textbf{\npins{}} & + & + & + & +\\
		\hline
		\textbf{\npdel{}} & + & + & + & -\\
		\hline
		\textbf{\npluk{}} & + & + & + & -\\
		\hline
		\textbf{\nptc{}} & + & - & - & -\\
		\hline
	\end{tabular}
	\caption{Commutative table}
	\label{tabel:2}
\end{table}

%%%%%%%%%%%%%%%%%%%%%%%%%%%%%%%%%% STM_init()
 %Return the transaction id ($L\_t\_id$) at \Lineref{begin8}.

%Transaction id ($L\_t\_id$), object id (L\_obj\_id), node corresponding to the key (L\_key) and value(L\_val) are the inputs of this function.  

\cmnt{
\setlength{\textfloatsep}{0pt}

%%%%%%%%%%%%%%%%%%%%%%%%%%%%%%%%%%%%%%%%%%%%%%%%%%rv_method()%%%%%%%%%%%%%
\begin{algorithm} [H]
%\label{alg:llsearch} 
\scriptsize
\caption{\emph{\rvmt()} }
\setlength{\multicolsep}{0pt}
%\begin{multicols}{2}
\begin{algorithmic}[1]
\makeatletter\setcounter{ALG@line}{0}\makeatother
\Procedure{\rvmt{()}}{}
\If{(($m_{ij}(k)$ == \npluk{}) $||$ ($m_{ij}(k)$ == \npdel{}))}
\If{(k $\in$ local\_log(key))}
\State Update the local log \& return;
\Else
\State Search into the $CDS$ to identify the $preds$ \& $currs$ for key in \bn{} \& \rn.
\State Acquire the locks in increasing order.
\If{($!rv\_Validation()$)}
\State Release the locks and retry;
\ElsIf{(k $\notin$ $CDS$)} 
\State Create the new node in \rn{} and add the $T_0$ version in it. 
\EndIf
\State $find\_lts():$ Identify the version with largest TS as $T_j$ but less then TS($T_i$);
\State Add TS($T_i$) into $rvl$ of $T_j$;
\State Release the locks and update the local log;
\EndIf
\EndIf
\EndProcedure
\end{algorithmic}
%\end{multicols}
\end{algorithm}

%%%%%%%%%%%%%%%%%%%%%%%%%%%%%%%%%%%%%%%%%%%%%%%%%%rv_validation()%%%%%%%%%%%%%
\begin{algorithm} [H]
%\label{alg:llsearch} 
\scriptsize
\caption{\emph{rv\_Validation()} }
\setlength{\multicolsep}{0pt}
%\begin{multicols}{2}
\begin{algorithmic}[1]
\makeatletter\setcounter{ALG@line}{0}\makeatother
\Procedure{rv\_validation{()}}{}
\If{$((\bp.marked) || (\bc.marked) ||(\bp.\bn) \neq \bc || (\rp.\rn) \neq {\rc})$}
\State return $false$;
\Else 
\State return $true$;
\EndIf 
\EndProcedure
\end{algorithmic}
%\end{multicols}
\end{algorithm}

%%%%%%%%%%%%%%%%%%%%%%%%%%%%%%%%%%%%%%%%%%%%%%%%%%try_C()%%%%%%%%%%%%%
\begin{algorithm} [H]
%\label{alg:llsearch} 
\scriptsize
\caption{\emph{tryC()} }
\setlength{\multicolsep}{0pt}
%\begin{multicols}{2}
\begin{algorithmic}[1]
\makeatletter\setcounter{ALG@line}{0}\makeatother
\Procedure{tryC{()}}{}
\State Get the local log list for corresponding transaction;
\ForAll{(opn $\in$ local\_log\_list)}
\If{(($m_{ij}(k)$ == \npins{}) $||$ ($m_{ij}(k)$ == \npdel{}))}
\State Search into the $CDS$ to identify the $preds$ \& $currs$ for key in \bn{} \& \rn;
\State Acquire the locks in increasing order;
\If{($!rv\_Validation()$)}
\State Release the locks and retry;
\EndIf
\State $find\_lts():$ If (k $\in$ $CDS$) then identify the version with largest TS as $T_j$ but less then TS($T_i$);
\If{($!tryC\_Validation()$)}
\State return $Abort$;
\EndIf
\EndIf
\EndFor
\ForAll{(opn $\in$ local\_log\_list)}
\If{(($m_{ij}(k)$ == \npins{}) $||$ ($m_{ij}(k)$ == \npdel{}))}
\State $intraTransValidation():$ Update the preds and currs of consecutive operation working on same region.
\If{(k $\notin$ CDS)}
\State Create the new node in \rn{}, \bn{} and add the $T_0$ version in it;
\EndIf
\State $find\_lts():$ Identify the version with largest TS as $T_j$ but less then TS($T_i$);
\State Add TS($T_i$) into $rvl$ of $T_j$;
\State Update the local log;
\EndIf
\EndFor
%\EndFor
\State Release the locks;
\EndProcedure
\end{algorithmic}
%\end{multicols}
\end{algorithm}

%%%%%%%%%%%%%%%%%%%%%%%%%%%%%%%%%%%%%%%%%%%%%%%%%%tryC_validation()%%%%%%%%%%%%%
\begin{algorithm} [H]
%\label{alg:llsearch} 
\scriptsize
\caption{\emph{tryC\_Validation()} }
\setlength{\multicolsep}{0pt}
%\begin{multicols}{2}
\begin{algorithmic}[1]
\makeatletter\setcounter{ALG@line}{0}\makeatother
\Procedure{tryC\_validation{()}}{}
\ForAll {$T_k$ in $rvl$ of $T_j$}
\If{(TS($(T_k)$ $>$ TS($T_i$)))}
\State return $false$;
\EndIf 
\EndFor
\EndProcedure
\end{algorithmic}
%\end{multicols}
\end{algorithm}

%%%%%%%%%%%%%%%%%%%%%%%%%%%%%%%%%%%%%%%%%%%%%%%%%%Local Log Search()%%%%%%%%%%%%%
\begin{algorithm} [H]
\label{alg:llsearch} 
\scriptsize
\caption{\emph{MV-OSTM} Algorithm}
\setlength{\multicolsep}{0pt}
%\begin{multicols}{2}
\begin{algorithmic}[1]
\makeatletter\setcounter{ALG@line}{0}\makeatother
%\Procedure{STM local log search}{key} \label{lin:ll1}
\If{($G\_key$ $\in$ local\_log($L\_key$))} \label{lin:ll2}\Comment{Search into local log}
\If{$(($\textup{INSERT} $=$ \textup{$L\_opn$} $)||($ \textup{LOOKUP} $=$ \textup{$L\_opn$}$))$} \label{lin:ll3};
\State $L\_val$ $\gets$ $L\_getVal(L\_opn,L\_key)$ \label{lin:ll4};
\State $L\_op\_status$ $\gets$  $L\_getOpStatus$(L\_opn, L\_key) \label{lin:ll5};
%\State Update the local log.\label{lin:ll6}
			\ElsIf{$($\textup{DELETE} $=$ \textup{$L\_opn$}$)$} \label{lin:ll7}
			\State $L\_val$ $\gets$ NULL \label{lin:ll8}; 
			\State $L\_op\_status$ $\gets$ FAIL \label{lin:ll9}; 
\EndIf
\State Update the local log.
\State return $\langle L\_val$, $L\_op\_status \rangle$.\label{lin:ll10}
\EndIf

\State \Comment{Traversal phase}
\State $\bp$ $\gets$ $G\_head$ \label{lin:ltraversal2}; 
\State $\bc$ $\gets$ $\bp.\bn$ \label{lin:ltraversal3};
\While{$((\bc.key) < L\_key)$} \label{lin:ltraversal4};
\State $\bp$ $\gets$ $\bc$ \label{lin:ltraversal5};
\State $\bc$ $\gets$ $\bc.\bn$ \label{lin:ltraversal6};
\EndWhile \label{lin:ltraversal12}
\State $\rp$ $\gets$ $\bp$ \label{lin:ltraversal7}; 
\State $\rc$ $\gets$ $\rp.\rn$ \label{lin:ltraversal8};
\While{$((\rc.key) < L\_key)$} \label{lin:ltraversal9}
\State $\rp$ $\gets$ $\rc$ \label{lin:ltraversal10};
\State $\rc$ $\gets$ $\rc.\rn$ \label{lin:ltraversal11};
\EndWhile \label{lin:ltraversal14}
\State Update the \preds{} and \currs{} into local log.
\State \Comment{Validation phase}
\If{(($\bp$.marked) || ($\bc$.marked) ||($\bp.\bn$) $\neq$ $\bc$ || ($\rp.\rn$) $\neq$ {$\rc$})}\Comment{Validation}
\State return $\langle RETRY \rangle$;
\Else
\State return $\langle OK \rangle$;
\EndIf
\State /*Find the largest time-stamp but less then itself($T_i$).*/
\State $find\_lts$(max(closest\_tuple(TS($T_j$))) $<$ TS($T_i$))
\ForAll {($T_k$ $\in$ rvl($T_i$))}
\If{(TS($T_i$) < TS($T_k$))}
\State return $\langle ABORT \rangle$;
\Else
\State return $\langle OK \rangle$;
\end{algorithmic}
%\end{multicols}
\end{algorithm}

%%%%%%%%%%%%%%%%%%%%%%%%%%%%%%%%%%%%%%%%%%%%%%%%%%%%%%%%%%%%%%%%%%%%%%%%%%%%%%%%%%%%%%%%%%%%%%%%%%%%%%
%%%%%%%%%%%%%%%%%%%%%%%%%%%%%%%%%%%%%%%%%%%%%%%%%%%%%%%%%%%%%%%%%%%%%%%%%%
%-----------------------------------------TryCommit----------------------------------%---------
%----
%%%%%%%%%%%%%%%%%%%%%%%%%%%%%%%%%%%%%%%%%%%%%%%%%%%%%%%%%%%%%%%%%%%%%%%%%%

\begin{algorithm}[H]
\scriptsize
	\caption{\tabspace[0.2cm] STM $tryC()$ }
	\label{algo:trycommit}
	\setlength{\multicolsep}{0pt}
		%\begin{multicols}{2}
	\begin{algorithmic}[1]
\makeatletter\setcounter{ALG@line}{33}\makeatother
		\Procedure{STM tryC}{($L\_t\_id$)} \label{lin:tryc1}
    \State $L\_list$ $\gets$ $getList$($L\_t\_id$) \label{lin:tryc3};
		\While{$(\textbf{$L\_rec_{i} \gets \textup{next}(L\_list$}))$} \label{lin:tryc5}
		\State ($L\_key, L\_obj\_id$) $\gets$ \llgkeyobj{} \label{lin:tryc6};
		\State \lsls{$COMMIT \downarrow$} \label{lin:tryc8};
			
		\If {$((\bc.key = L\_key) \& (\checkv(L\_t\_id \downarrow,\bc \downarrow) = FALSE))$}\label{lin:tryc9}
\State Unlock all the variables;\label{lin:tryc10}
\State return $ABORT$;\label{lin:tryc11}
\ElsIf {$((\rc.key = L\_key) \& (\checkv(L\_t\_id \downarrow,\rc \downarrow) = FALSE))$}\label{lin:stryc9}
\State Unlock all the variables;\label{lin:stryc10}
\State return $ABORT$;\label{lin:stryc11}

\EndIf;\label{lin:tryc12}
		\State \llspc{} \label{lin:tryc14};
		\EndWhile \label{lin:tryc15}
	\While{$(\textbf{$L\_rec_{i} \gets \textup{next}(L\_list$}))$} \label{lin:tryc17}
		\State ($L\_key, L\_obj\_id$) $\gets$ \llgkeyobj{} \label{lin:tryc18};
		\State $L\_opn$ $\gets$ $(L\_rec)_{i}$.$L\_opn$ \label{lin:tryc20};
		\State intraTransValdation($L\_rec_{i} \downarrow$, $\preds \uparrow$, $\currs \uparrow$) \label{lin:tryc42};
		\If{$($\textup{INSERT} $=$ \textup{$L\_opn$}$)$} \label{lin:tryc22}
		\If{$(\bc.key) = L\_key)$} \label{lin:tryc23}
	\State insert $v\_tuple \langle L\_t\_id,val,F,NULL,NULL \rangle$ into $G\_curr.vl$ in the increasing order;	\label{lin:tryc24}
	
	\ElsIf{$(\rc.key) = L\_key)$} \label{lin:stryc23}
	\State \lslins{$RL\_BL \downarrow$} \label{lin:stryc24}
	\State insert $v\_tuple \langle L\_t\_id,val,F,NULL,NULL \rangle$ into $G\_curr.vl$ in the increasing order;	\label{lin:stryc25}
	
		\Else \label{lin:tryc25}
		\State \lslins{$BL \downarrow$} \label{lin:tryc27};
		\State insert $v\_tuple \langle L\_t\_id,val,F,NULL,NULL \rangle$ into $node.vl$ in the increasing order;	\label{lin:tryc28}
	\EndIf \label{lin:tryc29}
\ElsIf{$($\textup{DELETE} $=$ $L\_opn)$} \label{lin:tryc31}
	\If{$(\bc.key) = L\_key)$} \label{lin:tryc33}
		\State insert $v\_tuple \langle L\_t\_id,NULL,T,NULL,NULL \rangle$ into $G\_curr.vl$ in the increasing order;	\label{lin:tryc34}
			\State \lsldel{} \label{lin:stryc35};
		\EndIf \label{lin:tryc39}
	
		\EndIf \label{lin:tryc40}
		\EndWhile \label{lin:tryc43}
        \State \rlsol{} \label{lin:tryc45};  
        \State $L\_tx\_status$ $\gets$ OK \label{lin:tryc47};
		\State return $\langle L\_tx\_status\rangle$\label{lin:tryc48};
		\EndProcedure \label{lin:tryc49}
	\end{algorithmic}
	%	\end{multicols}
\end{algorithm}

}

%%%%%%%%%%%%%%%%%%%%%%%%%%%%%%%%%% STM_init()

\begin{algorithm} 
\label{alg:init} 
%\scriptsize

\caption{STM $\init()$: This method invokes at the start of the STM system. Initialize the global counter ($\cnt$) as 1 at \Lineref{init1} and return it.}
\setlength{\multicolsep}{0pt}
%\begin{multicols}{2}
\begin{algorithmic}[1]
\makeatletter\setcounter{ALG@line}{52}\makeatother	
\Procedure{STM init}{$\cnt \uparrow$}
\cmnts{Initializing the global counter}
\State $\cnt$ $\gets$ 1; \label{lin:init1}
\cmnt{
\ForAll {key $G\_k$ used by the STM System}
%\State /* $T_0$ is initializing key $k$ */
\State /*$T_0$ is initializing key $k$*/
\State add $\langle 0, 0, T, NULL, NULL \rangle$ to $G\_k.vl$;  \label{lin:init} %\Comment{Implies that $T_0$ has initialized $x$}
\EndFor;
}
\State return $\langle \cnt \rangle$; 
\EndProcedure
\end{algorithmic}
	
%	\end{multicols}
\end{algorithm}

\cmnt{
%%%%%%%%%%%%%%%%%%%%%%%%%%%%%%%%%%%%%%%%%%%%%Traversal Phase%%%%%%%%%%%%%%%%%%%%%%%
\begin{algorithm}[H]
\scriptsize
	\caption{list\_traversal() : It finds the location of the node corresponding to the key in underlying DS. First it identifies the node in \bn{} then in \rn{}. After finding the appropriate $\preds$ and $\currs$ corresponding to the key, it acquires the locks and validate it.}
	\label{algo:traversal}
	\setlength{\multicolsep}{0pt}
		%\begin{multicols}{2}
	\begin{algorithmic}[1]
	\makeatletter\setcounter{ALG@line}{166}\makeatother	
	\Procedure{list\_traversal}{$L\_Bucket\_id, L\_key$} \label{lin:ltraversal1}

%\cmnts{By default setting the $L\_op\_status$ as RETRY}\label{lin:lslsearch2}
	    %\State STATUS $L\_op\_status$ $\gets$ RETRY; \label{lin:lslsearch3}
	 %\cmnts{Identify the \preds and \currs for node corresponding to the key if $L\_op\_status$ is RETRY}\label{lin:lslsearch4}   
		%\While{($L\_op\_status$ = \textup{RETRY})} \label{lin:lslsearch5}
	%	\cmnts{Get the head of the bucket in chaining hash-table with the help of $L\_obj\_id$ and $L\_key$}\label{lin:lslsearch6}
		%\State $G\_head$ $\gets$ \glslhead \label{lin:lslsearch7};
	%	\cmnts{Initialize $\bp$ to head}\label{lin:lslsearch8}
		\State $\bp$ $\gets$ $G\_head$ \label{lin:ltraversal2}; 
	%	\cmnts{Initialize $\bc$ to $\bp.\bn$}\label{lin:lslsearch10}
		\State $\bc$ $\gets$ $\bp.\bn$ \label{lin:ltraversal3};
	%	\cmnts{Searching node corresponding to the key into \bn}
		\While{$((\bc.key) < L\_key)$} \label{lin:ltraversal4};
		\State $\bp$ $\gets$ $\bc$ \label{lin:ltraversal5};
				
		\State $\bc$ $\gets$ $\bc.\bn$ \label{lin:ltraversal6};
			
	%\EndWhile \label{lin:lslsearch15}
		
%		\cmnts{Initialize $\rp$ to head}\label{lin:slslsearch8}
		\State $\rp$ $\gets$ $\bp$ \label{lin:ltraversal7}; 
%		\cmnts{Initialize $\rc$ to $\rp.\rn$}\label{lin:slslsearch10}
		\State $\rc$ $\gets$ $\rp.\rn$ \label{lin:ltraversal8};
	%	\cmnts{Searching node corresponding to the key into \rn}
		\While{$((\rc.key) < L\_key)$} \label{lin:ltraversal9}
		\State $\rp$ $\gets$ $\rc$ \label{lin:ltraversal10};
				
		\State $\rc$ $\gets$ $\rc.\rn$ \label{lin:ltraversal11};
			
		\EndWhile \label{lin:ltraversal12}

		%\State    /*get the value*/
		%\State $L\_val$ $\gets$ $G\_curr.value$
		%\cmnts{Acquire the locks on increasing order of keys}\label{lin:lslsearch16}
	    \State acquirePred\&CurrLocks($ \preds$, $ \currs$); \label{lin:ltraversal13}
%\cmnts{Method validation to identify the changes done by concurrent conflicting method}\label{lin:lslsearch19}
		%\State methodValidation($\preds$ $\downarrow$, $\currs$ $\downarrow$, $L\_op\_status \uparrow$)\label{lin:lslsearch20};	
		%\State $op\_status$ $\gets$ \validation \label{lin:lslsearch21};
%\cmnts{If $L\_op\_status$ is RETRY then release all the locks}		\label{lin:lslsearch21}
	%	\If{(($L\_op\_status$ = \textup{RETRY}))} \label{lin:lslsearch22}
	   % \State releasePred\&CurrLocks($ \preds \downarrow$, $ \currs \downarrow$);
	    %\EndIf \label{lin:lslsearch25}
			
		\EndWhile \label{lin:ltraversal14}
		
		\State return $\langle G\_preds[], G\_currs[]\rangle$ \label{lin:ltraversal15};
	
	\EndProcedure \label{lin:ltraversal16}
	
	\end{algorithmic}
	%	\end{multicols}
\end{algorithm}
%\end{spacing}
}

%%%%%%%%%%%%%%%%%%%%%%%%%%%%%%%%%% STM_begin()
\begin{algorithm} 
\label{alg:begin} 
%\scriptsize
\caption{STM $begin()$: It invoked by a thread to being a new transaction $T_i$. It creates transaction local log and allocate unique id at \Lineref{begin3} and \Lineref{begin5} respectively.}
\setlength{\multicolsep}{0pt}
%\begin{multicols}{2}
\begin{algorithmic}[1]
\makeatletter\setcounter{ALG@line}{88}\makeatother
\Procedure{STM begin}{$\cnt \downarrow$, $L\_t\_id \uparrow$} \label{lin:begin1}
\cmnts{Creating a local log for each transaction}\label{lin:begin2}
\State \txll $\gets$ create new \txllf;  \label{lin:begin3}
%\State /*Acquiring lock on live set of transaction*/
%\State $\livetx.lock()$;  
\cmnts{Getting transaction id ($L\_t\_id$) from $\cnt$}\label{lin:begin4}
\State \txll.$L\_t\_id$ $\gets$ $\cnt$;  \label{lin:begin5}
\cmnts{Incremented global counter atomically $\cnt$}\label{lin:begin6}
\State $\cnt$ $\gets$ \gi; //$\Phi_{lp}(Linearization Point)$ \label{lin:begin7}
%\State add $t\_id$ to $\livetx$; 
%\State /*Release lock on live set of transaction*/
%\State $\livetx.unlock()$; 
\State return $\langle L\_t\_id \rangle$; \label{lin:begin8}
\EndProcedure\label{lin:begin9}
\end{algorithmic}
%\end{multicols}
\end{algorithm}

%%%%%%%%%%%%%%%%%%%%%%%%%%%%%%%% STM_ Insert()

\begin{algorithm}[H]

	\caption{STM $insert():$ Optimistically, the actual insertion will happen in the \nptc{} method. First, it will identify the node corresponding to the key in local log. If the node exists then it just update the local log with useful information like value, operation name and status for the node corresponding to the key at \Lineref{insert8}, \Lineref{insert9} and \Lineref{insert10} respectively for later use in \nptc{}. Otherwise, it will create a local log and update it.}
%    \scriptsize
\setlength{\multicolsep}{0pt}
%\begin{multicols}{2}
	\label{algo:insert}
	\begin{algorithmic}[1]
	\makeatletter\setcounter{ALG@line}{97}\makeatother
	\Procedure{STM insert}{$L\_t\_id \downarrow, L\_obj\_id \downarrow, L\_key \downarrow, L\_val \downarrow$}\label{lin:insert1}
	\cmnts{First identify the node corresponding to the key into local log using $find()$ funciton}\label{lin:insert2}
		\If{$(!$\txlfind$)$}\label{lin:insert3}
		\cmnts{Create local log record and append it into increasing order of keys} \label{lin:insert4}
		\State $L\_rec$ $\gets$ create new $L\_rec \langle L\_obj\_id, L\_key \rangle$; \label{lin:insert5}
		\EndIf \label{lin:insert6}
		\cmnts{Updating the local log} \label{lin:insert7}		\State \llsval{$L\_val \downarrow$};//$\Phi_{lp} (Linearization Point)$ \label{lin:insert8}
		\State \llsopn{$INSERT \downarrow$}; \label{lin:insert9}
		\State \llsopst{$OK \downarrow$}; \label{lin:insert10}
		\State return $\langle void \rangle$; \label{lin:insert11} 
    	%\State loclog.setOpStatus$(obj\_id \downarrow, key \downarrow, OK \downarrow)$;
    \EndProcedure	\label{lin:insert12}
	\end{algorithmic}
%\end{multicols}	
\end{algorithm}

%%%%%%%%%%%%%%%%%%%%%%%%%%%%%%%%%%%%%%%%%%%%%%%%%%%%%%%%%%%%%%%%%%%%%%%%%%
%-----------------------------------------STM\_LOOKUP-------------------------------------%------
%----
%%%%%%%%%%%%%%%%%%%%%%%%%%%%%%%%%%%%%%%%%%%%%%%%%%%%%%%%%%%%%%%%%%%%%%%%%%
%\begin{spacing}{2}
\begin{algorithm}[H]
	%\algsetup{linenosize=\tiny}
	
	\caption{\tabspace[0.2cm] STM $lookup_{i}()$: If \npluk{} is not the first method on a particular key means if its a subsequent method of the same transaction on that key then first it will search into the local log from \Lineref{lookup3} to \Lineref{lookup14}. If the previous method on the same key of same transaction was insert or lookup (from \Lineref{lookup7} to \Lineref{lookup9}) then \npluk{} will return the value and operation status based on previous operation value and status. If the previous method on the same key of same transaction was delete (from \Lineref{lookup11} to \Lineref{lookup13}) then \npluk{} will return the value and operation status as NULL and FAIL respectively. If \npluk{} is the first method on that key (from \Lineref{lookup16} to \Lineref{lookup22}) then it will identify the location of node corresponding to the key in underlying DS with the help of \nplsls{} inside the \npcld{} method at \Lineref{lookup17}.} %: If the transaction to which this operation belongs has locally done an operation on the same key then returns apt value and status(wrt the previous local operation). Else do the \nplsls{} to find the correct location of the key and validate it.
		%\emph{DESCP}\tabspace: If the Tx has locally done an operation returns apt value and status else does TO \tabspace[2.2cm] and method validation traversal phase in only.\\
		%\emph{IN}\tabspace \tabspace[0.72cm]: $obj\_id$, $key$\\
		%\emph{OUT}\tabspace \tabspace[0.37cm]: $value$, $op\_status$     }
%	\scriptsize
	\setlength{\multicolsep}{0pt}
%	\begin{multicols}{2}
		
		\label{algo:lookup}
		%	\setlength{\multicolsep}{0pt}
		%	\begin{multicols}{2}
		
		\begin{algorithmic}[1]
			\makeatletter\setcounter{ALG@line}{109}\makeatother
			\Procedure{STM lookup}{$L\_t\_id \downarrow, L\_obj\_id \downarrow, L\_key \downarrow, L\_val \uparrow, L\_op\_status \uparrow$} \label{lin:lookup1}
			%\State $op\_status$ $\gets$ RETRY \label{lin:lookup2};
			\cmnts{First identify the node corresponding to the key into local log}\label{lin:lookup2}
			\If{$($\txlfind$)$} \label{lin:lookup3}
			\cmnts{Getting the previous operation's name}\label{lin:lookup4}
			\State $L\_opn$ $\gets$ \llgopn{} \label{lin:lookup5}; %\Comment{$\Phi_{lp}$}
			\cmnts{If previous operation is insert/lookup then get the value/op\_status based on the previous operations value/op\_status}\label{lin:lookup6}
			\If{$(($\textup{INSERT} $=$ \textup{$L\_opn$} $)||($ \textup{LOOKUP} $=$ \textup{$L\_opn$}$))$} \label{lin:lookup7}
			
			\State $L\_val$ $\gets$ \llgval{} \label{lin:lookup8};
			\State $L\_op\_status$ $\gets$  $L\_rec.L\_getOpStatus$($L\_obj\_id \downarrow, L\_key \downarrow$) \label{lin:lookup9};
			\cmnts{If previous operation is delete then set the value as NULL and op\_status as FAIL}\label{lin:lookup10}
			\ElsIf{$($\textup{DELETE} $=$ \textup{$L\_opn$}$)$} \label{lin:lookup11}
			\State $L\_val$ $\gets$ NULL \label{lin:lookup12}; 
			\State $L\_op\_status$ $\gets$ FAIL \label{lin:lookup13}; 
			\EndIf \label{lin:lookup14}
			%	\EndIf \label{lin:lookup11}
			\Else \label{lin:lookup15}
			\cmnts{ Common function for \rvmt{}, if node corresponding to the key is not part of local log}\label{lin:lookup16}
			\State \cld{};\label{lin:lookup17}
			\cmnt{		
				%\State $op\_status $ $\gets$ lslSearch($obj\_id$ $\downarrow$, $key$ $\downarrow$, $preds[]$ $\uparrow$, $currs[]$ $\uparrow$, $value_{BL}$ $\uparrow$, $RV$ $\downarrow$);	
				\State    /*if key is not present in local log then search in underlying DS with the help of list\_lookup*/
				\State \lsls{} \label{lin:lookup13}; 
				%State    /*if list\_lookup return op\_status as ABORT then method will return ABORT*/
				%If{$(op\_status_{ll}$ $=$ \textup{ABORT}$)$} \label{lin:lookup14}
				%			\State release all the locks
				%State \handlea{} \label{lin:lookup15}; 
				%Else \label{lin:lookup16}
				%			\State    /*check node corresponding to the key is part of underlying DS or not*/	
				%			\If{$((G\_curr.key) == L\_key)$} \label{lin:lookup17}
				\State /*From $G\_k.vls$, identify the right $version\_tuple$*/ 
				\State \find($L\_t\_id \downarrow,L\_key \downarrow, closest\_tuple \uparrow)$;	
				\State /*Adding $L\_t\_id$ into $j$'s $rvl$*/
				\State Append $L\_t\_id$ into $rvl$; 
				%                   \State \emph{unlock $x$};
				%                    \State return $(v)$; \Comment{v is the value returned}
				\If{$(closest\_tuple.m = TRUE)$}
				
				\State $L\_op\_status$ $\gets$ FAIL \label{lin:lookup18};
				\State $L\_val$ $\gets$ NULL \label{lin:lookup20};
				\Else
				\State $L\_op\_status$ $\gets$ OK;
				\State $L\_val$ $\gets$ $closest\_tuple.v$;
				\EndIf    
				%            \Else
				%                    \State /*Adding $i$ into $0$'s $rvl$*/
				%                    \State Append $i$ into $rvl$; 
				%                    \State $L\_op\_status$ $\gets$ FAIL;
				%                    \State $L\_val$ $\gets$ $NULL$;
				%\EndIf \label{lin:lookup30}

				\State $G\_pred.unlock()$;//$\Phi_{lp}$
				\State $G\_curr.unlock()$;
				\State    /*new log entry created to help upcoming method on the same key of the same tx*/
				\State $L\_rec$ $\gets$ Create new $L\_rec\langle L\_obj\_id, L\_key \rangle$\label{lin:lookup31}; 
				%\State ll.setPreds&Currs($obj\_id$ $\downarrow$, $key$ $\downarrow$, $preds[]$ $\downarrow$, $currs[]$ $\downarrow$);
				\State \llsval{$L\_val \downarrow$}
				\State \llspc{} \label{lin:lookup32};
			}						  
			
			%\EndIf \label{lin:lookup38}
			
			\EndIf \label{lin:lookup18}
			\cmnts{Update the local log} \label{lin:lookup19}
			\State \llsopn{$LOOKUP \downarrow$} \label{lin:lookup20};
			\State \llsopst{$L\_op\_status \downarrow$} \label{lin:lookup21};
			
			\State return $\langle L\_val, L\_op\_status\rangle$\label{lin:lookup22}; 
			
			\EndProcedure \label{lin:lookup23}
		\end{algorithmic}
		
%	\end{multicols}
	
\end{algorithm}

%%%%
%%%%%%%%%%%%%%%%%%%%%%%%%%%%%%%%%%%%%%%%%%%%%%%%%%%%%%%%%%%%%%%%%%%%%%
%-----------------------------------------STM\_DELETE-------------------------------------%------
%----
%%%%%%%%%%%%%%%%%%%%%%%%%%%%%%%%%%%%%%%%%%%%%%%%%%%%%%%%%%%%%%%%%%%%%%%%%%
%\begin{spacing}{2}
\begin{algorithm}[H]
%\algsetup{linenosize=\tiny}
%\scriptsize
	\caption{\tabspace[0.2cm] STM $delete_{i}()$ : It will work same as a \npluk{}. If it is not the first method on a particular key means if its a subsequent method of the same transaction on that key then first it will search into the local log from \Lineref{delete3} to \Lineref{delete23}. If the previous method on the same key of same transaction was insert (from \Lineref{delete77} to \Lineref{delete10}) then \npdel{} will return the value based on previous operation value and status as OK and set the value and operation as NULL and DELETE respectively. 
    If previous method on the same key of same transaction was delete (from \Lineref{delete12} to \Lineref{delete15}) then \npdel{} will return the value and operation status as NULL and FAIL respectively.
    If previous method on the same key of same transaction was lookup (from \Lineref{delete16} to \Lineref{delete21}) then \npdel{} will return the value and operation status based on the previous operation value and status. If \npdel{} is the first method on that key (from \Lineref{delete24} to \Lineref{delete30}) then it will identify the location of node corresponding to the key in underlying DS with the help of \nplsls{} inside the \npcld{} method at \Lineref{delete25}.}
		%\emph{DESCP}\tabspace: If the Tx has locally done an operation returns apt value and status else does TO \tabspace[2.2cm] and method validation traversal phase in only.\\
		%\emph{IN}\tabspace \tabspace[0.72cm]: $obj\_id$, $key$\\
		%\emph{OUT}\tabspace \tabspace[0.37cm]: $value$, $op\_status$     }
	\label{algo:delete}
	\setlength{\multicolsep}{0pt}
%	\begin{multicols}{2}
	
	\begin{algorithmic}[1]
	\makeatletter\setcounter{ALG@line}{132}\makeatother
		\Procedure{STM delete}{$L\_t\_id \downarrow, L\_obj\_id \downarrow, L\_key \downarrow, L\_val \uparrow, L\_op\_status \uparrow$} \label{lin:delete1}
		%\State $op\_status$ $\gets$ RETRY \label{lin:lookup2};
			\cmnts{First identify the node corresponding to the key into local log}\label{lin:delete2}
		\If{$($\txlfind$)$} \label{lin:delete3}
	    \cmnts{Getting the previous operation's name}\label{lin:delete4}
			\State $L\_opn$ $\gets$ \llgopn{} ;\label{lin:delete5} %\Comment{$\Phi_{lp}$}
		\cmnts{If previous operation is insert then get the value based on the previous operations value and set the value and operation name as NULL and DELETE respectively}\label{lin:delete6}
		\If{$($\textup{INSERT} $=$ \textup{$L\_opn$}$)$} \label{lin:delete77}
		\State $L\_val$ $\gets$ \llgval{} \label{lin:delete7};
		
		\State \llsval{NULL $\downarrow$} \label{lin:delete8};

		\State \llsopn{DELETE $\downarrow$} \label{lin:delete9};
		
		\State $L\_op\_status$ $\gets$ OK \label{lin:delete10};
		\cmnts{If previous operation is delete then set the value as NULL}\label{lin:delete11}
		\ElsIf{$($\textup{DELETE} $=$ \textup{$L\_opn$}$)$} \label{lin:delete12}
		\State \llsval{NULL $\downarrow$} \label{lin:delete13};
		\State $L\_val$ $\gets$ NULL \label{lin:delete14}; 
		\State $L\_op\_status$ $\gets$ FAIL \label{lin:delete15}; 
		\Else \label{lin:delete16}
		
		\cmnts{If previous operation is lookup then get the value based on the previous operations value and set the value and operation name as NULL and DELETE respectively}\label{lin:delete17}
		\State $L\_val$ $\gets$ \llgval{} \label{lin:delete18}; 
		
		\State \llsval{NULL$ \downarrow$} \label{lin:delete19};

		\State \llsopn{DELETE $\downarrow$} \label{lin:delete20};
		\State $L\_op\_status$ $\gets$  $L\_rec.getOpStatus$($L\_obj\_id \downarrow$, $L\_key \downarrow$) \label{lin:delete21};
		
		\EndIf \label{lin:delete22}
		
		\Else \label{lin:delete23}

	%	\EndIf 
		    \cmnts{Common function for \rvmt{}, if node corresponding to the key is not part of local log}\label{lin:delete24}
			\State \cld{};		\label{lin:delete25}
		\EndIf \label{lin:delete26}
		\cmnts{Update the local log}\label{lin:delete27}
		\State \llsopn{$DELETE \downarrow$} ;\label{lin:delete28}
			\State \llsopst{$L\_op\_status \downarrow$} ;\label{lin:delete29}
			
			\State return $\langle L\_val, L\_op\_status\rangle$; \label{lin:delete30}
				
	\EndProcedure\label{lin:delete31}
	\end{algorithmic}
	
%	\end{multicols}
	
\end{algorithm}
\cmnt{
\begin{figure}[H]
	\captionsetup{justification=centering}
	%\includegraphics[scale=0.7]{figs/ex1.pdf_t}
	%\centerline{\scalebox{0.7}{\input{ex1.pstex_t}}}
	\centerline{\scalebox{0.38}{\input{figs/pnetys5.pdf_t}}}
	\caption{Need of inserting $0^{th}$ version by \rvmt{} to satisfy opacity}
	\label{fig:mvostm8}
\end{figure}	
}
\begin{figure}[H]
	%\includegraphics[scale=0.7]{figs/ex1.pdf_t}
	%\centerline{\scalebox{0.7}{\input{ex1.pstex_t}}}
	\centerline{\scalebox{0.35}{\input{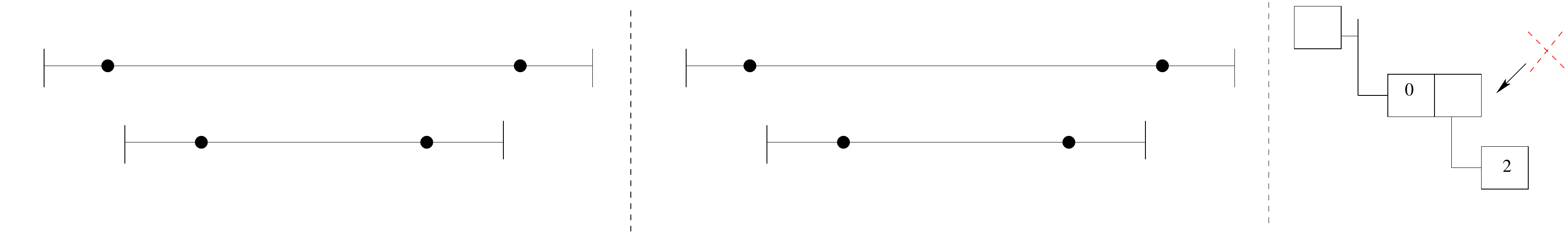_t}}}
	\caption{Need of inserting $0^{th}$ version by \rvmt{} to satisfy opacity}
	\label{fig:mvostm8}
\end{figure}

%%%%%%%%%%%%%%%%%%%%%%%%%%%%%%%%%%%%%%%%%%%%%%%%%%%%%%%%%%%%%%%%%%%%%%%%%%
%-----------------------------------------commonLu\&Del-------------------------------------%------
%----
%%%%%%%%%%%%%%%%%%%%%%%%%%%%%%%%%%%%%%%%%%%%%%%%%%%%%%%%%%%%%%%%%%%%%%%%%%
%\begin{spacing}{2}
\begin{algorithm}[H]
%\algsetup{linenosize=\tiny}
%\scriptsize
	\caption{\tabspace[0.2cm] $commonLu\&Del()$ : This method is invoked by a \rvmt{} (\npluk{} and \npdel{}$)$, if node corresponding to the key is not part of local log. At \Lineref{com3} it identify the $\preds$ and $\currs$ for the node corresponding to the key in underlying DS with the help of \nplsls{}. If node corresponding to the key is in \bn{} of underlying DS then (from \Lineref{com5} to \Lineref{com17}) it finds the version tuple corresponding to the key which is having the largest timestamp less than itself as $closest\_tuple$ at \Lineref{com7}. After that, it will add itself into $closest\_tuple.rvl$ at \Lineref{com9}. If identified version mark field is TRUE then it will set the $L\_op\_status$ and $L\_val$ as FAIL and NULL otherwise, OK and value of identified tuple from \Lineref{com11} to \Lineref{com17} respectively. If node corresponding to the key is in \rn{} of underlying DS then (from \Lineref{scom5} to \Lineref{scom17}) it finds the version tuple corresponding to the key which is having the largest timestamp less than itself as $closest\_tuple$ at \Lineref{scom7}. After that, it will add itself into $closest\_tuple.rvl$ at \Lineref{scom9}. If identified version mark field is TRUE then it will set the $L\_op\_status$ and $L\_val$ as FAIL and NULL otherwise, OK and value of identified tuple from \Lineref{scom11} to \Lineref{scom17} respectively. 
	If node corresponding to the key is not part of underlying DS then (from \Lineref{com19} to \Lineref{com27}) it will create the new node corresponding to the key and add it into \rn{} of underlying DS with the help of \nplslins{} at \Lineref{com20}. After that it creates the $0^{th}$ version (at \Lineref{com22}) and add itself into $0^{th}.rvl$ at \Lineref{com23}. Then, it will set the $L\_op\_status$ and $L\_val$ as FAIL and NULL respectively at \Lineref{com25} and \Lineref{com26}. Finally, it will release the lock which is acquired in \nplsls{} at \Lineref{com3} and update the local log to help the upcoming method of the same transaction on the same key. \textbf{Why do we need to create a $0^{th}$ version by \rvmt{} in \rn{}?} This will be clear by the \figref{mvostm8}, where we have two concurrent transactions $T_1$ and $T_2$. History in the \figref{mvostm8}.a) is not opaque because we can't come up with any serial order. To make it serial (or opaque) first method $lu_2(ht, k_3, NULL)$ of transaction $T_2$ have to create the $0^{th}$ version in \rn{} if its not present in the underlying DS and add itself into $0^{th}.rvl$. So in future if any lower timestamp transaction less than $T_2$ will come then that lower transaction will ABORT (in this case transaction $T_1$ is aborting in (\figref{mvostm8}.b))) because higher timestamp already present in the $rvl$ (\figref{mvostm8}.c)) of the same version. After aborting $T_1$ we will get the serial history.}
%	\scriptsize
\setlength{\multicolsep}{0pt}
%\begin{multicols}{2}
	
	\label{algo:commonlu&del}
%	\setlength{\multicolsep}{0pt}
%	\begin{multicols}{2}
	
	\begin{algorithmic}[1]
	\makeatletter\setcounter{ALG@line}{164}\makeatother
		\Procedure{commonLu\&Del}{$L\_t\_id \downarrow, L\_obj\_id \downarrow, L\_key \downarrow, L\_val \uparrow, L\_op\_status \uparrow$} \label{lin:com1} %\label{lin:commonlu&del}
		%\State $op\_status$ $\gets$ RETRY \label{lin:lookup2};
    \cmnts{If node corresponding to the key is not present in local log then search into underlying DS with the help of list\_lookup} \label{lin:com2}
			\State \lsls{}\label{lin:com3};
			\cmnts{ If node corresponding to the key is part of \bn{}}\label{lin:com4}
			\If{$(\bc.key = L\_key)$}\label{lin:com5}
						\cmnts{From $\bc.vls$, identify the right $version\_tuple$} \label{lin:com6}
                    \State \find($L\_t\_id \downarrow,\bc \downarrow, closest\_tuple \uparrow)$;\label{lin:com7}
                    
                    \cmnts{ Closest\_tuple is $\langle j,val,mark,rvl,vnext \rangle$} \label{lin:com8}

                    \State Adding $L\_t\_id$ into $j$'s $rvl$; \label{lin:com9}
                    %\State Append $L\_t\_id$ into $rvl$; 
 %                   \State \emph{unlock $x$};
%                    \State return $(v)$; \Comment{v is the value returned}
\cmnts{If the $closest\_tuple$ mark field is TRUE then $L\_op\_status$ and $L\_val$ set as FAIL and NULL otherwise set OK and value of $closest\_tuple$ respectively}\label{lin:com10}
                \If{$(closest\_tuple.mark = TRUE)$}\label{lin:com11}
                    
                    \State $L\_op\_status$ $\gets$ FAIL \label{lin:com12};
                    \State $L\_val$ $\gets$ NULL \label{lin:com13};
                \Else\label{lin:com14}
                    \State $L\_op\_status$ $\gets$ OK;\label{lin:com15}
                    \State $L\_val$ $\gets$ $closest\_tuple.val$; \label{lin:com16}
                \EndIf    \label{lin:com17}
        \cmnts{If node corresponding to the the key is part of \rn }
			\ElsIf{$(\rc.key = L\_key)$}\label{lin:scom5}
						\cmnts{From $\rc.vls$, identify the right $version\_tuple$} \label{lin:scom6}
                    \State \find($L\_t\_id \downarrow,\rc \downarrow, closest\_tuple \uparrow)$;\label{lin:scom7}
                    
                    \cmnts{Closest\_tuple is $\langle j,val,mark,rvl,vnext \rangle$} \label{lin:scom8}
\algstore{myalg}
\end{algorithmic}
\end{algorithm}

\begin{algorithm}                     
	\begin{algorithmic} [1]                   % enter the algorithmic environment
		\algrestore{myalg}
                    \State Adding $L\_t\_id$ into $j$'s $rvl$; \label{lin:scom9}
                    %\State Append $L\_t\_id$ into $rvl$; 
 %                   \State \emph{unlock $x$};
%                    \State return $(v)$; \Comment{v is the value returned}
\cmnts{If the $closest\_tuple$ mark field is TRUE then $L\_op\_status$ and $L\_val$ set as FAIL and NULL otherwise set OK and value of $closest\_tuple$ respectively}\label{lin:scom10}
                \If{$(closest\_tuple.mark = TRUE)$}\label{lin:scom11}
                    
                    \State $L\_op\_status$ $\gets$ FAIL \label{lin:scom12};
                    \State $L\_val$ $\gets$ NULL \label{lin:scom13};
                \Else\label{lin:scom14}
                    \State $L\_op\_status$ $\gets$ OK;\label{lin:scom15}
                    \State $L\_val$ $\gets$ $closest\_tuple.val$; \label{lin:scom16}
                \EndIf    \label{lin:scom17}

        \Else\label{lin:com18}
        		\cmnts{If node corresponding to the key is not part of \rn as well as \bn then create the node into \rn with the help of list\_Ins()}\label{lin:com19}
		        \State \lslins{$RL \downarrow$};\label{lin:com20}
		        \cmnts{Insert the $0^{th}$ version tuple} \label{lin:com21}
		        \State insert $v\_tuple \langle 0,NULL,T,NULL,NULL \rangle$ into $node.vl$ in the increasing order;	 \label{lin:com22}
                  \State Adding $L\_t\_id$ into $0^{th}.rvl$;\label{lin:com23}
                    %\State Append $L\_t\_id$ into $rvl$;
                \cmnts{Setting $L\_op\_status$ and $L\_val$ as FAIL and NULL because its reading from the marked version, which is TRUE}\label{lin:com24}    
                \State    $L\_op\_status$ $\gets$ FAIL ;\label{lin:com25}
                    \State $L\_val$ $\gets$ NULL ;\label{lin:com26}
			\EndIf\label{lin:com27}
			\cmnts{Releasing the locks in increasing order}\label{lin:com28}
			  	    \State releasePred\&CurrLocks($ \preds \downarrow$, $ \currs \downarrow$);
\cmnts{Create local log record and append it into increasing order of keys}\label{lin:com31}						\State $L\_rec$ $\gets$ Create new $L\_rec\langle L\_obj\_id, L\_key \rangle$\label{lin:com32}; 
						%\State ll.setPreds&Currs($obj\_id$ $\downarrow$, $key$ $\downarrow$, $preds[]$ $\downarrow$, $currs[]$ $\downarrow$);
						\State \llsval{$L\_val \downarrow$} \label{lin:com33}
						\State \llspc{} ;

			\State return $\langle L\_val, L\_op\_status\rangle$;\label{lin:com34}
				
	\EndProcedure 
	\end{algorithmic}
	
%	\end{multicols}
	
\end{algorithm}

%%%%%%%%%%%%%%%%%%%%%%%%%%%%%%%%%%%%%%%%%%%%%%%%%%%%%%%%%%%%%%%%%%%%%%%%%%
%-----------------------------------------TryCommit----------------------------------%---------
%----
%%%%%%%%%%%%%%%%%%%%%%%%%%%%%%%%%%%%%%%%%%%%%%%%%%%%%%%%%%%%%%%%%%%%%%%%%%

\begin{algorithm}[H]
%\scriptsize
	\caption{\tabspace[0.2cm] STM $tryC()$ : The actual effect of \upmt{s} (\npins{} and \npdel{}) will take place in \nptc{} method. From \Lineref{tryc5} to \Lineref{tryc15} will identify and validate the $\preds$ and $\currs$ of each \upmt{} of same transaction. At \Lineref{tryc9} it will validate if there exist any higher timestamp transaction in the $rvl$ of the $closest\_tuple$ of $\bc$ then returns ABORT at \Lineref{tryc11}. Same as at \Lineref{stryc9} it will validate if there exist any higher timestamp transaction in the $rvl$ of the $closest\_tuple$ of $\rc$ then returns ABORT at \Lineref{stryc11}. Otherwise it will perform the above steps for remaining \upmt{s}. On successful validation of all the \upmt{s}, the actually effect will be taken place from \Lineref{tryc17} to \Lineref{tryc43}. If the \upmt{} is insert and node corresponding to the key is part of \bn{} then it creates the new version tuple and add it in increasing order of version list from \Lineref{tryc22} to \Lineref{tryc24}. If node corresponding to the key is part of \rn{} then it adds the same node in the \bn{} as well with the help of \nplslins{} at \Lineref{stryc24} and creates the new version tuple and add it in increasing order of version list from \Lineref{stryc23} to \Lineref{stryc25}. Otherwise it will create the node and insert it into \bn{} with the help of \nplslins{} and insert the version tuple from \Lineref{tryc25} to \Lineref{tryc29}. If the \upmt{} is delete and node corresponding to the key is part of \bn{} then it creates the new version tuple and set its mark field as TRUE and add it in increasing order of version list from \Lineref{tryc31} to \Lineref{stryc35}. After successful completion of each \upmt{}, it will validate the $\preds$ and $\currs$ of upcoming \upmt{} of the same transaction with the help of \npintv{} at \Lineref{tryc42}. Eventually, it will release all the locks at \Lineref{tryc45} in the same order of lock acquisition.} %: All update methods will takes effect in underlying data-structure atomically if the validation will succeed. 
		%\emph{DESCP}\tabspace: All update methods of the transaction take ordered locks, validate and update underlying \tabspace[2.2cm]  data-structure atomically. \\
		%\emph{IN}\tabspace \tabspace[0.72cm]: \\
		%\emph{OUT}\tabspace \tabspace[0.37cm]: $txstatus$     }
	\label{algo:trycommit}
	\setlength{\multicolsep}{0pt}
%		\begin{multicols}{2}
	\begin{algorithmic}[1]
\makeatletter\setcounter{ALG@line}{213}\makeatother
		\Procedure{STM tryC}{$L\_t\_id \downarrow, L\_tx\_status \uparrow$} \label{lin:tryc1}

%\State    /*get the tx id*/
%		\State $L\_t\_id$ $\gets$ getTS($t_i \downarrow$) \label{lin:tryc2};
		\cmnts{Get the local log list corresponding to each transaction which is in increasing order of keys}\label{lin:tryc2}
		\State $L\_list$ $\gets$ $L\_txlog.getList$($L\_t\_id \downarrow$) \label{lin:tryc3};
%		\State    /*sort the local log in increasing order of keys and copy into ordered list*/
%		\State $L\_ordered$ $\gets$ \llsort{} \label{lin:tryc4};
\cmnts{Identify the new $\preds$ and $\currs$ for all update methods of a transaction and validate it}\label{lin:tryc4}
		\While{$(\textbf{$L\_rec_{i} \gets \textup{next}(L\_list$}))$} \label{lin:tryc5}
		\State ($L\_key, L\_obj\_id$) $\gets$ \llgkeyobj{} \label{lin:tryc6};
		\cmnts{Identify the new $G\_pred$ and $G\_curr$ location with the help of list\_lookup()}\label{lin:tryc7}
		\State \lsls{$COMMIT \downarrow$} \label{lin:tryc8};
			
		\If {$((\bc.key = L\_key) \& (\checkv(L\_t\_id \downarrow,\bc \downarrow) = FALSE))$}\label{lin:tryc9}
%\State $\remid(i)$;
\State Unlock all the variables;\label{lin:tryc10}
\State return $ABORT$;\label{lin:tryc11}
\ElsIf {$((\rc.key = L\_key) \& (\checkv(L\_t\_id \downarrow,\rc \downarrow) = FALSE))$}\label{lin:stryc9}
%\State $\remid(i)$;
\State Unlock all the variables;\label{lin:stryc10}
\State return $ABORT$;\label{lin:stryc11}

\EndIf;\label{lin:tryc12}

					%State    /*if list\_lookup return op\_status as ABORT then method will return ABORT*/
		%If{$(op\_status_{ll}$ $=$ \textup{ABORT}$)$} \label{lin:tryc8}
%		\State release all the locks
		%State \handlea{} \label{lin:tryc9};
		
		%\State return \label{lin:tryc10};
		
		%EndIf \label{lin:tryc11}
			\cmnts{Update the log entry} \label{lin:tryc13}

		\State \llspc{} \label{lin:tryc14};
		\EndWhile \label{lin:tryc15}
	\cmnts{Get each update method one by one and take effect in underlying DS}\label{lin:tryc16}
		\While{$(\textbf{$L\_rec_{i} \gets \textup{next}(L\_list$}))$} \label{lin:tryc17}
		\State ($L\_key, L\_obj\_id$) $\gets$ \llgkeyobj{} \label{lin:tryc18};
		\cmnts{Get the operation name from local log record}\label{lin:tryc19}
		\State $L\_opn$ $\gets$ $(L\_rec)_{i}$.$L\_opn$ \label{lin:tryc20};
		
					\cmnts{Modify the $\preds$ and $\currs$ for the consecutive update methods which are working on overlapping zone in lazy-list}\label{lin:tryc41}
		\State intraTransValdation($L\_rec_{i} \downarrow$, $\preds \uparrow$, $\currs \uparrow$) \label{lin:tryc42};
\cmnts{If operation is insert then after successful completion of it node corresponding to the key should be part of \bn}\label{lin:tryc21}
\algstore{myalg}
\end{algorithmic}
\end{algorithm}

\begin{algorithm}                     
	\begin{algorithmic} [1]                   % enter the algorithmic environment
		\algrestore{myalg}
		\If{$($\textup{INSERT} $=$ \textup{$L\_opn$}$)$} \label{lin:tryc22}
	%\State    /*if node corresponding to the key is part of \bn*/
		\If{$(\bc.key) = L\_key)$} \label{lin:tryc23}
	\State insert $v\_tuple \langle L\_t\_id,val,F,NULL,NULL \rangle$ into $G\_curr.vl$ in the increasing order;	\label{lin:tryc24}
	
	\ElsIf{$(\rc.key) = L\_key)$} \label{lin:stryc23}
	\State \lslins{$RL\_BL \downarrow$} \label{lin:stryc24}
	\State insert $v\_tuple \langle L\_t\_id,val,F,NULL,NULL \rangle$ into $G\_curr.vl$ in the increasing order;	\label{lin:stryc25}
	
		\Else \label{lin:tryc25}
		\cmnts{If node corresponding to the key is not part underlying DS then create the node with the help of list\_Ins() and insert it into \bn}\label{lin:tryc26}
		\State \lslins{$BL \downarrow$} \label{lin:tryc27};
		\State insert $v\_tuple \langle L\_t\_id,val,F,NULL,NULL \rangle$ into $node.vl$ in the increasing order;	\label{lin:tryc28}
%	\State    /*set the op\_status as OK in local log*/
%		\State \llsopst{$OK \downarrow$} \label{lin:tryc30};
	%\State    /*update the max\_ts of insert for node corresponding to the key into underlying DS*/		
	%	\State $\texttt{write}$(node.max\_ts.insert, TS($t_i$)) \label{lin:tryc31};
		\EndIf \label{lin:tryc29}
		\cmnts{If operation is delete then after successful completion of it node corresponding to the key should part of \rn only}\label{lin:tryc30}
		\ElsIf{$($\textup{DELETE} $=$ $L\_opn)$} \label{lin:tryc31}
	\cmnts{If node corresponding to the key is part of \bn}\label{lin:tryc32}

		\If{$(\bc.key) = L\_key)$} \label{lin:tryc33}
	%\State    /*delete the node corresponding to the key from the \bn with the help of lslDel()*/	
			\State insert $v\_tuple \langle L\_t\_id,NULL,T,NULL,NULL \rangle$ into $G\_curr.vl$ in the increasing order;	\label{lin:tryc34}
			\State \lsldel{} \label{lin:stryc35};
%		\State    /*set the op\_status as OK in local log*/
%		\State \llsopst{$OK \downarrow$} \label{lin:tryc36};
		%\State    /*update the max\_ts of delete for node corresponding to the key into underlying DS*/	
		%\State $\texttt{write}$(\textcolor{blue}{$currs[1]$}.max\_ts.delete, TS($t_i$)) \label{lin:tryc37};
\cmnt{
		\Else \label{lin:tryc35}
		\State    /*If node corresponding to the key is not part underlying DS then create the node with the help of list\_Ins() */\label{lin:tryc36}
		\State \lslins{$RL \downarrow$};\label{lin:tryc37}
		
		\State insert $v\_tuple \langle L\_t\_id,NULL,T,NULL,NULL \rangle$ into $node.vl$ in the increasing order;	\label{lin:tryc38}
		%\State \llsopst{$FAIL \downarrow$} \label{lin:tryc39};
		
		%\State $write$(\textcolor{red}{$currs[0]$}.max\_ts.delete, TS($t_i$)) \label{lin:tryc40};
}	
		\EndIf \label{lin:tryc39}
	
		\EndIf \label{lin:tryc40}

		\EndWhile \label{lin:tryc43}
		\cmnts{Release all the locks in increasing order}\label{lin:tryc44}
		\State \rlsol{} \label{lin:tryc45};  
		\cmnts{Set the transaction status as OK}\label{lin:tryc46}
		\State $L\_tx\_status$ $\gets$ OK \label{lin:tryc47};
%		
%		\State \txsetst{} \label{lin:tryc46};
		\State return $\langle L\_tx\_status\rangle$\label{lin:tryc48};
		\EndProcedure \label{lin:tryc49}
	\end{algorithmic}
%		\end{multicols}
\end{algorithm}

	%%%%%%%%%%%%%%%%%%%%%%%%%%%%%%%%%%%%%%%%%%%%%%%%%%%%%%%%%%%%%%%%%%%%%%%%%%
	%------------------- lazyrblist-delete -----------------------
	%%%%%%%%%%%%%%%%%%%%%%%%%%%%%%%%%%%%%%%%%%%%%%%%%%%%%%%%%%%%%%%%%%%%%%%%%%
	
	\begin{algorithm}[H]
%		\scriptsize
		\caption{\tabspace[0.2cm] \dell() : Delete a node from blue link in underlying hash table at location corresponding to $\preds$ \& $\currs$.}
			%\emph{DESCP}\tabspace: Deletes a node from blue list in underlying hash table at location corresponding to \tabspace[2.2cm] preds[] \& currs[].\\
			%\emph{IN}\tabspace \tabspace[0.72cm]: $preds[], currs[]$  \\
			%\emph{OUT}\tabspace \tabspace[0.37cm]:  }
		\label{algo:lsldelete}
		\setlength{\multicolsep}{0pt}
%		\begin{multicols}{2}
			\begin{algorithmic}[1]
				\makeatletter\setcounter{ALG@line}{265}\makeatother
				\Function{list\_del}{$\preds \downarrow, \currs \downarrow$} \label{lin:lsldel1}
				
				%\cmnts{mark the node$ \langle obj\_id, key \rangle$ for deletion}
				%\State $\texttt{write}$($\bc$.marked, True) \label{lin:lsldel2};
				\cmnts{Update the blue links}
				\State $\bp$.\bn $\gets$ $\bc$.\bn \label{lin:lsldel3};
				\State return $\langle void \rangle$;
				\EndFunction \label{lin:lsldel4}
			\end{algorithmic}
			
%		\end{multicols}
	\end{algorithm}

%%%%%%%%%%%%%%%%%%%%%%%%%%%%%%%%%%%%%%%%%%%%%%%%%%%%%%%%%%%%%%%%%%%%%%%%%%
%----------------- List lookup() ---------------------------
%%%%%%%%%%%%%%%%%%%%%%%%%%%%%%%%%%%%%%%%%%%%%%%%%%%%%%%%%%%%%%%%%%%%%%%%%%
%\begin{spacing}{0.8}
\begin{algorithm}[H]
%\scriptsize
%lslSearch($obj\_id \downarrow, key \downarrow, preds[] \uparrow, currs[] \uparrow, value \uparrow, val\_type \downarrow$)
	\caption{list\_lookup() : This method is called by \rvmt{} and \upmt{}. It finds the location of the node corresponding to the key in underlying DS from \Lineref{lslsearch5} to \Lineref{slslsearch15}. First it identifies the node in \bn{} (from \Lineref{lslsearch5} to \Lineref{lslsearch15}) then in \rn{} (from \Lineref{slslsearch9} to \Lineref{slslsearch15}). After finding the appropriate location of the node corresponding to the key in the form of $\preds$ and $\currs$, it will acquire the locks on it at \Lineref{lslsearch17} and validate it at \Lineref{lslsearch20}.}
		%\emph{DESCP}\tabspace: Finds location of the corresponding $\left\langle obj\_id, key  \right\rangle$ in underlying hash table. color of preds[] \tabspace[2.2cm] \& currs[] depicts the red or blue node.\\
		%\emph{IN}\tabspace \tabspace[0.72cm]: $\left\langle obj\_id, key  \right\rangle$, $val\_type$  \\
		%\emph{OUT}\tabspace \tabspace[0.37cm]: $preds[], currs[], value$     }
	\label{algo:lslsearch}
	\setlength{\multicolsep}{0pt}
%		\begin{multicols}{2}
	\begin{algorithmic}[1]
	\makeatletter\setcounter{ALG@line}{270}\makeatother	
		\Procedure{list\_lookup}{$L\_obj\_id \downarrow, L\_key \downarrow, G\_preds[] \uparrow, G\_currs[] \uparrow$} \label{lin:lslsearch1}

	\cmnts{By default setting the $L\_op\_status$ as RETRY}\label{lin:lslsearch2}
	    \State STATUS $L\_op\_status$ $\gets$ RETRY; \label{lin:lslsearch3}
	 \cmnts{Identify the \preds and \currs for node corresponding to the key if $L\_op\_status$ is RETRY}\label{lin:lslsearch4}   
		\While{($L\_op\_status$ = \textup{RETRY})} \label{lin:lslsearch5}
		\cmnts{Get the head of the bucket in chaining hash-table with the help of $L\_obj\_id$ and $L\_key$}\label{lin:lslsearch6}
		\State $G\_head$ $\gets$ \glslhead \label{lin:lslsearch7};
		\cmnts{Initialize $\bp$ to head}\label{lin:lslsearch8}
		\State $\bp$ $\gets$ $G\_head$ \label{lin:lslsearch9}; 
		\cmnts{Initialize $\bc$ to $\bp.\bn$}\label{lin:lslsearch10}
		\State $\bc$ $\gets$ $\bp.\bn$ \label{lin:lslsearch11};
		\cmnts{Searching node corresponding to the key into \bn}
		\While{$((\bc.key) < L\_key)$} \label{lin:lslsearch12}
		\State $\bp$ $\gets$ $\bc$ \label{lin:lslsearch13};
				
		\State $\bc$ $\gets$ $\bc.\bn$ \label{lin:lslsearch14};
			
		\EndWhile \label{lin:lslsearch15}
		
		\cmnts{Initialize $\rp$ to head}\label{lin:slslsearch8}
		\State $\rp$ $\gets$ $\bp$ \label{lin:slslsearch9}; 
		\cmnts{Initialize $\rc$ to $\rp.\rn$}\label{lin:slslsearch10}
		\State $\rc$ $\gets$ $\rp.\rn$ \label{lin:slslsearch11};
		\cmnts{Searching node corresponding to the key into \rn}
		\While{$((\rc.key) < L\_key)$} \label{lin:slslsearch12}
		\State $\rp$ $\gets$ $\rc$ \label{lin:slslsearch13};
				
		\State $\rc$ $\gets$ $\rc.\rn$ \label{lin:slslsearch14};
			
		\EndWhile \label{lin:slslsearch15}

		%\State    /*get the value*/
		%\State $L\_val$ $\gets$ $G\_curr.value$
		\cmnts{Acquire the locks on increasing order of keys}\label{lin:lslsearch16}
	    \State acquirePred\&CurrLocks($ \preds \downarrow$, $ \currs \downarrow$); \label{lin:lslsearch17}
\cmnts{Method validation to identify the changes done by concurrent conflicting method}\label{lin:lslsearch19}
		\State methodValidation($\preds$ $\downarrow$, $\currs$ $\downarrow$, $L\_op\_status \uparrow$)\label{lin:lslsearch20};	
		%\State $op\_status$ $\gets$ \validation \label{lin:lslsearch21};
\cmnts{If $L\_op\_status$ is RETRY then release all the locks}		\label{lin:lslsearch21}
		\If{(($L\_op\_status$ = \textup{RETRY}))} \label{lin:lslsearch22}
	    \State releasePred\&CurrLocks($ \preds \downarrow$, $ \currs \downarrow$);
	    \EndIf \label{lin:lslsearch25}
			
		\EndWhile \label{lin:lslsearch26}
		
		\State return $\langle G\_preds[], G\_currs[]\rangle$ \label{lin:lslsearch27};
	
	\EndProcedure \label{lin:lslsearch28}
	\end{algorithmic}
%		\end{multicols}
\end{algorithm}
%\end{spacing}

%%%%%%%%%%%%%%%%%%%%%%%%%%%%%%%%%%%%%%%%%%%%%%%%%%%%%%%%%%%%%%%%%%%%%%%%%%
	%------------------- acquirepreds&currs -----------------------
	%%%%%%%%%%%%%%%%%%%%%%%%%%%%%%%%%%%%%%%%%%%%%%%%%%%%%%%%%%%%%%%%%%%%%%%%%%
	
	\begin{algorithm}[H]
%		\scriptsize
		\caption{\tabspace[0.2cm] acquirePred\&CurrLocks() : acquire all locks taken during \nplsls{}.}
			%\emph{DESCP}\tabspace: releases all locks taken during tryCommit\\
			%\emph{IN}\tabspace \tabspace[0.72cm]: $ordered\_ll\_list$\\
			%\emph{OUT}\tabspace \tabspace[0.37cm]:   }
		\label{algo:acquirepreds&currs}
		\setlength{\multicolsep}{0pt}
%		\begin{multicols}{2}
			\begin{algorithmic}[1]
				\makeatletter\setcounter{ALG@line}{306}\makeatother
				\Function{acquirePred\&CurrLocks}{$ \preds \downarrow$, $ \currs \downarrow$}
				\State $\bp$.$\texttt{lock()}$;
				\State $\rp$.$\texttt{lock()}$;
				\State $\rc$.$\texttt{lock()}$;
				\State $\bc$.$\texttt{lock()}$;
				\State return $\langle void \rangle$;        
				\EndFunction
			\end{algorithmic}
			
%		\end{multicols}
	\end{algorithm}
	
	%%%%%%%%%%%%%%%%%%%%%%%%%%%%%%%%%%%%%%%%%%%%%%%%%%%%%%%%%%%%%%%%%%%%%%%%%%
	%------------------- releasepreds&currs -----------------------
	%%%%%%%%%%%%%%%%%%%%%%%%%%%%%%%%%%%%%%%%%%%%%%%%%%%%%%%%%%%%%%%%%%%%%%%%%%
	
	\begin{algorithm}[H]
%		\scriptsize
		\caption{\tabspace[0.2cm] releasePred\&CurrLocks() : Release all locks taken during \nplsls{}.}
			%\emph{DESCP}\tabspace: releases all locks taken during tryCommit\\
			%\emph{IN}\tabspace \tabspace[0.72cm]: $ordered\_ll\_list$\\
			%\emph{OUT}\tabspace \tabspace[0.37cm]:    }
		\label{algo:releasepreds&currs}
		\setlength{\multicolsep}{0pt}
%		\begin{multicols}{2}
			\begin{algorithmic}[1]
				\makeatletter\setcounter{ALG@line}{313}\makeatother
				\Function{releasePred\&CurrLocks}{$ \preds \downarrow$, $ \currs \downarrow$}
				\State $\bp$.$\texttt{unlock()}$\label{lin:rpandc};//$\Phi_{lp}$ 
				\State $\rp$.$\texttt{unlock()}$;
				\State $\rc$.$\texttt{unlock()}$;
				\State $\bc$.$\texttt{unlock()}$;
				\State return $\langle void \rangle$;        
				\EndFunction
			\end{algorithmic}
			
%		\end{multicols}
	\end{algorithm}
	%\end{multicols}

%%%%%%%%%%%%%%%%%%%%%%%%%%%%%%%%%%%%%%%%%%%%%%%%%%%%%%%%%%%%%%%%%%%%%%%%%%
%----------------------------list-insert --------------------------------
%%%%%%%%%%%%%%%%%%%%%%%%%%%%%%%%%%%%%%%%%%%%%%%%%%%%%%%%%%%%%%%%%%%%%%%%%%

%\vspace{-.5cm}
\begin{algorithm}[H]
%\scriptsize
	\caption{\tabspace[0.2cm] list\_Ins(): This method is called by the \rvmt{} and \upmt{}. Color of preds \& currs depicts the red or blue node.}
		%\emph{DESCP}\tabspace: Inserts or overwrites a node in underlying hash table at location corresponding to \tabspace[2.2cm] preds[] \& currs[]. Color of preds[] \& currs[] depicts the red or blue node.\\
		%\emph{IN}\tabspace \tabspace[0.72cm]: $preds[], currs[], list\_type$  \\
		%\emph{OUT}\tabspace \tabspace[0.37cm]:     }
	\label{algo:lslins}
		\setlength{\multicolsep}{0pt}
%	\begin{multicols}{2}
	\begin{algorithmic}[1]
\makeatletter\setcounter{ALG@line}{320}\makeatother
		\Procedure{list\_Ins}{$\preds \downarrow, \currs \downarrow, list\_type \downarrow, node \uparrow$} \label{lin:lslins1}

\cmnts{Inserting the node from redlist to bluelist}
				\If{$((list\_type)$ $=$ $($\textcolor{black}{$RL$}$\_$\textcolor{black}{$BL$}$))$} \label{lin:lslins2}
				
				%\State $\texttt{write}$($\rc$.marked, false) \label{lin:lslins3}; 
				\State $\rc$.\bn $\gets$ $\bc$ \label{lin:lslins4};
				\State $\bp$.\bn $\gets$ $\rc$ \label{lin:lslins5};
				\cmnts{Inserting the node into redlist only}
				\ElsIf{$((list\_type$) $=$ \textcolor{black}{$RL$}$)$} \label{lin:lslins6}
				\State node = Create new node()
				\label{lin:lslins7};
						\cmnts{After created the node acquiring the lock on it}\label{lin:lslins4}
		\State node.lock();\label{lin:lslins5}
		
				\State node.\rn $\gets$ $\rc$ \label{lin:lslins9};
				\State $\rp$.\rn $\gets$ node \label{lin:lslins10};
				
				\Else \label{lin:lslins11}
				\cmnts{Inserting the node into red as well as blue list}
				\State node = new node() \label{lin:lslins12}; %\Comment{default locked state for thread validation}
				\cmnts{After creating the node acquiring the lock on it}
				\State node.lock();
				\State node.\rn $\gets$ $\rc$ \label{lin:lslins13};
				\State node.\bn $\gets$ $\bc$ \label{lin:lslins14};
				
				\State $\rp$.\rn $\gets$ node \label{lin:lslins15};
				
				\State $\bp$.\bn $\gets$ node \label{lin:lslins16};
				\EndIf \label{lin:lslins17}
				\State return $\langle node \rangle$;
			
\cmnt{

        \State    /*Inserting the new node corresponding to the key into Underlying DS/\label{lin:lslins2}
		\State node = Create new node() \label{lin:lslins3}; %\Comment{default locked state for thread validation}
		\State    /*After created the node acquiring the lock on it*/\label{lin:lslins4}
		\State node.lock();\label{lin:lslins5}
		\State /*Adding the new node at appropriate location (in increasing order of the keys) with the help of $G\_pred$ and $G\_curr$*/\label{lin:lslins6}
		\State node.knext $\gets$ $G\_curr$ \label{lin:lslins7};
		\State $G\_pred$.knext $\gets$ node \label{lin:lslins8};
		\State return $\langle node \rangle$;\label{lin:lslins9}}
		\EndProcedure \label{lin:lslins10}
	\end{algorithmic}
%	\end{multicols}
\end{algorithm}

%%%%%%%%%%%%%%%%%%%%%%%%%%%%%%%%%% find_lts(largest timestamp but less then itself)   %%%%%%%%%%%%%%%%%%%%%%%%%%%%%%%%%%%%%% 
%%%%%%%%%%%%%%%%%%%%%%%%%%%%%%%%%%%%%%%%%%%%%%%%%%%%%%%%%%%%%%%%%%%%%%

\begin{algorithm}  
%\scriptsize
\label{alg:select}
\caption{$\find()$: This method is called by \rvmt{} and \upmt{} to identify a  $closest\_tuple$ $\langle j,val,mark,rvl,vnext \rangle$ created by the transaction $T_j$ with the largest timestamp smaller than $L\_t\_id$ from \Lineref{findlts5} to \Lineref{findlts10}.}
		\setlength{\multicolsep}{0pt}
%	\begin{multicols}{2}
\begin{algorithmic}[1]
\makeatletter\setcounter{ALG@line}{344}\makeatother
\Procedure{find\_lts}{$L\_t\_id \downarrow, \currs \downarrow, closest\_tuple \uparrow$} \label{lin:findlts1}
\cmnts{Initialize $closest\_tuple$} \label{lin:findlts2}
\State $closest\_tuple = \langle 0,NULL,F,NULL,NULL \rangle$;\label{lin:findlts3}
\cmnts{For all the version of $\currs$ identify the largest timestamp less than $L\_t\_id$} \label{lin:findlts4}
\ForAll {$\langle p,val,mark,rvl,vnext \rangle \in \currs.vl$} \label{lin:findlts5}
\If {$(p < L\_t\_id)$ and $(closest\_tuple.ts < p)$} \label{lin:findlts6}
\cmnts{Assign closest tuple as $\langle p,val,mark,rvl,vnext \rangle$, if any version tuple is having largest timestamp less then L\_t\_id exist} \label{lin:findlts7}
\State $closest\_tuple = \langle p,val,mark,rvl,vnext \rangle$; \label{lin:findlts8}
\EndIf\label{lin:findlts9}
\EndFor\label{lin:findlts10}
\State return $\langle closest\_tuple\rangle$; \label{lin:findlts11}
\EndProcedure\label{lin:findlts12}
\end{algorithmic}
%\end{multicols}
\end{algorithm}

\begin{figure} [tbph]
	\captionsetup{justification=centering}
	%\centering
	%\includegraphics[scale=.5]{figs/fig3.png}
	%\includegraphics[scale=.5]{figs/fig3.pdf_t}
	\centerline{\scalebox{0.5}{\input{figs/disc2.pdf_t}}}
	\caption{Validation by $check\_version$}
	\label{fig:mvostmm12}
\end{figure}

\cmnt{
\begin{figure} [tbph]
	%\centering
	%\includegraphics[scale=.5]{figs/fig3.png}
	%\includegraphics[scale=.5]{figs/fig3.pdf_t}
	\centerline{\scalebox{0.5}{\input{figs/mvostm12.pdf_t}}}
	\caption{Validation by $check\_version$}
	\label{fig:mvostm12}
\end{figure}
}

%%%%%%%%%%%%%%%%%%%%%%%%%%%%%%%%%%%%%%%%%%%%%%%%%%%%%%%%%%%%%%%%%%%%%%%%%%
%---------------------------------Check Version -------------------
%%%%%%%%%%%%%%%%%%%%%%%%%%%%%%%%%%%%%%%%%%%%%%%%%%%%%%%%%%%%%%%%%%%%%%%%%%

\begin{algorithm}  
%\scriptsize
\label{alg:checkVersion} 
\caption{$\checkv()$: This method is called by the \nptc{}. First it will find the $closest\_tuple$ $\langle j,val,mark,rvl,vnext \rangle$ created by the transaction $T_j$ with the largest timestamp smaller than $L\_t\_id$ at \Lineref{checkVersion3}. Then, it checks the version list to identify is there any higher timestamp already present in the $rvl$ of $closest\_tuple$ from \Lineref{checkVersion5} to \Lineref{checkVersion11}. If it presents then it will returns FALSE at \Lineref{checkVersion9} otherwise, TRUE at \Lineref{checkVersion13}. It will be more clear by the \figref{mvostmm12}.a) where second method $ins_1(ht, k_3, ABORT)$ of transaction $T_1$ will ABORT because higher transaction $T_2$ timestamp is already present in the $rvl$ of $closest\_tuple$ as $0^{th}$ version in \figref{mvostmm12}.b).}
		\setlength{\multicolsep}{0pt}
%	\begin{multicols}{2}
\begin{algorithmic}[1]
\makeatletter\setcounter{ALG@line}{356}\makeatother
\Procedure{check\_versions}{$L\_t\_id \downarrow,\currs \downarrow$} \label{lin:checkVersion1}
\cmnts{From $\currs.vls$, identify the correct $version\_tuple$ means identfy the tuple which is having higher time-stamp but less then it} \label{lin:checkVersion2}
\State $\find(L\_t\_id \downarrow, \currs \downarrow, closest\_tuple \uparrow)$;\label{lin:checkVersion3}	
\cmnts{Got the closest\_tuple as $\langle j,val,mark,rvl,vnext \rangle$}\label{lin:checkVersion4}
%\ForAll {$v\_tuples \langle j,v,rl \rangle $ in $x \cdot vl$}
\ForAll {$T_k$ in $rvl$ of $closest\_tuple.j$}\label{lin:checkVersion5}
\cmnts{$T_k$ has already read the version created by $T_j$}\label{lin:checkVersion6}
\If {$(L\_t\_id < k)$}   \label{lin:checkVersion7}
\cmnts{If in $rvl$ of $closest\_tuple.j$, any higher time-stamp exists then $L\_t\_id$ then return $FALSE$}\label{lin:checkVersion8}
\State return $\langle FALSE\rangle$;  \label{lin:checkVersion9}
\EndIf\label{lin:checkVersion10}
\EndFor\label{lin:checkVersion11}
%\EndFor;
\cmnts{If in $rvl$ of $closest\_tuple.j$, there is no higher time-stamp exists then $L\_t\_id$ then return $TRUE$}\label{lin:checkVersion12}
\State return $\langle TRUE\rangle$;\label{lin:checkVersion13}
\EndProcedure\label{lin:checkVersion14}
\end{algorithmic}
%\end{multicols}
\end{algorithm}

\begin{figure}[tbph]
	\captionsetup{justification=centering}
	%\includegraphics[scale=0.7]{figs/ex1.pdf_t}
	%\centerline{\scalebox{0.7}{\input{ex1.pstex_t}}}
	\centerline{\scalebox{0.47}{\input{figs/disc1.pdf_t}}}
	\caption{Method validation}
	\label{fig:mvostm9}
\end{figure}
\cmnt{
\begin{figure}[tbph]
	%\includegraphics[scale=0.7]{figs/ex1.pdf_t}
	%\centerline{\scalebox{0.7}{\input{ex1.pstex_t}}}
	\centerline{\scalebox{0.5}{\input{figs/mvostm9.pdf_t}}}
	\caption{Method validation}
	\label{fig:mvostm9}
\end{figure}
}

%%%%%%%%%%%%%%%%%%%%%%%%%%%%%%%%%%%%%%%%%%%%%%%%%%%%%%%%%%%%%%%%%%%%%%%%%%
%---------------------------------methodValidation -------------------
%%%%%%%%%%%%%%%%%%%%%%%%%%%%%%%%%%%%%%%%%%%%%%%%%%%%%%%%%%%%%%%%%%%%%%%%%%

\begin{algorithm}[H]
%\scriptsize
	\caption{\tabspace[0.2cm] methodValidation() : This method is called by the \rvmt{} and \upmt{}. It will identify the conflicts among the concurrent methods of different transactions at \Lineref{iv2}. It will be more clear by the \figref{mvostm9}, where two concurrent conflicting methods of different transactions are working on the same key $k_3$. Initially, at stage $s_1$ in \figref{mvostm9}.c) both the conflicting method optimistically (without acquiring locks) identify the same $\preds$ and $\currs$ for key $k_3$ from underlying DS in \figref{mvostm9}.a). At stage $s_2$ in \figref{mvostm9}.c), method $ins_1(k_3)$ of transaction $T_1$ acquired the lock on $\preds$ and $\currs$ and inserted the node into it (\figref{mvostm9}.b)). After successful insertion by $T_1$, $\preds$ and $\currs$ will change for $lu_2(k_3)$ at stage $s_3$ in \figref{mvostm9}.c). It will caught via method validation function at \Lineref{iv2} when $(\preds.\bn \neq G\_curr)$ for $lu_2(k_3)$. After that again it will find the new $\preds$ and $\currs$ for $lu_2(k_3)$ with the help of \nplsls{} method and eventually it will commit.}
		%\emph{DESCP}\tabspace: method validation for conflicting concurrent operations.\\
		%\emph{IN}\tabspace \tabspace[0.72cm]: $preds[]$, $currs[]$\\
		%\emph{OUT}\tabspace \tabspace[0.37cm]: $RETRY$, $OK$     }
	\label{algo:interferenceValidation}
			\setlength{\multicolsep}{0pt}
%	\begin{multicols}{2}
	\begin{algorithmic}[1]
\makeatletter\setcounter{ALG@line}{370}\makeatother	
		\Procedure{methodValidation}{$\preds \downarrow, \currs \downarrow, L\_op\_status \uparrow$} \label{lin:iv1}
	    \cmnts{Validating $\preds$ and $\currs$}
		\If{$((\bp.marked) || (\bc.marked) ||(\bp.\bn) \neq \bc || (\rp.\rn) \neq {\rc})$}  \label{lin:iv2}
		\cmnts{If validation fail then $L\_op\_status$ set as $RETRY$}
        \State $L\_op\_status$ $\gets$ RETRY

		\Else \label{lin:iv4}
		
	    \State $L\_op\_status$ $\gets$ OK

		\EndIf \label{lin:iv6}
				\State return $\langle L\_op\_status\rangle$ \label{lin:iv5};
	
		\EndProcedure \label{lin:iv7}
	\end{algorithmic}
%\end{multicols}
\end{algorithm}

%%%%%%%%%%%%%%%%%%%%%%%%%%%%%%%%%%%%%%%%%%%%%%%%%%%%%%%%%%%%%%%%%%%%%%%%%%
%------------------- find -----------------------
%%%%%%%%%%%%%%%%%%%%%%%%%%%%%%%%%%%%%%%%%%%%%%%%%%%%%%%%%%%%%%%%%%%%%%%%%%

\begin{algorithm}[H]
%\scriptsize
	\caption{\tabspace[0.2cm] $L\_find()$ : This method is called by \npins{}, \rvmt{} and \upmt{}. It will check whether any method corresponding to $\left\langle L\_obj\_id, L\_key  \right\rangle$ is present in local log from \Lineref{findll4} to \Lineref{findll8}.}
		%\emph{DESCP}\tabspace: Checks weather any operation corresponding to $\left\langle obj\_id, key  \right\rangle$ is present in ll\_list.\\
		%\emph{IN}\tabspace \tabspace[0.72cm]: $\left\langle obj\_id, key  \right\rangle$  \\
		%\emph{OUT}\tabspace \tabspace[0.37cm]: $true, false$     }
	\label{algo:findInLL}
				\setlength{\multicolsep}{0pt}
%	\begin{multicols}{2}
	\begin{algorithmic}[1]
\makeatletter\setcounter{ALG@line}{380}\makeatother		
		\Procedure{L\_find}{$L\_t\_id \downarrow, L\_obj\_id \downarrow, L\_key \downarrow, L\_rec \uparrow$} \label{lin:findll1}
		
		%\State    /*get the tx id*/
		%\State $L\_t\_id$ $\gets$ getTS($t_i$) \label{lin:findll2};
		
		\State $L\_list$ $\gets$ \txgllist{} \label{lin:findll3}; 
	\cmnts{Every method first identify the node corresponding to the key into local log}
		\While{$(L\_rec_{i} \gets next(L\_list))$} \label{lin:findll4}
		\cmnts{Taking one by one $L\_obj\_id$ and $L\_key$ form $L\_rec$}
		\If{$((L\_rec_{i}.first = L\_obj\_id) \& (L\_rec_{i}.sec = L\_key))$} \label{lin:findll5}

		\State return $\langle TRUE, L\_rec \rangle$ \label{lin:findll6};
		\EndIf \label{lin:findll7}
		\EndWhile \label{lin:findll8}
		\State return $\langle FALSE, NULL \rangle$ \label{lin:findll9};
		\EndProcedure \label{lin:findll10}
	\end{algorithmic}
%	\end{multicols}
\end{algorithm}

%%%%%%%%%%%%%%%%%%%%%%%%%%%%%%%%%%%%%%%%%%%%%%%%%%%%%%%%%%%%%%%%%%%%%%%%%%
%------------------- releaseorderedlocks -----------------------
%%%%%%%%%%%%%%%%%%%%%%%%%%%%%%%%%%%%%%%%%%%%%%%%%%%%%%%%%%%%%%%%%%%%%%%%%%

\begin{algorithm}[H]
%\scriptsize
	\caption{\tabspace[0.2cm] releaseOrderedLocks(): Release all locks in increasing order of their keys from \Lineref{rlock2} to \Lineref{rlock7}.}
		%\emph{DESCP}\tabspace: releases all locks taken during tryCommit\\
		%\emph{IN}\tabspace \tabspace[0.72cm]: $ordered\_ll\_list$\\
		%\emph{OUT}\tabspace \tabspace[0.37cm]:    }
	\label{algo:releaseorderedlocks}
			\setlength{\multicolsep}{0pt}
%	\begin{multicols}{2}
	\begin{algorithmic}[1]
	\makeatletter\setcounter{ALG@line}{391}\makeatother
		\Procedure{releaseOrderedLocks}{$L\_list \downarrow$} \label{lin:rlock1}
	\State    /*Releasing all the locks in increasing order of the keys */
		\While{($\textbf{$L\_rec_{i} \gets next(L\_list$})$)} \label{lin:rlock2}
		
		\State $L\_rec_{i}$.$\preds$.$\texttt{unlock()}$ \label{lin:rlock3};//$\Phi_{lp}$ 
		\State $L\_rec_{i}$.$\currs$.$\texttt{unlock()}$ \label{lin:rlock4};
		
		%\State $ll\_entry_i$.\textcolor{red}{$currs[0]$}.$\texttt{unlock()}$ \label{lin:rlock5};
		%\State $ll\_entry_i$.\textcolor{blue}{$currs[1]$}.$\texttt{unlock()}$ \label{lin:rlock6}; 
		\EndWhile \label{lin:rlock7}
		\State return $\langle void \rangle$;
		\EndProcedure \label{lin:rlock8}
	\end{algorithmic}
			
%	\end{multicols}
\end{algorithm}
%\end{multicols}

\begin{figure} [H]
	\captionsetup{justification=centering}
	%\centering
	%\includegraphics[scale=.5]{figs/fig3.png}
	%\includegraphics[scale=.5]{figs/fig3.pdf_t}
	\centerline{\scalebox{0.45}{\input{figs/disc3.pdf_t}}}
	\caption{Intra transaction validation}
	\label{fig:mvostmm11}
\end{figure}

\cmnt{
\begin{figure} [H]
	%\centering
	%\includegraphics[scale=.5]{figs/fig3.png}
	%\includegraphics[scale=.5]{figs/fig3.pdf_t}
	\centerline{\scalebox{0.5}{\input{figs/mvostm11.pdf_t}}}
	\caption{Intra transaction validation}
	\label{fig:mvostm11}
\end{figure}
}
%%%%%%%%%%%%%%%%%%%%%%%%%%%%%%%%%%%%%%%%%%%%%%%%%%%%%%%%%%%%%%%%%%%%%%%%%%
%%%%%%%%%%%%%%%%%%%%%%%%%%%%%%%%%intraTransValidation%%%%%%%%%%%%%%%%%%%%%%%%
%%%%%%%%%%%%%%%%%%%%%%%%%%%%%%%%%%%%%%%%%%%%%%%%%%%%%%%%%%%%%%%%%%%%%%%%%%

\begin{algorithm}[H]
%\scriptsize
	\caption{\tabspace[0.2cm] intraTransValidation() : This method is called by \nptc{} only. If two $\upmt{s}$ within same transaction have at least one shared node among its recorded $\preds$ and $\currs$, in this case the previous $\upmt{}$ effect might be overwritten if the next $\upmt{}$ $\preds$ and $\currs$ are not updated according to the updates done by the previous $\upmt{}$. Thus to solve this we have intraTransValidation() after each $\upmt{}$ in $\nptc$. This will be more clear by the \figref{mvostmm11}, where two \upmt{s} of same transaction $T_1$ are $ins_{11}(k_3)$ and $ins_{12}(k_5)$ (\figref{mvostmm11}.c)). At stage $s_1$ in \figref{mvostmm11}.c) both the \upmt{s} identify the same $\preds$ and $\currs$ from underlying DS (\figref{mvostmm11}.a)). After the successful insertion done by first \upmt{} at stage $s_2$ in \figref{mvostmm11}.c), key $k_3$ is part of underlying DS (\figref{mvostmm11}.b)). At stage $s_3$ in \figref{mvostmm11}.c) if we will not update the $\preds$ and $\currs$ for $ins_{12}(k_5)$ then it will overwrite the previous method updates. To resolve this issue we are doing the intraTransValidation() after each \upmt{} to assign the appropriate $\preds$ and $\currs$ for the upcoming \upmt{} of same transaction.}
		%\emph{DESCP}\tabspace: program order validation\\
		%\emph{IN}\tabspace \tabspace[0.72cm]: $ll\_entry_i$  \\
		%\emph{OUT}\tabspace \tabspace[0.37cm]: $preds[], currs[]$}
	\label{algo:povalidation}
	\setlength{\multicolsep}{0pt}
%		\begin{multicols}{2}
	\begin{algorithmic}[1]
\makeatletter\setcounter{ALG@line}{399}\makeatother		
		\Procedure{intraTransValidation}{$L\_rec_{i} \downarrow, \preds \uparrow, \currs \uparrow$} \label{lin:threadv1}

	\State $L\_rec.getAllPreds\&Currs(L\_rec$ $\downarrow$, $\preds$ $\uparrow$, $\currs$ $\uparrow$) \label{lin:threadv2};	
				%\State \llgallpc{} \label{lin:threadv2};
				\cmnts{if $\bp$ is marked or $\bc$ is not reachable from $\bp.\bn$ then modify the next consecutive \upmt{} $\bp$ based on previous \upmt{}}
				\If{$(($$\bp$.\textup{marked}$) ||$ $($ $\bp$.\textup{\bn}$)$ != $\bc$$))$} \label{lin:threadv3}
				\cmnts{find $k$ < $i$; such that $le_k$ contains previous update method on same bucket }
				\If{$(($$L\_rec_{k}$.\textup{opn}$)$ $=$ INSERT$)$} \label{lin:threadv4}
				
				\State $L\_rec_{i}$.$\bp$.$\texttt{unlock()}$ \label{lin:thredv4-5};
				\State $\bp$ $\gets$ ($L\_rec_{k}.\bp.\bn)$ \label{lin:threadv5};
				\State $L\_rec_{i}$.$\bp$.$\texttt{lock()}$ \label{lin:thredv5-5};		
				\Else \label{lin:threadv6}
				\cmnts{\upmt{} method $\bp$ will be previous method $\bp$}
				\State $L\_rec_{i}$.$\bp$.$\texttt{unlock()}$ \label{lin:thredv6-5};
				\State $\bp$ $\gets$ ($L\_rec_{k}$.$\bp$) \label{lin:threadv7};
				
				\State $L\_rec_{i}$.$\bp$.$\texttt{lock()}$ \label{lin:thredv7-5};
				\EndIf \label{lin:threadv8}
				
				\EndIf \label{lin:threadv9}
				\cmnts{if $\rc$ \& $\rp$ is modified by prev operation then update them also}
				\If{$($$\rp$.\textup{\rn} != $\rc$$)$} \label{lin:threadv10}
				\State $L\_rec_{i}$.$\rp$.$\texttt{unlock()}$
				\State $\rp$ $\gets$ ($L\_rec_{k}$.$\rp.\rn$)  \label{lin:threadv11}; 
				\State $L\_rec_{i}$.$\rp$.$\texttt{lock()}$
				\EndIf \label{lin:threadv12}
				\State return $\langle \preds, \currs\rangle$;
	\cmnt{

	        \State $L\_rec.getAllPreds\&Currs(L\_rec_{i}$ $\downarrow$, $G\_pred$ $\uparrow$, $G\_curr$ $\uparrow$) \label{lin:threadv2};	
		%\State \llgallpc{} \label{lin:threadv2};
	%\State    /*if $\bp$ is marked or $\bc$ is not reachable from $\bp.\bn$ then modify the next consecutive \upmt{} $\bp$ based on previous \upmt{}*/
		\If{$(G\_pred.knext \neq G\_curr)$} \label{lin:threadv3}
		\State    /*Find $k$ $>$ $i$; such that $L\_rec_{k}$ contains next update method on same bucket */
		\If{$((($$L\_rec_{k}$.\textup{L\_opn}$)$ $=$ INSERT$)$ $||$ $(($$L\_rec_{k}$.\textup{L\_opn}$)$ $=$ DELETE$))$} \label{lin:threadv4}
		
	\State $L\_rec_{k}$.$G\_pred$.$\texttt{unlock()}$ \label{lin:thredv4-5};
		\State $G\_pred$ $\gets$ ($L\_rec_{i}.G\_pred.knext)$ \label{lin:threadv5};
	\State $L\_rec_{k}$.$G\_pred$.$\texttt{lock()}$ \label{lin:thredv5-5};		
		%\Else \label{lin:threadv6}
	%\State    /*\upmt{} method $\bp$ will be previous method $\bp$*/
	%\State $L\_rec_{i}$.$G\_pred$.$\texttt{unlock()}$ \label{lin:thredv6-5};
	%	\State $G\_pred$ $\gets$ ($L\_rec_{i}$.$G\_pred$) \label{lin:threadv7};
		
	%		\State $L\_rec_{i}$.$G\_pred$.$\texttt{lock()}$ \label{lin:thredv7-5};
		\EndIf \label{lin:threadv8}
		
		\EndIf \label{lin:threadv9}
	\State return $\langle G\_pred, G\_curr\rangle$;
	}
	\EndProcedure	
	\end{algorithmic}
%		\end{multicols}
\end{algorithm}

%%%%%%%%%%%%%%%%%%%%%%%%%%%%%%%%%%%%%%

%%%%%%%%%%%%%%%%% Garbage Collection 

%%%%%%%%%%%%%%%%%%%%%%%%%%%%%%%%%%%%%%%%%%%%%%%%%

\section{Garbage Collection}
\label{sec:gc}
We have performed garbage collection method to delete the unwanted version of keys i.e. if the particular version corresponding to any key is not going to use in future then we can delete that version. For the better understanding of it please consider \figref{mvostm4}. Here, we are having 3 versions of key $k_1$ with timestamp 0, 15 and 25 respectively. Each version is having 5 fields described in \secref{mvdesign}. Now, consider the version 15, there exist the next version 25 and all the transactions between 15 to 25 has been terminated (either commit or abort) then we are deleting version 15. Similarly, we can delete other versions corresponding to each key as well and optimize the memory. 

\begin{figure} [H]
	%\centering
	%\includegraphics[scale=.5]{figs/fig3.png}
	%\includegraphics[scale=.5]{figs/fig3.pdf_t}
	\centerline{\scalebox{0.35}{\input{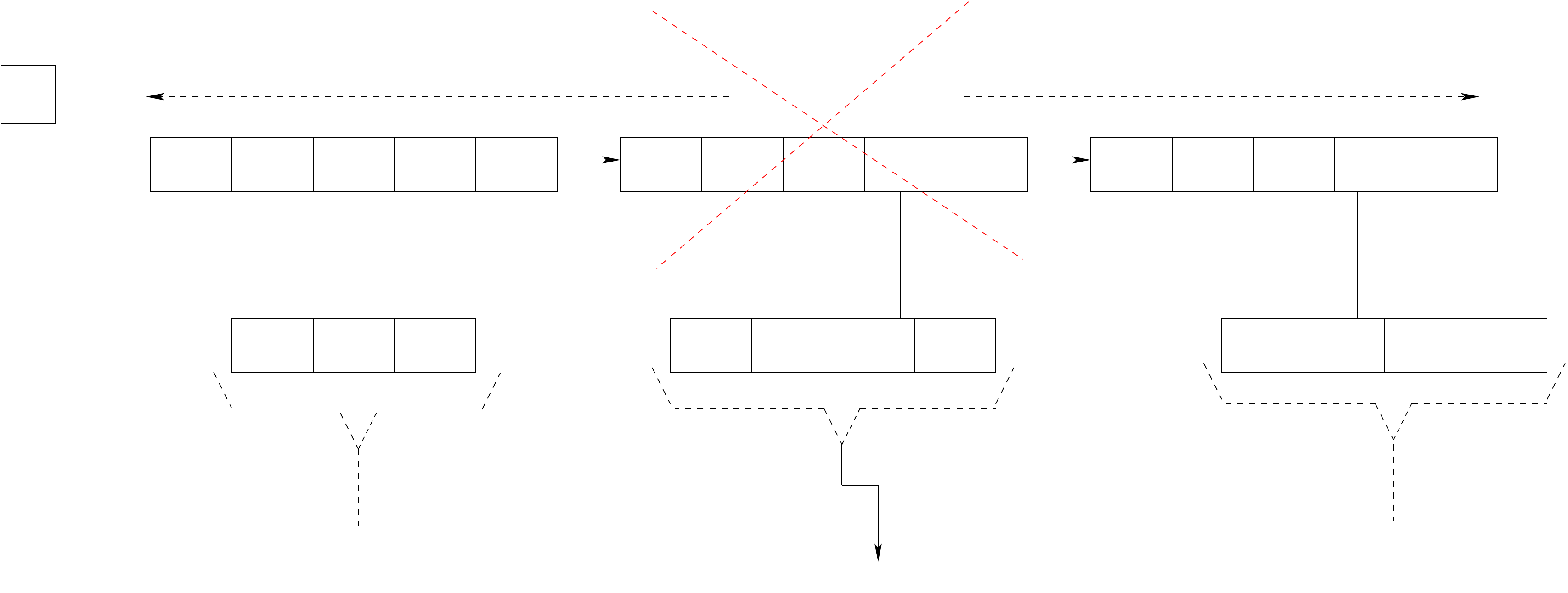_t}}}
	\caption{Data Structures for Garbage Collection}
	\label{fig:mvostm4}
\end{figure}

\begin{algorithm}[H] 
%\scriptsize
\label{alg:begin1} 
\caption{STM $begin()$: Invoked by a thread to being a new transaction $T_i$}
	\setlength{\multicolsep}{0pt}
	%	\begin{multicols}{2}
\begin{algorithmic}[1]
%\State lock $\tcntr$;
\makeatletter\setcounter{ALG@line}{423}\makeatother
\Procedure{STM begin}{$L\_t\_id \uparrow$}
\State /*Creating a local log for each transaction*/
\State \txll $\gets$ create new \txllf;  
%\State /*Acquiring lock on live set of transaction*/
%\State $\livetx.lock()$;  
\State /*Get $t\_id$ from $\cnt$*/
\State \txll.$L\_t\_id$ $\gets$ $\cnt$;  
\State /*Incremented $\cnt$*/
\State $\cnt$ $\gets$ \gi;  
%\State add $t\_id$ to $\livetx$; 
%\State /*Release lock on live set of transaction*/
%\State $\livetx.unlock()$;
\State $\livel.lock()$; \label{lin:beginit}
\State add $L\_t\_id$ to $\livel$; 
\State $\livel.unlock()$;
\State return $\langle L\_t\_id \rangle$; 
\EndProcedure
%\State unlock $\tcntr$;
%\State return $i$; \label{lin:begfin}
%\EndFunction{$\begtrans${}}
\end{algorithmic}
%\end{multicols}
\end{algorithm}

\begin{algorithm}[H]  
%\scriptsize
\label{alg:instuple}
\caption{$\instup()$: Inserts the version tuple for $(L\_t\_id,v)$ created by the transaction $T_i$ into the version list of $L\_key$}
	\setlength{\multicolsep}{0pt}
%		\begin{multicols}{2}
\begin{algorithmic}[1]
\makeatletter\setcounter{ALG@line}{435}\makeatother
\Procedure{ins\_tuple}{$L\_key \downarrow, L\_t\_id\downarrow, v\downarrow, NULL\downarrow, NULL\downarrow$}
\State /*Initialize $cur\_tuple$*/
\State $cur\_tuple = \langle L\_t\_id, val, F, NULL, NULL \rangle$; 
\State /* Finds the tuple with the largest timestamp smaller than i */
\State $\find(L\_t\_id \downarrow, L\_key \downarrow, prev\_tuple \uparrow)$; \label{lin:prevtup}
\State /*$prev\_tuple$ is $\langle ts, val, mark, rvl, nts\rangle$*/
\State $cur\_tuple.nts = prev\_tuple.nts$;
\State $prev\_tuple.nts = L\_t\_id$;
\State insert $cur\_tuple$ into $L\_key.vl$ in the increasing order of timestamps;  
\State /* $\lvert L\_key.vl \lvert$ denotes number of versions of $L\_key$ created and threshold is a predefined value. */
\If {($\lvert L\_key.vl \lvert > threshold$)} 
\State /*If number of created versions for $L\_key$ crossed the threshold value then calling the Garbage Collection*/ 
\State $gc(L\_key)$;
\EndIf 
\State return $\langle void \rangle$
\EndProcedure
\end{algorithmic}
%\end{multicols}
\end{algorithm}

\begin{algorithm}[H]
%\scriptsize
\caption{STM $\gc()$: Unused version of a \tobj{} $L\_key$ will deleted from $L\_key.vl$}
	\setlength{\multicolsep}{0pt}
%		\begin{multicols}{2}
\begin{algorithmic}[1]
\makeatletter\setcounter{ALG@line}{451}\makeatother
\Procedure{gc}{$L\_key \downarrow$}
\State $\livel.lock()$;
\State /*\tobj{} $L\_key$ is already locked*/
\ForAll {$(cur\_tuple \in L\_key.vl)$} 
\If {$(cur\_tuple.nts == NULL)$}
\State /* If $nts$ is NULL, check the next tuple in the version list */
\State continue; 
\EndIf
\State $j = cur\_tuple.ts + 1$;
\State /*Check for all ids $j$ in the range $j < nts$*/
\While {$(j < cur\_tuple.nts)$} 
\If {$(j \in \livel)$}
\State /* If any tuples with timestamp $j$, such that  $i< j < nts$ have not terminated (means exist in $liveList$) then $cur\_tuple$ can't be deleted*/

\State break;
\EndIf
\EndWhile
\State /* If all the tuples with timestamp $j$, such that  $i< j < nts$ have terminated then $cur\_tuple$ can be deleted*/
\State delete $cur\_tuple$;
\EndFor 
\State /* $\livel$ is not unlocked when this function returns */
%\State unlock $\livel$;
\State return $\langle void \rangle$
\EndProcedure
\end{algorithmic}
%\end{multicols}
\end{algorithm}

\end{document}